\definecolor{blue}{rgb}{0.1,0.2,0.5}
\definecolor{brown}{rgb}{0.6,0.6,0.2}
\newtheorem{lemma}{Lemma}[section]
\newtheorem{observation}{Observation}[section]
\newtheorem{claim}{Claim}[section]
\newtheorem{theorem}[lemma]{Theorem}
\newtheorem{corollary}[lemma]{Corollary}
\newtheorem{conjecture}[lemma]{Conjecture}
\newcommand{\Oh}{\mathcal{O}}
\newcommand{\argmax}{\text{argmax}}
\newcommand{\argmin}{\text{argmin}}
\newcommand{\distprofile}[3]{\mathrm{prof}_{#1,#2}[#3]}
\newcommand{\hatprofile}[3]{\widehat{\mathrm{prof}}_{#1,#2}[#3]}
\newcommand{\diststarprofile}[3]{\mathrm{prof}^\star_{#1,#2}[#3]}
\newcommand{\dist}{\mathrm{dist}}
\newcommand{\ecc}{\mathrm{ecc}}
\renewcommand{\leq}{\leqslant}
\renewcommand{\geq}{\geqslant}
\renewcommand{\le}{\leqslant}
\renewcommand{\ge}{\geqslant}
\renewcommand{\setminus}{-}
\newcommand{\cutgraph}{\textrm{\CutLeft}}
\begin{document}

\author{
  Kacper Kluk%
  \thanks{Institute of Informatics, University of Warsaw, Poland. \texttt{k.kluk@uw.edu.pl}}
  \and
  Marcin Pilipczuk%
  \thanks{Institute of Informatics, University of Warsaw, Poland. \texttt{m.pilipczuk@uw.edu.pl}}
  \and
  Micha\l{} Pilipczuk%
  \thanks{Institute of Informatics, University of Warsaw, Poland. \texttt{michal.pilipczuk@mimuw.edu.pl}}
  \and
  Giannos Stamoulis%
  \thanks{IRIF, Université Paris Cité, CNRS, Paris, France. \texttt{giannos.stamoulis@irif.fr}}
}

\title{Faster diameter computation in graphs of bounded Euler genus%
  \thanks{
  K.K. and Ma.P. are supported by Polish National Science Centre SONATA BIS-12 grant number 2022/46/E/ST6/00143.
  This work is a part of project BOBR (Mi.P., G.S.) that has received funding from the European Research Council (ERC) 
under the European Union's Horizon 2020 research and innovation programme (grant agreement No.~948057). In particular, a majority of work on this manuscript was done while G.S. was affiliated with University of Warsaw.}}

\date{}

\maketitle

\begin{abstract}
We show that for any fixed integer $k \geq 0$, there exists an algorithm
that computes the diameter and the eccentricies of all vertices of an input
unweighted, undirected $n$-vertex graph of Euler genus at most $k$ in time
 \[ \Oh_k(n^{2-\frac{1}{25}}). \]
Furthermore, for the more general class of graphs
that can be constructed
by clique-sums from graphs that are of Euler genus at most $k$ after deletion
of at most $k$ vertices, we show an algorithm for the same task that achieves the running time bound
 \[ \Oh_k(n^{2-\frac{1}{356}} \log^{6k} n). \]
Up to today, the only known subquadratic algorithms for computing the diameter in those graph classes
are that of [Ducoffe, Habib, Viennot; SICOMP 2022], [Le, Wulff-Nilsen; SODA 2024],
and [Duraj, Konieczny, Pot\k{e}pa; ESA 2024]. These algorithms work
in the more general setting of $K_h$-minor-free graphs,
but the running time bound is $\Oh_h(n^{2-c_h})$ for some constant $c_h > 0$ depending
on $h$. 
That is, our savings in the exponent, as compared to the naive quadratic algorithm, 
are independent of the parameter $k$.

The main technical ingredient of our work is an improved bound on the number of distance
profiles, as defined in [Le, Wulff-Nilsen; SODA 2024], in graphs of bounded Euler genus.
\end{abstract}

\begin{textblock}{20}(11.8,4.35)
  \includegraphics[width=60px]{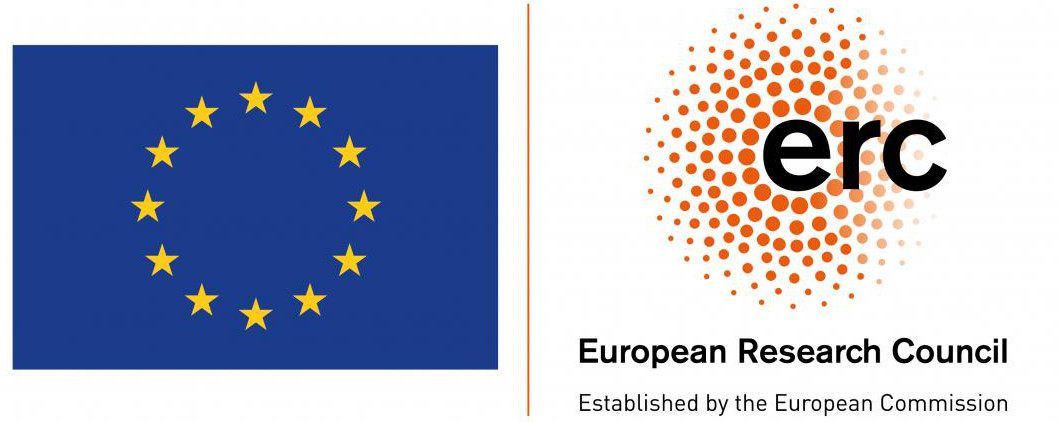}
\end{textblock}

\section{Introduction}

Computing the diameter of an input (undirected, unweighted) graph $G$ is a classic computational 
problem that can be trivially solved in $\Oh(nm)$ time\footnote{We follow the convention that the vertex and the edge count of the input graph are denoted by $n$ and $m$, respectively.}.
In 2013, Roditty and Vassilevska-Williams showed that this running
time bound cannot be significantly improved in general:
any algorithm distinguishing graphs of diameter $2$ and $3$
running in time $\Oh(m^{2-\varepsilon})$, for any fixed $\varepsilon > 0$,
would break the Strong Exponential Time Hypothesis~\cite{RodittyW13}. 
This motivates the search for restrictions on $G$ that would make the  problem of computing the diameter more tractable.

As shown by Cabello and Knauer~\cite{CabelloK09}, sophisticated orthogonal range query data structures allow near-linear diameter
computation in graphs of constant treewidth.
A breakthrough result by Cabello~\cite{Cabello19} showed that 
the diameter of an $n$-vertex planar graph can be computed in $\widetilde{\Oh}(n^{11/6})$ time;
this complexity has been later improved by Gawrychowski, Kaplan, Mozes, Sharir, and Weimann to
$\widetilde{\Oh}(n^{5/3})$~\cite{GawrychowskiKMS21}\footnote{The $\widetilde{\Oh}(\cdot)$ notation hides factors polylogarithmic in $n$, and the $\Oh_k(\cdot)$ notation hides factors depending on a parameter $k$.}.
A subsequent line of research~\cite{DucoffeHV22,DurajKP23,LeW24}
generalized this result to $K_h$-minor-free graphs:
for every integer $h$, there exists a constant $c_h > 0$ such that the diameter
problem in $n$-vertex $K_h$-minor-free graphs can be solved in time $\Oh_h(n^{2-c_h})$. 
In the works~\cite{DurajKP23,LeW24}, it holds that $c_h = \Omega\left(\frac{1}{h}\right)$; so the savings tend to zero as the size of the excluded clique minor increases.

However, known lower bounds, including the one of~\cite{RodittyW13}, does not exclude the possibility
that $c_h$ can be made a universal constant. That is, no known lower bound refutes
the following conjecture:
\begin{conjecture}\label{conj:taunt}
There exists a constant $c > 0$ such that,
for every integer $h > 1$, the diameter problem in (unweighted, undirected) $n$-vertex
$K_h$-minor-free graphs can be solved in time $\Oh_h(n^{2-c})$. 
\end{conjecture}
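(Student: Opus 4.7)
The plan is to extend the distance-profile machinery developed in this paper for bounded Euler genus to the full $K_h$-minor-free setting via the Robertson--Seymour structure theorem. That theorem guarantees that every $K_h$-minor-free graph can be assembled by clique-sums from $h$-almost-embeddable pieces: each piece embeds on a surface of Euler genus $\Oh_h(1)$ after deleting $\Oh_h(1)$ apex vertices, with $\Oh_h(1)$ vortices of pathwidth $\Oh_h(1)$ attached to faces. The goal would be to show that the number of distinct distance profiles in such a decomposition, with respect to a suitably chosen pattern set, remains $n^{1+o(1)}$ with a savings in the exponent that is \emph{independent of $h$}, and then plug this bound into the Voronoi-diagram / balanced-separator framework of~\cite{LeW24} to obtain the claimed $\Oh_h(n^{2-c})$ running time.

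I would proceed in three layers. The outermost layer is the decomposition tree into torsos: here I would reuse and lift the clique-sum handling that the paper develops for its second main result, which already demonstrates that the savings in the exponent survive arbitrary clique-sums at the cost of only polylogarithmic overhead. The middle layer treats the apex vertices of a single piece by enumerating the $2^{\Oh_h(1)}$ subsets of apices that a given shortest path may traverse, reducing to the apex-free case and paying only an $\Oh_h(1)$ multiplicative overhead in the profile count. The innermost layer, for the apex-free surface plus vortices, would invoke the polynomial distance-profile bound proved in this paper for the surface part of Euler genus $\Oh_h(1)$, and then glue on the vortices.

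The main obstacle is the vortices. A vortex, although of bounded pathwidth $w=\Oh_h(1)$, may be attached to the surface along a face boundary of length up to $n$, so distance interactions between the surface and the vortex can \emph{a priori} generate $n^{\Omega(1)}$ new distance profiles per vortex in an uncontrolled way. The novel ingredient one needs is a bound showing that distance profiles compatible with a pathwidth-$w$ vortex do not substantially inflate the global profile count. I would attempt this by traversing the path decomposition of the vortex, at each bag maintaining $\Oh(w)$-dimensional boundary distance vectors, and showing that these vectors propagate along the decomposition in a way analogous to how the paper handles short noncontractible curves on a surface. The delicate point---and almost certainly the crux of resolving \Cref{conj:taunt}---is to show that this vortex-profile count does not degrade the universal exponent savings when composed across all vortices and clique-sums simultaneously; without a per-vortex count that is bounded by a power of $n$ with savings independent of $h$, one only recovers the existing $\Oh_h(n^{2-c_h})$ bound of~\cite{DurajKP23,LeW24}.
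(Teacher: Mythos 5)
This statement is a \emph{conjecture}, not a theorem: the paper does not prove \Cref{conj:taunt}, and indeed explicitly discusses why it remains open. The paper verifies the conjecture only for graphs of bounded Euler genus (\Cref{thm:main-genus}), bounded genus plus apices (\Cref{thm:main-apices}), and clique-sums thereof (\Cref{thm:main-decomp}), and then states in the ``Discussion'' paragraph that the methods ``fall short of verifying \Cref{conj:taunt} in full generality due to vortices.'' Your proposal is therefore not comparable to a proof in the paper, because no such proof exists.

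That said, your plan accurately reconstructs the landscape the paper describes. The three layers you outline correspond almost exactly to what the paper actually carries out: the outer clique-sum layer is \Cref{thm:main-decomp} (via \Cref{l:star,l:star2,l:single_adhesion}), the middle apex layer is handled by the orthogonal-range-query data structure (\Cref{t:orth_query}, \Cref{l:max_min_query}) inside \Cref{l:main_ecc}, and the inner surface layer is \Cref{thm:distprofiles}. You also correctly identify the vortex problem as the crux, and even anticipate the paper's own remark that excluding a shallow-vortex-grid minor (Thilikos--Wiederrecht) sidesteps it. Where your proposal does not constitute a proof is precisely where the paper also stops: you gesture at ``traversing the path decomposition, maintaining $\Oh(w)$-dimensional boundary distance vectors,'' but you give no argument that the number of resulting profiles is polynomial with exponent independent of $h$, and you candidly acknowledge that without such a bound one only recovers the known $\Oh_h(n^{2-c_h})$. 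The missing ingredient is exactly the open problem; the paper even formalizes the simplest unresolved instance (the class $\mathcal{G}_k$ of a planar graph plus a single vortex) and poses it as a question. So your proposal is a faithful and well-informed reconnaissance of the territory, but it contains the same gap the paper leaves open, and you should not present it as a proof of the conjecture.
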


\paragraph{Graphs of bounded Euler genus.}
Our main result is the verification of \Cref{conj:taunt} for graphs
of bounded Euler genus. Furthermore, our algorithm computes also the eccentricies
of all the vertices of the input graph $G$. Recall here that
    the eccentricity of a vertex $v \in V(G)$ is defined as
    $\ecc(v) \coloneqq \max_{u \in V(G)} \dist_G(u,v)$, where $\dist_G(\cdot,\cdot)$ is the distance metric in $G$.
\begin{theorem}\label{thm:main-genus}
For every integers $k \geq 1$,
there exists an algorithm that, given an (unweighted, undirected) $n$-vertex graph $G$
of Euler genus at most $k$, runs in time $\Oh_k(n^{2-\frac{1}{25}})$
and computes the diameter of $G$ and the eccentricity of every vertex of $G$.
\end{theorem}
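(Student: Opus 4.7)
The plan is to follow the distance-profile paradigm of Le and Wulff-Nilsen, but to replace their generic $K_h$-minor-free profile count with a much sharper count tailored to an Euler-genus embedding. The divide-and-conquer skeleton is classical: by the Gilbert--Hutchinson--Tarjan separator theorem, in a graph of Euler genus at most $k$ one finds a balanced vertex separator $S$ of size $\Oh_k(\sqrt{n})$. We recurse on the two sides of $S$ to obtain the intra-part eccentricities; the remaining task is to compute, for every $v \in V(G)$, the maximum distance from $v$ to a vertex on the opposite side of $S$, which is realized by some shortest path through $S$.

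For this crossing computation we follow the LW framework: we build a ``screen'' $R$ consisting of $\Oh_k(1)$ shortest-path trees rooted near $S$, and assign to every $v$ a distance profile $\mathrm{prof}(v)$ recording the distances from $v$ to the vertices of $R$ up to a global additive shift. Two vertices of equal profile play interchangeable roles in the cross-eccentricity computation, so if the number of distinct profiles is bounded by $T(n)$ then, after partitioning by profile and exploiting standard shortest-path-tree manipulations, the crossing eccentricities can be produced in time roughly $\widetilde{\Oh}_k\bigl(n \cdot T(n)^{\beta}\bigr)$ for some fixed $\beta$. Feeding a profile bound of the shape $T(n) = \Oh_k(n^{1-\varepsilon})$ with $\varepsilon$ independent of $k$ into the balanced recursion yields a running time of the form $\Oh_k(n^{2-\delta})$ with $\delta$ also independent of $k$; the concrete number $1/25$ is the outcome of balancing the separator size against $\varepsilon$.

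The hard part, and the technical heart of the paper, is the improved profile-count bound itself: in an $n$-vertex graph of Euler genus at most $k$ equipped with a screen made of $\Oh_k(1)$ shortest-path trees rooted on a separator of size $\Oh_k(\sqrt{n})$, the number of distinct distance profiles must be $\Oh_k(n^{1-\varepsilon})$ for a \emph{universal} $\varepsilon > 0$. My approach would be to first cut the surface along $\Oh_k(1)$ shortest noncontractible cycles together with the trees of the screen, turning the embedding into a planar one whose outer face has $\Oh_k(\sqrt{n})$ boundary vertices. In this planar situation, the monotonicity of how distances to a shortest-path tree can evolve along a face -- a phenomenon familiar from planar distance-oracle and multiple-source shortest-path constructions -- should force the set of realizable profiles to be parametrized by a combinatorial object of size $\Oh_k(n^{1-\varepsilon})$, with the complications from the $\Oh_k(1)$ cuts absorbed into the $\Oh_k(\cdot)$ constant rather than the exponent. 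The main obstacle is to make this surface-to-plane reduction quantitatively clean enough that the exponent savings remain a universal constant and do not degrade with $k$.
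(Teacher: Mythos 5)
Your proposal diverges from the paper's actual route in two important respects, and the second one is a genuine gap that would sink the argument.

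First, the decomposition is different. You propose a balanced-separator divide-and-conquer (a Cabello-style recursion on a Lipton--Tarjan/Gilbert--Hutchinson--Tarjan separator of size $\Theta_k(\sqrt{n})$), whereas the paper follows Le and Wulff-Nilsen and uses an $r$-\emph{division}: a single flat collection of $\Oh(n/r)$ \emph{connected} regions $R$, each of size at most $r = n^{2/25}$, with $\sum_R |\partial R| = \Oh(n/\sqrt{r}) = \Oh(n^{24/25})$. There is no recursion on halves; one iterates over regions, precomputes distances $\dist_{G}(s,\cdot)$ for all boundary vertices $s$, buckets the vertices of $V(G)\setminus R$ by their distance profile on $\partial R$, and batch-processes each bucket using orthogonal-range data structures (this is Lemma~\ref{l:main_ecc}). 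The balanced-separator recursion you sketch is not what is used, and, importantly, it is not compatible with the profile bound that the paper actually proves.

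Second, and this is the real problem: your key claim -- that the number of distance profiles to a screen built on a $\Theta_k(\sqrt{n})$-sized separator is $\Oh_k(n^{1-\varepsilon})$ for some universal $\varepsilon>0$ -- is far stronger than what the paper establishes, and there is no reason to believe it. The paper's Theorem~\ref{thm:distprofiles} gives a bound of $\Oh_k(|R|^{12})$ for a \emph{connected} set $R$. Plugging in $|R| = \sqrt{n}$ yields $\Oh_k(n^{6})$ -- completely useless. In fact no bound of the form $|R|^{2-2\varepsilon}$ can hold: already the set system used by Le and Wulff-Nilsen (Theorem~\ref{thm:LW}) has VC-dimension at least $2$ in planar graphs, and the Sauer--Shelah route cannot get below $|R|^2$. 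Your proposal would need a profile count that is \emph{sublinear in $|R|^2$}, but the entire point of the paper's argument is the opposite: accept a high-degree polynomial in $|R|$, and make $|R|$ small. The exponent $1/25$ comes precisely from balancing $|\partial R|$-total-size $n^{1-\rho/2}$ against profile count $|R|^{12} = n^{12\rho}$, giving $\rho = 2/25$ and both quantities equal to $n^{24/25}$. Nothing in the recursion you describe produces a connected set of polynomial-in-$n^{\varepsilon}$ size on which to apply the profile bound.

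Your surface-to-plane reduction by cutting along $\Oh_k(1)$ shortest cycles/trees is in the right spirit -- the paper does cut along a cut-graph built from $\Oh(k)$ shortest paths towards $R$ (Lemma~\ref{lem:genuscut}), introduces ``milestone'' and ``anchor'' vertices along those paths, and reduces to a VC-dimension bound in the resulting plane graph. But that reduction crucially exploits that $R$ is connected and proves a polynomial-in-$|R|$ bound, not a polynomial-in-$n$ bound with a sublinear exponent. If you want to repair your plan, you should swap the balanced-separator recursion for an $r$-division with $r = n^{\delta}$ and then use the degree-$12$ profile bound on each small region; as written, the quantitative heart of your argument does not hold.
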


We remark that in~\cite[Section~9]{Cabello19}, Cabello briefly speculated that his approach could be also generalized to graphs embeddable on surfaces of bounded genus. However, as noted in \cite{Cabello19}, this would require significant effort, as the technique works closely on the embedding and in surfaces of higher genus, additional topological hurdles arise. In contrast, in our proof of \Cref{thm:main-genus} the main ingredient is an improved combinatorial bound on the number of so-called \emph{distance profiles}~\cite{LeW24} in graphs of bounded Euler genus. This proof uses topology only very lightly, while the rest of the argument is rather standard and topology-free. All in all, we obtain a robust methodology of approaching the problem, which, as we will see, can be also used to attack \Cref{conj:taunt} to some extent.

To explain our bound on distance profiles, we need to recall several relevant definitions.

Let $G$ be a graph, $R \subseteq V(G)$ be a subset of vertices, and $s_R \in R$ be a vertex in $R$.
The \emph{distance profile} of a vertex $u \in V(G)$ to $R$ (relative to $s_R$)
  is the function $\distprofile{R}{s_R}{u} \colon R \to \mathbb{Z}$ defined as follows:
  \[ \distprofile{R}{s_R}{u}(s) = \dist_G(u,s) - \dist_G(u,s_R)\qquad \textrm{for all }s\in R. \]
Note that provided $R$ is connected\footnote{A subset of vertices $R$ of a graph $G$ is {\em{connected}} if the induced subgraph $G[R]$ is connected.}, we have
$\distprofile{R}{s_R}{u}(s) \in \{-|R|,-|R|+1,\ldots,|R|-1,|R|\}$. In~\cite{LeW24},
Le and Wulff-Nilsen proved that if $R$ is connected and $G$ is $K_h$-minor-free, then
the set system $$\left\{ \left\{(s,i) \in R \times \{-|R|,\ldots,|R|\}~|~i \leq \distprofile{R}{s_R}{u}(s)\right\}~\colon~u \in V(G)\right\}$$ has VC dimension at most $h-1$. Hence, by applying the Sauer-Shelah Lemma we obtain that
\begin{theorem}[\cite{LeW24}]\label{thm:distprofiles-LeW24}
For every integer $h\geq 1$, $K_h$-minor-free graph $G$, connected set $R \subseteq V(G)$, and  $s_R \in R$, there are at most $\Oh_h(|R|^{2h-2})$
different distance profiles to $R$ relative to $s_R$. 
\end{theorem}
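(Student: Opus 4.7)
The plan is to deduce the theorem from a VC-dimension bound via the Sauer--Shelah Lemma. Concretely, I would establish that the set system $\mathcal{F} = \{F_u : u \in V(G)\}$, where
\[ F_u \;=\; \{(s,i) \in R \times \{-|R|,\ldots,|R|\} \,:\, i \leq \distprofile{R}{s_R}{u}(s)\}, \]
has VC dimension at most $h-1$. Since the ground set has cardinality $|R|\cdot(2|R|+1) = \Oh(|R|^2)$, Sauer--Shelah immediately yields $|\mathcal{F}| = \Oh_h(|R|^{2(h-1)}) = \Oh_h(|R|^{2h-2})$; as distinct profiles correspond to distinct sets $F_u$, this proves the statement.

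To prove the VC-dimension bound, I would argue by contradiction. Assume that $\mathcal{F}$ shatters a set of $h$ ground elements $(s_1,i_1),\ldots,(s_h,i_h)$. Then for every $S\subseteq [h]$ there exists a witness $u_S \in V(G)$ with $\distprofile{R}{s_R}{u_S}(s_j) \geq i_j$ exactly when $j\in S$. From this data, I would extract a $K_h$ minor in $G$, contradicting the hypothesis. The rough plan is to use $s_1,\ldots,s_h$ as seeds for $h$ branch sets, inflating each $s_j$ inside $R$ via the connectivity of $G[R]$, which serves as a backbone; the required pairwise adjacencies between branch sets would then be supplied by shortest paths originating at the witnesses $u_S$ and heading towards the $s_j$'s, whose geometry is dictated by the $2^h$ threshold constraints.

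The main obstacle is to translate the shattering constraints into genuinely disjoint branch sets with all $\binom{h}{2}$ pairwise adjacencies. The conditions $\distprofile{R}{s_R}{u_S}(s_j) \geq i_j \iff j\in S$ compare distances in $G$ and therefore pin down shortest-path geometry only up to rerouting; shortest paths from different $u_S$'s may overlap heavily or diverge in unexpected ways. The technical heart of the argument is to show that, leveraging the $2^h$ witnesses together with the freedom to choose shortest paths and to absorb overlapping pieces into branch sets (together with the triangle inequality to control when this is safe), one can always complete the minor construction. This is the delicate step I expect to consume the bulk of the work; the remainder, namely invoking Sauer--Shelah once VC-dimension is bounded, is essentially mechanical.
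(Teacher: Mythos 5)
Your Sauer--Shelah step matches the paper's treatment of this theorem exactly: the paper forms the same set system on the ground set $R\times\{-|R|,\ldots,|R|\}$ of size $\Oh(|R|^2)$, asserts VC dimension at most $h-1$, and concludes $\Oh_h(|R|^{2h-2})$. The difference is that the paper does \emph{not} prove the VC-dimension bound; it imports it wholesale from~\cite{LeW24} (the version for arbitrary $R$ is \Cref{thm:LW}). You propose instead to reprove it by extracting a $K_h$ minor from the shattering witnesses. That is the right general strategy and is indeed what~\cite{LeW24} do, but, as you yourself acknowledge, the step that constitutes essentially the entire content of that theorem --- turning the $2^h$ shattering constraints and the resulting shortest-path geometry into $h$ pairwise-disjoint, pairwise-adjacent branch sets --- is left unexecuted. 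So relative to the paper your proposal is faithful (same reduction, same black box), but as a self-contained proof there is a genuine gap at precisely the point where the argument is hard. One small preliminary worth stating explicitly in the minor-extraction sketch: in a shattered set $\{(s_1,i_1),\ldots,(s_h,i_h)\}$ the first coordinates $s_j$ must be pairwise distinct, since the membership condition $i\le \distprofile{R}{s_R}{u}(s)$ is monotone in $i$ for fixed $s$, so two pairs sharing the same $s$ can never be shattered. This is what licenses using $s_1,\ldots,s_h$ as the seeds of the $h$ branch sets.
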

The VC dimension argument applied above inevitably leads to a bound with the exponent depending on $h$.
We show that for  graphs of bounded Euler genus, the bound of \Cref{thm:distprofiles-LeW24}
can be improved to a polynomial of degree independent of the parameter.
\begin{theorem}\label{thm:distprofiles}
For every integer $k\geq 1$, (unweighted, undirected) graph $G$ of Euler genus at most~$k$, connected set $R \subseteq V(G)$, and $s_R \in R$,
the number of distance profiles to $R$ relative to $s_R$
is at most~$\Oh_k(|R|^{12})$.
\end{theorem}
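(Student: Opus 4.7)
The plan is to exploit the topological structure of bounded-genus graphs to describe each distance profile by $\Oh(1)$ pieces of combinatorial data, each of which takes polynomially many values in $|R|$, so that the resulting bound on the number of profiles has an exponent independent of $k$.

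I would first fix a 2-cell embedding of $G$ in a surface $\Sigma$ of Euler genus at most $k$, and build a BFS tree $T$ of $G$ rooted at $s_R$. Then $\dist_G(u,s_R)$ is precisely the depth of $u$ in $T$, so the profile is completely determined by the function $s\mapsto \dist_G(u,s)$ for $s\in R$. Next, via the standard tree-cotree / cut-system construction for surface embeddings, I would enlarge $T$ by $\Oh_k(1)$ non-tree edges into a structure whose removal cuts $\Sigma$ open to a disk; let $G^\circ$ denote the resulting planar graph embedded in that disk. Each vertex $v\in V(G)$ has $\Oh_k(1)$ copies in $G^\circ$, and every shortest path in $G$ lifts to a path in $G^\circ$ of the same length between appropriate copies of its endpoints.

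The counting then proceeds in two stages. First, within $G^\circ$---a planar graph equipped with $\Oh_k(|R|)$ ``target'' copies of vertices of $R$---I would prove a planar version of the desired bound: the number of distance profiles to the target copies, suitably defined, is $\Oh(|R|^{c_0})$ for a small absolute constant $c_0$. This should be a more direct combinatorial argument than the VC-dimension bound of \Cref{thm:distprofiles-LeW24}, exploiting the Monge-like rigidity of shortest paths in planar graphs, so that each profile is characterized by a constant number of ``witness'' boundary vertices. Second, to pass from $G^\circ$ back to $G$, one must record, for each vertex $u$ of $G$, which copy of $u$ and which copy of each target realizes the minimum in $G^\circ$ --- equivalently, the homotopy class in $\Sigma$ of each shortest $u$-to-$s$ path.

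The main obstacle is precisely this lifting step: if we accounted for homotopy data independently for every $s \in R$, the bound would carry an exponent growing with $k$. To obtain a $k$-independent exponent, one must show that the homotopy choices for different targets are not free but are tied together by global topological invariants --- intuitively, that the entire profile of $u$ is governed by a single ``shortest-path-tree homotopy class'' in $\Sigma$, which can be encoded using $\Oh_k(1)$ constant-size tokens rather than $|R|^{\Oh_k(1)}$ independent parameters. Establishing this rigidity, combining it with the planar count, and absorbing a small number of additional ``boundary-choice'' factors into the exponent is where most of the work will lie; the final exponent $12$ is then expected to arise as the sum of the planar constant $c_0$ and a modest correction accounting for the lift to $G$.
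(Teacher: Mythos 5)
Your high-level plan — cut the surface open into a disk and reduce to a planar count — is the right one, and you correctly identify the crux: after cutting, the copies/homotopy data threaten to reintroduce a $k$-dependent exponent. But the proposal stops short of filling that gap, and also misses the idea that the paper uses to fill it, so it would not compile into a proof as written.

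The paper's key move is not to cut along a BFS tree rooted at $s_R$, but to first contract a spanning tree of $G[R]$ into a single vertex $v_R$, find (\Cref{lem:genuscut}) a simple cut-graph that is a union of $\Oh(k)$ shortest paths from $v_R$, and uncontract; this gives a consistent $R$-shortest path collection $\mathcal{P}$ of $\Oh(k)$ shortest-to-$R$ paths whose union cuts $\Sigma$ to a disk. The decisive observation (\Cref{lem:mile}) is that along any shortest-to-$R$ path the distance profile to $R$ is monotone (coordinatewise nondecreasing) and each coordinate is bounded by $|R|$, so the profile can change only $\Oh(|R|^2)$ times. The places where it changes are the ``milestones,'' and together with the $\Oh(k)$ branching points of $\mathcal{P}$ these form $\Oh_k(|R|^2)$ ``anchor'' vertices. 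It is the anchors, not all copies of vertices on the cut-graph, that serve as the ground set for the planar count: the paper defines an anchor-distance profile and proves (\Cref{lem:hat-to-normal}) that equality of anchor-distance profiles forces equality of ordinary distance profiles. Then, rather than inventing a new planar bound via ``Monge-like rigidity,'' the paper simply constructs an auxiliary planar graph $\widehat{G}$ on $G\cutgraph H$ (subdividing boundary edges and attaching a sink $t$) and invokes the Le--Wulff-Nilsen VC-dimension bound (\Cref{thm:LW}, VC dimension at most $4$ since $\widehat{G}$ is $K_5$-minor-free). With a ground set of size $\Oh_k(|R|^2)\cdot\Oh(|R|)=\Oh_k(|R|^3)$ and VC dimension $4$, Sauer--Shelah gives the exponent $12$.

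So the concrete gaps in your proposal are: (i) you root the cut at $s_R$ instead of at the contracted $R$, which means the cut paths need not be shortest-to-$R$ paths, and the monotonicity of profiles along them — the engine of the whole argument — is lost; (ii) you have no quantitative mechanism replacing the $\Oh(n)$ copies on the boundary by $\Oh_k(|R|^{\Oh(1)})$ ``representative'' points, which is exactly what the milestone/anchor machinery provides; and (iii) the promised planar bound $c_0$ via a direct combinatorial argument is not supplied, whereas the paper gets it for free from the known VC-dimension theorem. Without a concrete substitute for (ii) in particular, the ``homotopy rigidity'' you appeal to remains an unproven hope rather than a lemma.
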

The main idea behind the proof of \Cref{thm:distprofiles} is the following simple
observation:
if $P$ is a shortest path from some $u \in V(G)$ to $s_R$, then, as one walks along $P$
from $u$ to $s_R$, the distance profile of the current vertex to $R$ can only (point-wise) increase.
A slightly more technical modification of this argument works for shortest paths
from $u \in V(G)$ to $R$. This allows us to reduce the case of bounded Euler genus graphs
to the planar case by cutting along a constant number of shortest-to-$R$ paths,
and analysing how the distance profiles change during such a process.

One could ask whether an improvement similar to that of \Cref{thm:distprofiles}
would be possible even in the generality of $K_h$-minor-free graphs.
Unfortunately, it seems that \Cref{thm:distprofiles} is the limit of such improvements. More precisely, the following simple example shows that the linear dependency on $h$ in the exponent of the bound on the number of profiles is inevitable even in graphs of treewidth $h$ (which are  $K_{(h+1)^2}$-minor-free).

Let $0 < k \ll \ell$ be positive integers.
Let $R$ be a path of length $\ell$ and $v_1,\ldots,v_k$ be $k$ equidistant
points on $R$ (i.e., the distance between $v_i$ and $v_{i+1}$ is at least $p \coloneqq \lfloor \ell/(k-1) \rfloor$).
For every vector $\mathbf{a} = (a_1,\ldots,a_k) \in \{\ell, \ldots, \ell + p \}^k$,
construct a vertex $u(\mathbf{a})$ and, for every $i\in \{1,\ldots,k\}$, connect it with
$v_i$ using a path of length $a_i$.
This finishes the construction of the graph $G$; see~\Cref{fig:example} for an illustration.
Note that $G$ has treewidth at most $k+1$, because $G-\{v_1,\ldots,v_k\}$ is a forest.
Furthermore, since the distance between consecutive vertices $v_i$ is at least $p$, we have that
  $\dist_G(u(\mathbf{a}), v_i) = a_i$ for every vector $\mathbf{a}$ and $i\in \{1,\ldots,k\}$.
Consequently, if we restrict to vectors $\mathbf{a}$ with $a_1 = \ell$, 
every vertex $u(\mathbf{a})$ has a different distance profile to $R$ relative to $v_1$.
Finally, note that there are $(p + 1)^{k-1} \geq (\ell/(k-1))^{k-1} = \Omega_k(\ell^k)$ different
vectors $\mathbf{a}$ with $a_1=\ell$, giving that many different~profiles.
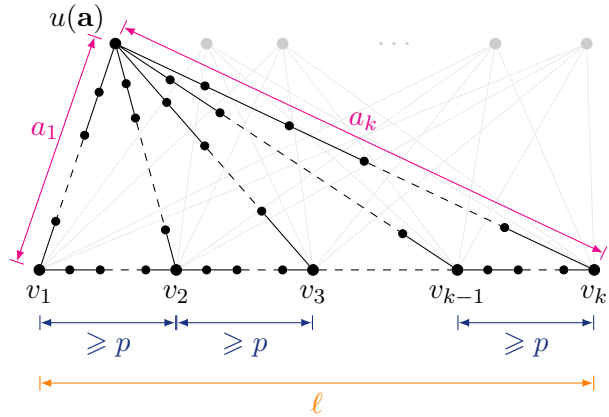
\begin{figure}[ht]
  \centering
  \begin{tikzpicture}
    \tikzset{black node/.style={draw, circle, fill = black, minimum size = 4pt, inner sep = 0pt}}
    \tikzset{small black node/.style={draw, circle, fill = black, minimum size = 3pt, inner sep = 0pt}}

    \node[black node, label={below:$v_1$}] (A) at (0,0) {};
    \node[black node, label={below:$v_2$}] (B) at (1.8,0) {};
    \node[black node, label={below:$v_3$}] (C) at (3.6,0) {};
    \node[black node, label={below:$v_{k-1}$}] (D) at (5.5,0) {};
    \node[black node, label={below:$v_k$}] (E) at (7.3,0) {};

    \node[small black node] at (0.4,0) {};
    \node[small black node] (Al) at (0.8,0) {};
    \node[small black node] (Ar) at (1.4,0) {};

    \node[small black node] at (2.2,0) {};
    \node[small black node] (Bl) at (2.6,0) {};
    \node[small black node] (Br) at (3.2,0) {};

    \node[small black node] at (5.9,0) {};
    \node[small black node] (Dl) at (6.3,0) {};
    \node[small black node] (Dr) at (6.9,0) {};

    \draw (A) -- (Al) (Ar) -- (B)
          (B) -- (Bl) (Br) -- (C)      
          (D) -- (Dl) (Dr) -- (E);
    \draw[dashed] (Al) -- (Ar) (Bl) --  (Br) (C) -- (D) (Dl) -- (Dr);

    \draw[blue] (0,-7mm) coordinate (S) edge[|<->|, >= latex] node[below]{\textcolor{blue}{$\ge p$}} (1.8,-7mm);
    \draw[blue] (1.8,-7mm) coordinate (S) edge[|<->|, >= latex] node[below]{\textcolor{blue}{$\ge p$}} (3.6,-7mm);
    \draw[blue] (5.5,-7mm) coordinate (S) edge[|<->|, >= latex] node[below]{\textcolor{blue}{$\ge p$}} (7.3,-7mm);
    \draw[orange] (0,-15mm) coordinate (S) edge[|<->|, >= latex] node[below]{\textcolor{orange}{$\ell$}} (7.3,-15mm);

    \node[black node, label={[label distance=-2pt]120:$u(\mathbf{a})$}] (X) at (1,3) {};
    
    \node[black node,gray!40!white] (T) at (2.2,3) {};
    \node[black node,gray!40!white] (P) at (3.2,3) {};
    \node[black node,gray!40!white] (Q) at (6,3) {};
    \node[black node,gray!40!white] (R) at (7.2,3) {};
    \node[label={center:\textcolor{gray!40!white}{$\dots$}}] at (4.7,3) {};
    \draw[gray!15!white] (T) -- (A) (T) -- (B) (T) -- (C) (T) -- (D) (T) -- (E);
    \draw[gray!15!white] (P) -- (A) (P) -- (B) (P) -- (C) (P) -- (D) (P) -- (E);
    \draw[gray!15!white] (Q) -- (A) (Q) -- (B) (Q) -- (C) (Q) -- (D) (Q) -- (E);
    \draw[gray!15!white] (R) -- (A) (R) -- (B) (R) -- (C) (R) -- (D) (R) -- (E);

    \path (X) to node[small black node] [pos=0.2] {}  node[small black node] [pos=0.4] (X1) {} 
    node[small black node] [pos=0.8] (X2) {}(A);
    \path (X) to node[small black node] [pos=0.16] {}  node[small black node] [pos=0.32] (Y1) {} 
    node[small black node] [pos=0.84] (Y2) {}(B);
    \path (X) to node[small black node] [pos=0.25] {}  node[small black node] [pos=0.45] (Z1) {} 
    node[small black node] [pos=0.75] (Z2) {}(C);
    \path (X) to node[small black node] [pos=0.15] {}  node[small black node] [pos=0.3] (W1) {} 
    node[small black node] [pos=0.85] (W2) {}(D);
    \path (X) to node[small black node] [pos=0.18] {} node[small black node] [pos=0.36] {}  node[small black node] [pos=0.52] (V1) {} 
    node[small black node] [pos=0.82] (V2) {}(E);

    \draw (X) -- (X1) (X2) -- (A)
    (X) -- (Y1) (Y2) -- (B)
    (X) -- (Z1) (Z2) -- (C)
    (X) -- (W1) (W2) -- (D)
    (X) -- (V1) (V2) -- (E);
    \draw[dashed] (X1) -- (X2)  (Y1) -- (Y2) (Z1) -- (Z2) (W1) -- (W2) (V1) -- (V2);

    \draw[magenta] (0.73,3.1) coordinate edge[|<->|, >= latex] node[above]{\textcolor{magenta}{$a_1$~~~}} (-0.3,0.1);
    \draw[magenta] (1.1,3.22) coordinate edge[|<->|, >= latex] node[above]{\textcolor{magenta}{$a_k$}} (7.43,0.22);

  \end{tikzpicture}
  \caption{Illustration of a construction that shows that linear dependency on $h$ in the exponent of the bound on the number of profiles is inevitable, even in graphs of treewidth $h$.}
  \label{fig:example}
\end{figure}

Our algorithm for \Cref{thm:main-genus} follows closely the approach of Le and Wulff-Nilsen~\cite{LeW24} augmented by the bound provided by \Cref{thm:distprofiles}. Namely, we first compute an $r$-division of the input graph $G$
into regions of size $r=n^{\delta}$, for some small $\delta > 0$. Then we use \Cref{thm:distprofiles}
for individual regions $R$ to speed up the computation of distances between $R$ and $V(G) \setminus R$,
by grouping vertices outside $R$ according to their distance profiles 
to $R$. Each group is batch-processed in a single step.

\paragraph{Generalizations.}
Further, we show that our techniques combine well with the techniques for bounded
treewidth graphs of Cabello and Knauer~\cite{CabelloK09}.
First, we show that \Cref{conj:taunt} holds for classes of graphs
of bounded Euler genus with a constant number of \emph{apices}, i.e., vertices that are arbitrarily connected to the rest of the graph.

\begin{theorem}\label{thm:main-apices}
For every integers $g,k \geq 1$, 
there exists an algorithm that, given an (unweighted, undirected) $n$-vertex graph $G$
and a set $A \subseteq V(G)$ such that $|A| \leq k$ and $G-A$ is
of Euler genus at most $g$, runs in time $\Oh_{g,k}(n^{2-\frac{1}{25}} \log^{k-1} n)$
and computes the diameter of $G$ and the eccentricity of every vertex of $G$.
\end{theorem}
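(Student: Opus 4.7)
The plan is to apply the algorithm underlying \Cref{thm:main-genus} to $H := G - A$ (which has Euler genus at most $g$), augmented with an orthogonal range-tree technique in the style of Cabello and Knauer~\cite{CabelloK09} to account for shortest paths traversing the apex vertices in $A$. As preprocessing, a BFS from each apex $a \in A$ in $G$ yields $d_a(v) := \dist_G(v,a)$ for all $v$, in total $\Oh_k(n+m)$ time. Any shortest $u$--$v$ path either avoids $A$, contributing length $\dist_H(u,v)$, or first enters $A$ at some apex $a$, contributing $d_a(u) + d_a(v)$, giving the decomposition
\[ \dist_G(u,v) = \min\!\Bigl(\dist_H(u,v),\ \min_{a \in A}\!\bigl(d_a(u)+d_a(v)\bigr)\Bigr). \]
For $v \in A$, $\ecc_G(v) = \max_u d_v(u)$ is immediate, so I focus on $v \notin A$.

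Next, I invoke the algorithm behind \Cref{thm:main-genus} on $H$: produce an $r$-division with regions of size $r = n^\delta$ (for a suitable small $\delta > 0$), and, for each region $R$ with source $s_R$, partition $V(H)\setminus R$ into at most $\Oh_g(r^{12})$ distance-profile classes relative to $R$ (by \Cref{thm:distprofiles}). Within a class $\mathcal{C}$ of common profile $\rho_{\mathcal{C}}$, one has $\dist_H(u,v) = \rho_{\mathcal{C}}(v) + h(u)$ for every $u \in \mathcal{C}$ and $v \in R$, where $h(u) := \dist_H(u, s_R)$. Substituting into the decomposition, the contribution of $\mathcal{C}$ to $\ecc_G(v)$ becomes
\[ \max_{u \in \mathcal{C}}\ \min\!\Bigl(\rho_{\mathcal{C}}(v) + h(u),\ d_{a_1}(u) + d_{a_1}(v),\ \ldots,\ d_{a_k}(u) + d_{a_k}(v)\Bigr). \]

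The crux is answering these max-of-min queries efficiently. Identify each $u \in \mathcal{C}$ with the point $\bigl(d_{a_1}(u), \ldots, d_{a_k}(u)\bigr)$ carrying label $h(u)$, and each query $v \in R$ with the shift vector $\bigl(\rho_{\mathcal{C}}(v), d_{a_1}(v), \ldots, d_{a_k}(v)\bigr)$. Following Cabello--Knauer, I build per class a layered range tree with fractional cascading at the innermost level; a binary search on the candidate answer combined with dominance queries returns the max-of-min in $\Oh_k(\log^{k-1} n)$ time per query, after $\Oh_k(|\mathcal{C}|\log^{k-1} n)$ preprocessing. Since each vertex of $V(H)\setminus R$ belongs to exactly one class per region, the preprocessing across all regions sums to $\Oh_k((n^2/r)\log^{k-1} n)$, while the $\Oh_g(n r^{12})$ queries cost $\Oh_{g,k}(n r^{12} \log^{k-1} n)$; with $\delta$ chosen as in \Cref{thm:main-genus}, both of these are absorbed, together with the $\Oh_g(n^{2-1/25})$ baseline, into the claimed $\Oh_{g,k}(n^{2-1/25}\log^{k-1} n)$.

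The main obstacle is rigorously fitting the range-tree machinery into the pipeline of \Cref{thm:main-genus} and verifying that the specific max-of-min queries with additive shifts in $k+1$ coordinates indeed admit the promised $\Oh_k(\log^{k-1} n)$ query time; this rests on a careful layered construction in the style of~\cite{CabelloK09}, with fractional cascading at the innermost level saving one logarithmic factor over the naive nested data structure. A secondary concern is that the total preprocessing and query cost must aggregate within $\Oh_{g,k}(n^{2-1/25}\log^{k-1} n)$, which follows from the class-size bound of \Cref{thm:distprofiles} together with the fact that each vertex is assigned to exactly one profile class per region.
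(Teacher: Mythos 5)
Your approach matches the paper's closely: the paper likewise combines the bounded-genus $r$-division and distance-profile grouping underlying \Cref{thm:main-genus} with a max-of-min orthogonal-range data structure (\Cref{l:max_min_query}, a reduction to Willard's structure) whose extra coordinates absorb the $k$ apex distances, and arrives at the same $\Oh_{g,k}(n^{2-1/25}\log^{k-1} n)$ bound with $\rho = 2/25$. The one detail worth making precise: to stay within the time budget, distance profiles (and the BFS runs that compute them) should be taken on $\partial R$ rather than on all of $R$, so that only $\sum_R |\partial R| = \Oh(n^{1-\rho/2})$ single-source computations are required; the quantity you call $\rho_{\mathcal{C}}(v)$ for $v \in R$ is then obtained on the fly per query as $\min_{s\in\partial R}\bigl(\rho_{\mathcal{C}}(s)+\dist_H(s,v)\bigr)$, which is exactly how \Cref{l:main_ecc} forms the $(k{+}1)$-st query coordinate.
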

Second, we show that \Cref{conj:taunt} holds for classes of graphs
constructed by clique-sums of graphs as in \Cref{thm:main-apices}.
To state this result formally, we need some definitions. For a graph $G$,
   a \emph{tree decompostion} of $G$ is a pair $(T,\beta)$ where $T$ is a tree
  and $\beta$ is a function that assigns to every $t \in V(T)$ a \emph{bag} 
  $\beta(t) \subseteq V(G)$ such that (1) for every $v \in V(G)$, the set $\{t \in V(T)~|~v \in \beta(t)\}$ is nonempty and connected in $T$, and (2) for every $uv \in E(G)$ there exists $t \in V(T)$ with $u,v \in \beta(t)$. 
 The \emph{torso} of the bag $\beta(t)$ is constructed from $G[\beta(t)]$ by adding, for every
 neighbor $s$ of $t$ in $T$, all edges between the vertices of $\beta(s) \cap \beta(t)$.
\begin{theorem}\label{thm:main-decomp}
For every integer $k \geq 1$, 
there exists an algorithm with the following specification.
The input consists of an (unweighted, undirected) $n$-vertex graph $G$
together with a tree decomposition $(T,\beta)$ of $G$ and a set $A(t) \subseteq \beta(t)$ for
every $t \in V(T)$ satisfying the following properties:
\begin{itemize}[nosep]
\item For every node $t \in V(T)$, we have that $|A(t)| \leq k$ and the torso of $\beta(t)$ with the vertices
of $A(t)$ deleted is a graph of Euler genus at most $k$.
\item For every edge $st \in E(T)$, we have $|\beta(s) \cap \beta(t)| \leq k$.
\end{itemize}
The algorithm runs in time $\Oh_k(n^{2-\frac{1}{356}} \log^{6k} n)$ and computes
the diameter of $G$ and the eccentricity of every vertex of $G$.
\end{theorem}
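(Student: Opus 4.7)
The plan is to combine the algorithm behind Theorem~\ref{thm:main-apices} with a divide-and-conquer along the tree decomposition $(T,\beta)$, using the adhesions of size at most $k$ as portals that play a role analogous to apices. Each recursive level will peel off one separator bag, handle distances crossing that bag via the distance-profile machinery on its torso, and recurse into the pieces of $T$ it creates. Across bags we will glue partial distance information through orthogonal range searching on a constant number of portal coordinates, as in Cabello and Knauer~\cite{CabelloK09}.

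Concretely, I would first select a centroid node $t^\star \in V(T)$ so that every component of $T - t^\star$ contains at most $n/2$ vertices of $V(G) \setminus \beta(t^\star)$; by assumption, the torso $H_{t^\star}$ of $\beta(t^\star)$ is a graph of Euler genus at most $k$ together with at most $k$ apices in $A(t^\star)$. For a vertex $v \in V(G) \setminus \beta(t^\star)$, let $K_v$ be the adhesion on the path from $t^\star$ to the first bag containing $v$; then $|K_v| \le k$ and every shortest path between two vertices on different sides of $T - t^\star$ decomposes as $u \leadsto a \leadsto b \leadsto v$ with $a \in K_u$, $b \in K_v$, and $a \leadsto b$ lying inside $H_{t^\star}$. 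The eccentricity of each $v$ then splits into a contribution from vertices on the same side of $t^\star$ (handled by recursion) and a contribution from the other sides (handled at the current level). For the latter, I would attach to each $u$ the $k$-tuple of distances to $K_u$ as a portal label, run an $r$-division of $H_{t^\star} - A(t^\star)$ and apply Theorem~\ref{thm:distprofiles} to bound by $\Oh_k(r^{12})$ the number of distinct profiles per region, then incorporate $A(t^\star)$ via the apex-handling trick from the proof of Theorem~\ref{thm:main-apices}, and finally combine across adhesion portals using $\Oh(k)$-dimensional orthogonal range queries.

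The hard part will be balancing the two nested reductions so that the polynomial saving in the exponent survives both, and ensuring that distances computed in the torsos coincide with distances in $G$. Within a single bag the tradeoff is essentially the one behind Theorem~\ref{thm:main-apices}, now paid on a graph of size up to $n$ with extra polylogarithmic overhead coming from the $\Oh(k)$-dimensional range searching used to process adhesion portals. Across the tree, the recursion has depth $\Oh(\log n)$, and each level independently contributes a $\log^{\Oh(k)} n$ factor; these compound to the stated $\log^{6k} n$ bound. Tuning the $r$-division parameter $r = n^\delta$ against the profile bound $r^{12}$, the portal contributions, and the recursion depth should yield the exponent $2 - \tfrac{1}{356}$ after solving the resulting optimization. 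I expect the main technical difficulty to lie in verifying that the clique-sum edges added in the torsos can be given weights that simultaneously preserve the relevant distances in $G$ and do not damage the profile-counting arguments of Theorem~\ref{thm:distprofiles} when lifted to the weighted setting.
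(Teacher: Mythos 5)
Your proposal takes a genuinely different route: a centroid decomposition of $T$ with $\Oh(\log n)$ levels of recursion, whereas the paper uses a single non-recursive pass. Specifically, the paper first partitions $T$ bottom-up into connected subtrees each of total bag weight $\Theta(n^\delta)$, then runs Lemma~\ref{l:single_adhesion} over the $\Oh(n^{1-\delta})$ cut edges, and finally applies Lemma~\ref{l:star} to each small subtree --- in which the peripheral pieces hanging off the central bag are, by construction, of size $\Oh(n^\delta)$ each. That size control is load-bearing, and this is exactly where your proposal has a gap.

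In your centroid scheme, the pieces of $T - t^\star$ can have size up to $n/2$. The cross-side eccentricity contributions for $u$ require, for each region $R$ of the $r$-division of the torso at $t^\star$, bounding the number of distinct distance profiles to $\partial R$ among vertices $v$ that live deep inside a piece. A vertex $v$ in piece $i$ (with adhesion $K_i = \{v_1, \ldots, v_p\}$) has its profile on $\partial R$ determined by the profiles of the $v_j$ together with the relative-distance vector $(\dist(v, v_j) - \dist(v, v_1))_{j}$. The paper bounds the number of possible such vectors by $\Oh(n^{\delta'(p-1)})$ precisely because each light piece has weight $\leq n^{\delta'}$ after the heavy/light split, so $|\dist(v, v_j) - \dist(v, v_1)| = \Oh(n^{\delta'})$. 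With a centroid, these relative distances can range over $\Theta(n)$ values, so you get no polynomial bound on the profile count for vertices in large pieces. Theorem~\ref{thm:distprofiles} cannot be applied inside a piece either, since a piece is an arbitrary clique-sum of bounded-genus graphs, not itself of bounded genus. Your suggestion of "attaching a $k$-tuple of distances to $K_u$ as a portal label" and then range-searching does not by itself recover a subquadratic bound: done per ordered pair of pieces it is $\Omega(\ell^2)$ data structures, and done globally it lacks the profile compression that makes Lemma~\ref{l:main_ecc} work. You also correctly anticipate the weighted-graph issue with Theorem~\ref{thm:distprofiles} but do not resolve it; the paper resolves it by subdividing torso edges into unit-weight paths and exploiting the $\Oh(n)$ total edge-weight bound, which is again tied to the heavy/light split. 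The necessary missing idea is the bottom-up partition of $T$ into $\Theta(n^\delta)$-weight subtrees (together with the heavy/light classification of pieces and the subdivision of weighted edges), which guarantees small relative distances inside pieces and makes the profile count polynomial; your centroid does not provide this and the rest of your argument does not compensate for it.
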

Note that the statements of
\Cref{thm:main-apices,thm:main-decomp} require the set $A$ and
the decomposition $(T,\beta)$, respectively, to be provided explicitly on input;
this should be compared with more general statements where the algorithm is 
given only $G$ with a promise that such set $A$ or decomposition $(T,\beta)$ exist. At this point, we are not aware of any existing algorithm that would find in subquadratic time a set $A$ as in \Cref{thm:main-apices},
or the decomposition $(T,\beta)$ with the sets $A$ as in \Cref{thm:main-decomp}, even in the approximate sense. However, we were informed by Korhonen, Pilipczuk, Stamoulis, and Thilikos~\cite{KorhonenPST24priv} that it seems likely that the techniques introduced in the recent almost linear-time algorithm for minor-testing~\cite{KorhonenPS24} could be used to construct such an algorithm, with almost linear time complexity. With this result in place, the assumption about the decomposition and/or apex sets being provided on input could be lifted in \Cref{thm:main-apices,thm:main-decomp}; this is, however, left to future work.

\paragraph{Discussion.}
As one of the main outcomes of their theory of graph minors, Robertson and Seymour proved the following Structure Theorem~\cite{RobertsonS03a}: every $K_h$-minor-free graph $G$
admits a tree decomposition $(T,\beta)$ such that
\begin{itemize}[nosep]
 \item for every pair $s,t$ of adjacent nodes of $T$, the set $\beta(t) \cap \beta(s)$ has size $\Oh_h(1)$; and
 \item the torso of every bag $\beta(t)$ is ``nearly embedable'' into a surface of bounded (in terms of $h$) Euler~genus.
\end{itemize}
The notion of being ``nearly embeddable'' encompasses adding a constant number of apices (which can be handled by \Cref{thm:main-decomp}) and a constant number of so-called
vortices (which are not handled by \Cref{thm:main-decomp}). Thus, our methods fall short of verifying  \Cref{conj:taunt} in full generality due to vortices.

We remark that recently, Thilikos and Wiederrecht~\cite{ThilikosW22} proved a variant of the Structure Theorem, where under the stronger assumption of excluding a minor of a {\em{shallow vortex grid}}, instead of a clique minor, they gave a decomposition as above, but with torsos devoid of vortices. Thus, the decomposition for shallow-vortex-grid-minor-free graphs provided by~\cite{ThilikosW22} can be directly plugged into \Cref{thm:main-decomp}, with the caveat that~\cite{ThilikosW22} does not provide a subquadratic algorithm to compute the decomposition.

Coming back to \Cref{conj:taunt},
the simplest case
that we are currently unable to solve is the setting when the input is a planar graph plus a single vortex. More formally, for a fixed integer $k$, let $\mathcal{G}_k$ be the class of graphs
defined as follows. We have $G \in \mathcal{G}_k$ if there exist two subgraphs $G_0,G_1$
of $G$ and a sequence of vertices $v_1,\ldots,v_b$ in $V(G_0) \cap V(G_1)$
such that:
\begin{itemize}[nosep]
 \item $V(G) = V(G_0) \cup V(G_1)$,
 \item $E(G) = E(G_0) \cup E(G_1)$,
 \item $G_0$ admits a planar embedding where the vertices $v_1,\ldots,v_b$ lie on one face in
this order, and
\item $G_1$ admits a tree decomposition $(T_1,\beta_1)$, where $T_1$ is a path on nodes $t_1,\ldots,t_b$ and for every $i\in\{1,\ldots,b\}$, the bag $\beta_1(t_i)$ contains $v_i$
and is of size at most $k$.
\end{itemize}
It is easy to see that graphs from $\mathcal{G}_k$ are $K_{k+\Oh(1)}$-minor-free.
Do they satisfy \Cref{conj:taunt}? That is, is there a constant
$c > 0$ such that the diameter problem in $\mathcal{G}_k$ can be solved in
time $\Oh_k(n^{2-c})$?

\paragraph{Organization.}
We prove \Cref{thm:distprofiles} in \Cref{sec:distprofiles}.
\Cref{thm:main-apices} is proven in \Cref{sec:algo-genus}; note that \Cref{thm:main-genus} follows from \Cref{thm:main-apices}
for $k=1$. 
\Cref{thm:main-decomp} is proven in \Cref{sec:algo}.

\section{Preliminaries}\label{sec:prelims}

\paragraph{Set systems and VC-dimension.}
A \emph{set system} is a collection $\mathcal{F}$ of subsets of a given set $A$, which we call \emph{ground set} of $\mathcal{F}$.
We say that a subset $Y\subseteq A$ is \emph{shattered by} $\mathcal{F}$
if $\{Y \cap S : S \in \mathcal{F}\} = 2^Y$, that is, the intersections of $Y$ and the sets in $\mathcal{F}$ contain every subset of $Y$.
The \emph{VC-dimension} of a set system $\mathcal{F}$ with ground set $A$
is the size of the largest subset $Y\subseteq A$ shattered by $\mathcal{F}$. The notion of VC-dimension was introduced by Vapnik and Chervonenkis~\cite{VapnikC71}.

We will use the following well-known Sauer-Shelah Lemma~\cite{Sauer72,Shelah72}, which gives a polynomial upper bound on the size of a set system of bounded VC-dimension.

\begin{lemma}[Sauer-Shelah Lemma]\label{lem:sauer-shelah}
  Let $\mathcal{F}$ be a set system with ground set $A$.
  If the VC-dimension of $\mathcal{F}$ is at most $k$, then $|\mathcal{F}|=\mathcal{O}(|A|^k)$.
\end{lemma}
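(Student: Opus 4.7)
The plan is to establish the stronger quantitative bound $|\mathcal{F}| \le \sum_{i=0}^{k} \binom{|A|}{i}$, from which the stated $\Oh(|A|^k)$ conclusion follows immediately for fixed $k$, since $\sum_{i=0}^k \binom{|A|}{i} \le (|A|+1)^k$. I will prove this bound by induction on $|A|$, treating $k$ as a parameter. The base case $|A| = 0$ is immediate: $\mathcal{F}$ is either empty or consists solely of the empty set, so $|\mathcal{F}| \le 1 = \binom{0}{0}$.

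For the inductive step, I fix an arbitrary element $a \in A$ and partition the information carried by $\mathcal{F}$ into two auxiliary set systems over the reduced ground set $A' \coloneqq A - \{a\}$. The first is the ``projection'' $\mathcal{F}_1 \coloneqq \{S - \{a\} : S \in \mathcal{F}\}$, which merges each colliding pair $S, S \cup \{a\}$ into a single element of $\mathcal{F}_1$. The second, $\mathcal{F}_2 \coloneqq \{S \in \mathcal{F} : a \notin S \text{ and } S \cup \{a\} \in \mathcal{F}\}$, accounts precisely for the information lost by that merging, so that $|\mathcal{F}| = |\mathcal{F}_1| + |\mathcal{F}_2|$. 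Both systems live on the smaller ground set $A'$, which is what enables the induction.

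The key step---and essentially the only nontrivial one---is comparing the VC-dimensions of $\mathcal{F}_1$ and $\mathcal{F}_2$ with that of $\mathcal{F}$. Any $Y \subseteq A'$ shattered by $\mathcal{F}_1$ is also shattered by $\mathcal{F}$, so $\mathcal{F}_1$ has VC-dimension at most $k$. The more delicate observation is that if $Y \subseteq A'$ were shattered by $\mathcal{F}_2$, then $Y \cup \{a\}$ would be shattered by $\mathcal{F}$, because every witnessing $S \in \mathcal{F}_2$ supplies \emph{both} $S$ and $S \cup \{a\}$ inside $\mathcal{F}$; hence $\mathcal{F}_2$ has VC-dimension at most $k-1$. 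Applying the inductive hypothesis to $\mathcal{F}_1$ and $\mathcal{F}_2$, and combining via Pascal's identity
\[
\binom{|A|-1}{i} + \binom{|A|-1}{i-1} = \binom{|A|}{i},
\]
yields $|\mathcal{F}| \le \sum_{i=0}^{k} \binom{|A|}{i}$, closing the induction. I expect the VC-dimension drop for $\mathcal{F}_2$ to be the main obstacle to articulate cleanly; once that identity is in place, the remainder is routine binomial bookkeeping.
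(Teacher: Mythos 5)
The paper does not prove this lemma at all: it states it as a known preliminary fact (Lemma~\ref{lem:sauer-shelah}) and simply cites Sauer~\cite{Sauer72} and Shelah~\cite{Shelah72}, so there is no in-paper proof to compare against. Your argument is the standard textbook proof of the sharp bound $|\mathcal{F}| \le \sum_{i=0}^{k}\binom{|A|}{i}$, and it is correct: the decomposition $|\mathcal{F}|=|\mathcal{F}_1|+|\mathcal{F}_2|$ is verified by counting fibers of the projection $S\mapsto S\setminus\{a\}$, the VC-dimension bounds for $\mathcal{F}_1$ (at most $k$) and $\mathcal{F}_2$ (at most $k-1$, since shattering $Y$ in $\mathcal{F}_2$ forces shattering $Y\cup\{a\}$ in $\mathcal{F}$) are sound, and Pascal's identity closes the induction. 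Two small points worth making explicit if you write this up: the induction is really on $|A|$ with the VC-dimension bound universally quantified (so that the hypothesis can be invoked with parameter $k-1$ for $\mathcal{F}_2$), and in the degenerate case $k=0$ the set $\mathcal{F}_2$ is empty, so the convention that the empty sum is $0$ keeps the argument intact.
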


\paragraph{Basic graph notation.} All our graphs are undirected.
For a graph $G$, the neighborhood of a vertex $u$ is defined as $N_G(u) = \{v~\colon~uv \in E(G)\}$
and for $X \subseteq V(G)$ we have $N_G(X) = \bigcup_{u \in X} N_G(u) \setminus X$. 

The {\em{length}} of a path $P$, denoted $|P|$, is the number of edges of $P$.
For two vertices $u,v$ of a graph $G$, the {\em{distance}} between $u$ and $v$,
denoted $\dist_G(u,v)$, is defined as the minimum length of a path in $G$ with endpoints $u$ and $v$.
For every $v\in V(G)$ and set $R \subseteq V(G)$, we set $\dist_G(v,R)\coloneqq \min\{\dist_G(v,y)\colon y\in R\}.$
For vertices $x,y$ appearing on a path $P$, by $P[x,y]$ we denote the subpath of $P$ with endpoints $x$ and $y$.
The set of vertices traversed by a path $P$ is denoted by $V(P)$.
In all above notation, we sometimes drop the subscript if the graph is clear from the context.

For a nonnegative integer $q$, we use the shorthand $[q]\coloneqq\{1,\ldots,q\}$.
For a vertex $v \in V(G)$ and a set $X \subseteq V(G)$, we define the \emph{$X$-eccentricity} of $v$ as
$\ecc_X(v) \coloneqq \max_{x \in X} \dist(v,x)$.
Thus, the eccentricity of $v$ in $G$ is the same as its $V(G)$-eccentricity.

The \emph{Euler genus} of a graph $G$ is the minimum Euler characteristic of a surface, where $G$ is embeddable.
We refer to the textbook of Mohar and Thomassen for more on surfaces and embedded graphs~\cite{MoharT01grap}.

We will use the following result of Le and Wulff-Nilsen~\cite[Theorem 1.3]{LeW24} for planar graphs. Note that the set $R$ is not necessarily connected.

\begin{theorem}\label{thm:LW}
  Let $h\ge 1$ be an integer, $G$ be a $K_h$-minor-free (unweighted, undirected) graph, $R$ be a subset of $V(G)$, and $s_R\in R$. Then the set system $$\left\{\left\{(s,i) \in R \times \mathbb{Z}~|~i \leq \dist_G(u,s)-\dist_G(u,s_R)\right\}~\colon~u \in V(G)\right\}$$
  has VC-dimension at most $h-1$.
\end{theorem}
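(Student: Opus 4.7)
The plan is to argue by contrapositive: assuming a subset $Y = \{(s_1, i_1), \ldots, (s_h, i_h)\} \subseteq R \times \mathbb{Z}$ of size $h$ is shattered by the set system, I will exhibit $K_h$ as a minor of $G$. For every $T \subseteq [h]$ the shattering supplies a witness $u_T \in V(G)$ characterised by $\dist_G(u_T, s_j) - \dist_G(u_T, s_R) \ge i_j$ if and only if $j \in T$. The goal is then to construct $h$ pairwise disjoint, internally connected branch sets $B_1, \ldots, B_h \subseteq V(G)$ such that, for every pair $j \ne j'$, some edge of $G$ has one endpoint in $B_j$ and the other in $B_{j'}$.

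The key structural fact, which I would prove first, is a monotonicity lemma: if $w$ lies on some shortest $u$-to-$s_R$ path in $G$, then for every $j\in[h]$,
\[
\dist_G(w, s_j) - \dist_G(w, s_R) \ \ge \ \dist_G(u, s_j) - \dist_G(u, s_R).
\]
This is immediate from the triangle inequality $\dist_G(u, s_j) \le \dist_G(u, w) + \dist_G(w, s_j)$ combined with the identity $\dist_G(w, s_R) = \dist_G(u, s_R) - \dist_G(u, w)$ available on any shortest path to $s_R$. Writing $\sigma_j$ for the indicator of $\{v : \dist_G(v, s_j) - \dist_G(v, s_R) \ge i_j\}$, the lemma says that $\sigma_j^{-1}(1)$ is closed under moving towards $s_R$ along any shortest path. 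Fixing then a BFS tree $\mathcal{T}$ of $G$ rooted at $s_R$, each $\sigma_j^{-1}(1)$ is ancestor-closed in $\mathcal{T}$, and dually each $\sigma_j^{-1}(0)$ is descendant-closed.

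The main obstacle, and the heart of the proof, is to leverage this structure together with the $2^h$ shattering witnesses to actually build a $K_h$ minor. My plan is, for each $j\in[h]$, to start from the $\mathcal{T}$-path $P_j$ connecting $s_R$ to the witness $u_{[h]\setminus\{j\}}$; by monotonicity $P_j$ lies entirely in $\bigcap_{j'\ne j}\sigma_{j'}^{-1}(1)$ while escaping $\sigma_j^{-1}(1)$ near its endpoint $u_{[h]\setminus\{j\}}$. I would then hang off $P_j$ a short connector realising the ``$j$-th coordinate'' of the branch set (for instance, a shortest $u_{[h]\setminus\{j\}}$-to-$s_j$ path, whose length is strictly less than what the threshold $i_j$ would force for a vertex with $\sigma_j=1$), and, iterating with the witnesses $u_T$ for $T$ of intermediate size, prune and glue these $h$ objects into pairwise disjoint connected branch sets with the required pairwise adjacencies. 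The bookkeeping needed to simultaneously enforce disjointness of the $B_j$'s and at least one connecting edge between every pair $\{j,j'\}$ is the genuinely delicate step; I would follow here the minor-extraction argument of Le--Wulff-Nilsen~\cite{LeW24}, noting that it uses $s_R$ only as a single distinguished pivot and never appeals to connectedness of $R$, so it transfers without change to the present more general statement.
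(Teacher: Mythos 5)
This statement is cited in the paper as~\cite[Theorem 1.3]{LeW24} and used as a black box; the paper supplies no proof of its own, so there is no in-paper argument to compare against. Your monotonicity lemma is correct and is genuinely the right structural lever: it is the same point-wise monotonicity of the distance profile along shortest-to-$s_R$ paths that the paper itself exploits in the proof of \Cref{lem:mile} (inequality~\eqref{eq:wydra}), and your BFS-tree reformulation (each $\sigma_j^{-1}(1)$ ancestor-closed) packages it cleanly.

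However, the proposal stops before the substance. You acknowledge that building $h$ pairwise-disjoint connected branch sets realising all $\binom{h}{2}$ required adjacencies is ``the genuinely delicate step,'' and you resolve it by deferring to ``the minor-extraction argument of Le--Wulff-Nilsen'' --- that is, to the very proof this statement is taken from. As a proof of this statement, that move is circular. On the sketch you do offer there is also a concrete obstacle you do not address: your paths $P_j$ from $s_R$ to $u_{[h]\setminus\{j\}}$ all contain $s_R$ and may share arbitrarily long prefixes, so pruning them into pairwise-disjoint connected sets while simultaneously manufacturing the pairwise adjacencies (presumably via the intermediate witnesses $u_T$) is exactly the missing nontrivial content, not bookkeeping. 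Finally, your closing remark that the LeW24 argument ``never appeals to connectedness of $R$, so it transfers \ldots\ to the present more general statement'' does not describe a generalization: \Cref{thm:LW} as stated already places no connectivity hypothesis on $R$, and the paper's prefatory sentence (``Note that the set $R$ is not necessarily connected'') flags precisely this.
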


\paragraph{Algorithmic tools.}
All our algorithms assume the word RAM model.

To cope with apices, we will need the following classic data structure due to Willard~\cite{Willard85}.
\begin{theorem}[\cite{Willard85}]\label{t:orth_query}
Let $V$ be a set of $n$ points in $\mathbb{R}^d$ and let $w\colon V \to \mathbb{R}$ be a weight function. By a \emph{suffix range}, we mean any set of the form
$$\mathsf{Range}(r_1,\ldots,r_d)\coloneqq \{ (x_1, \dots, x_d) \in \mathbb{R}^d\,\mid\,x_i \geq r_i\textrm{ for all }i\in [d] \}$$ for some range parameters $r_1,\ldots,r_d\in \mathbb{R}$.

There is a data structure that uses $\Oh\left(n \log^{d - 1} n \right)$ preprocessing time, $\Oh\left( n \log^{d - 1} n \right)$ memory and answers the following suffix range queries in time $\Oh \left( \log^{d - 1} n \right)$:
given a tuple $(r_i)_{i \in [d]}$, find the maximum value of $w(v)$ over all
$v \in V\cap \mathsf{Range}(r_1,\ldots,r_d)$.
\end{theorem}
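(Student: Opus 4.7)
The plan is to build a classical $d$-dimensional range tree by induction on $d$. First, sort the $n$ points by their first coordinate $x_1$ and build a balanced binary search tree on this order; at each internal node $v$ store (a) the maximum weight among the points in the subtree rooted at $v$, and (b) a recursively built $(d{-}1)$-dimensional structure for the same set of points with respect to coordinates $x_2,\ldots,x_d$. The base case $d=1$ is handled by a sorted array augmented with precomputed suffix maxima, answering a query in $\Oh(\log n)$ via binary search, or in $\Oh(1)$ if the rank of the query value is supplied from outside.

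To answer a query $\mathsf{Range}(r_1,\ldots,r_d)$ I would descend the primary tree along the search path for $r_1$. Whenever this path turns left at an internal node $v$, the entire subtree rooted at the right child of $v$ consists of points with $x_1\ge r_1$; mark it as a \emph{canonical subtree}. The union of the canonical subtrees is exactly $\{v : x_1(v)\ge r_1\}$, and there are $\Oh(\log n)$ of them. For each canonical subtree, recursively query its attached $(d{-}1)$-dimensional structure on $(r_2,\ldots,r_d)$, and return the maximum over all results. This gives the naive recurrence $T(n,d)=\Oh(\log n)\cdot T(n,d-1)=\Oh(\log^d n)$.

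The key technical step, and in my view the main obstacle, is to shave off one logarithmic factor and reach the promised $\Oh(\log^{d-1}n)$ query time. I would apply \emph{fractional cascading}: across the $\Oh(\log n)$ secondary $(d{-}1)$-dimensional structures that are consulted by one query, link their underlying sorted arrays (at every level of the recursion) by precomputed bridge pointers, so that after locating $r_2$ in one such structure with a single binary search, its position in every other relevant structure is retrieved in $\Oh(1)$. Cascading this idea recursively through all $d$ coordinates reduces the query to one $\Oh(\log n)$ binary search at the top plus $\Oh(\log n)$ constant-time steps in each of the remaining $d-2$ descent layers, yielding query time $\Oh(\log^{d-1}n)$. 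The care needed here is to set up the bridges in a way that is compatible with the nested range-tree construction, so that they do not inflate either memory or preprocessing.

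For the space bound, observe that at each level of the primary tree each point appears in only $\Oh(\log n)$ subtrees, so the total size of the secondary structures at one recursive step is $\Oh(n\log n)$ times a point's contribution to a $(d{-}1)$-dimensional structure. Solving the resulting recurrence $S(n,d)=\Oh(n\log n)+\sum_{v}S(n_v,d-1)$ gives $S(n,d)=\Oh(n\log^{d-1}n)$; the fractional-cascading overhead is linear in the sizes of the structures being linked, so it does not change the asymptotics. Preprocessing time is analyzed analogously and matches the space bound. Putting these ingredients together establishes the claimed $\Oh(n\log^{d-1}n)$ preprocessing and memory and $\Oh(\log^{d-1}n)$ query time for suffix range max queries.
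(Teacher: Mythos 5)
The paper does not actually prove \Cref{t:orth_query}; it is imported as a black box from Willard~\cite{Willard85}, so there is no internal proof to compare your sketch against. Your plan --- a $d$-dimensional range tree with suffix-maximum tables at the leaves and fractional cascading to shave one logarithmic factor --- is the canonical textbook route to exactly these bounds, and at the level of the overall strategy it is sound.

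There is, however, a genuine gap in the way you deploy fractional cascading. You propose cascading ``recursively through all $d$ coordinates'' and account for the query as ``one $\Oh(\log n)$ binary search at the top plus $\Oh(\log n)$ constant-time steps in each of the remaining $d-2$ descent layers.'' That sum is $\Oh(d\log n)$, not $\Oh(\log^{d-1}n)$, so the arithmetic does not match the claim; and, more importantly, cascading at the intermediate levels does not help in the way you suggest. The structures attached to a canonical subtree at layers $2,\dots,d-1$ are themselves nested range trees, not sorted arrays, and knowing the rank of $r_j$ inside such a subtree does not let you avoid the $\Oh(\log n)$-node walk down it --- that walk \emph{is} the work, not a separate binary search for cascading to eliminate. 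The correct and sufficient use of fractional cascading is only at the innermost level, between the BST on $x_{d-1}$ and the sorted arrays on $x_d$: there the child's array really is a sublist of its parent's, bridges give \emph{exact} positions (which is what makes the subsequent suffix-maximum lookup well-defined for the max-aggregation variant), and a single binary search plus $\Oh(1)$ hops replaces $\Oh(\log n)$ binary searches. This gives $T(2)=\Oh(\log n)$, and the plain recurrence $T(d)=\Oh(\log n)\cdot T(d-1)$ then yields $T(d)=\Oh(\log^{d-1}n)$. With that correction (and noting that range-max is decomposable, so maximizing over the $\Oh(\log n)$ canonical answers is legitimate), your argument becomes complete; the preprocessing and memory analysis you give is fine as written.
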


We will also need the following standard statement about $r$-divisions.

\begin{theorem}[\cite{WulffNilsen11}]\label{t:r_division}
Let $G$ be a $K_t$-minor-free graph on $n$ vertices. For any fixed constant $\varepsilon > 0$, and for any parameter $r$ with $C t^2 \log n\leq r\leq  n$, where $C$ is some absolute constant, we can construct in time $\Oh \left( n^{1 + \varepsilon} \sqrt{r} \right)$ an $r$-division of $G$, that is, a collection $\mathcal{R}$ of connected subsets of vertices of $G$ such that:
\begin{itemize}[nosep]
\item $\bigcup \mathcal{R}=V(G)$,
\item $|R| \leq r$ for every $R \in \mathcal{R}$, and
\item $\sum_{R \in \mathcal{R}} |\partial R| \leq \Oh(nt / \sqrt{r})$,
where $\partial R = R \cap N_G(V(G) \setminus R)$.
\end{itemize}
\end{theorem}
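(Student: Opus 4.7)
The plan is to apply, in a recursive divide-and-conquer fashion, the balanced vertex-separator theorem of Alon, Seymour, and Thomas for $K_t$-minor-free graphs: every $n$-vertex $K_t$-minor-free graph admits a set $S \subseteq V(G)$ of size $\Oh(t^{3/2}\sqrt{n})$ whose removal leaves components each on at most $\tfrac{2}{3}n$ vertices, and, using modern algorithmic refinements of this theorem, such an $S$ can be computed in time $\Oh(n^{1+\varepsilon})$. I would use this separator subroutine as a black box, on top of which the $r$-division is built.

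First I would define the routine $\mathsf{Divide}(R)$: if $|R| \leq r$, return $\{R\}$; otherwise compute a balanced separator $S$ of $G[R]$, identify the connected components $C_1, \ldots, C_p$ of $G[R \setminus S]$, and recurse on each $C_i \cup S$. A small post-processing step (e.g., expanding $S$ to a Steiner-connected superset inside $G[R]$, or breaking each returned region into its connected components without inflating the boundary count) ensures that every returned region is connected. Starting the recursion at $R_0 = V(G)$ (handling the connected components of $G$ independently if $G$ itself is disconnected) produces a family $\mathcal{R}$ of connected subsets that covers $V(G)$ and in which every region has size at most $r$; this matches the first two bullets.

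The main obstacle is the boundary bound $\sum_R |\partial R| = \Oh(nt/\sqrt{r})$. The key observation is that every boundary vertex of a returned leaf region $R$ must have been a separator vertex at some ancestor call of $\mathsf{Divide}$, since any vertex that is strictly interior to every ancestor split has all of its neighbours inside $R$. I would then analyse the recursion tree layer by layer: at depth $d$ the pieces have total size at most $n$, maximum size at most $(2/3)^d n$, and their number is at most $N_d \leq (3/2)^d$, so by Cauchy--Schwarz the total size of separators computed at depth $d$ is $\Oh\bigl(t^{3/2} \sum_i \sqrt{n_i}\bigr) = \Oh\bigl(t^{3/2} \sqrt{n \cdot N_d}\bigr)$. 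Charging each separator vertex to the descendant regions that contain it, and summing the resulting geometric series over $d \in \{0, 1, \ldots, \Oh(\log(n/r))\}$, the dominant contribution comes from the deepest level and gives $\sum_R |\partial R| = \Oh(nt/\sqrt{r})$, as required.

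For the running time, a concavity argument parallel to the boundary analysis bounds the total separator-computation work by $\Oh(n^{1+\varepsilon} \sqrt{r})$; the extra $\sqrt{r}$ factor accommodates the final cleanup pass that materialises each region, assigns boundary vertices, and verifies connectivity. The delicate point throughout is that a single separator vertex may be replicated in many descendant pieces, but the Cauchy--Schwarz/concavity step — which pairs large separators with balanced splits and small separators with numerous pieces — keeps the amortised contribution per vertex bounded, and is the step on which both the boundary bound and the running-time bound ultimately rest.
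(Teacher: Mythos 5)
The paper does not prove this statement; it is imported as a black box from Wulff-Nilsen~\cite{WulffNilsen11}, so there is no internal proof to compare against. Your high-level plan---recursively apply an Alon--Seymour--Thomas-type separator, stop when pieces drop to size $r$, and observe that every boundary vertex of a leaf region must have lain in some ancestral separator---is indeed the standard route to $r$-divisions, and your structural observation about boundary vertices is correct.

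Two points need repair, however. First, the layer-wise accounting is wrong as stated: the bound $N_d \leq (3/2)^d$ does not hold, because the separator removed at a node may disconnect the piece into arbitrarily many components, so the branching factor of the recursion tree is a priori unbounded; moreover the total size of depth-$d$ pieces need not stay at most $n$, since separator vertices are replicated into every child. The standard remedy is either to group small components of $G[R\setminus S]$ into a constant number of balanced batches before recursing (so each internal node has $\Oh(1)$ children and the replication is geometrically controlled), or to abandon the layer-wise count entirely and instead prove, by a Frederickson-style recurrence, that $f(s) \leq C\,s/\sqrt{r} - D\sqrt{s}$, where $f(s)$ is the total separator mass generated when dividing a piece of size $s$ and $C,D$ are constants depending on $t$; either route gives the stated boundary bound, but your Cauchy--Schwarz step as written does not. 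Second, you invoke ``modern algorithmic refinements'' to compute each separator in $\Oh(n^{1+\varepsilon})$ time; a near-linear-time separator algorithm for $K_t$-minor-free graphs is precisely the main technical contribution of~\cite{WulffNilsen11}, so your sketch ends up citing, rather than reproving, the heart of the theorem. As a roadmap to the argument it is reasonable; as a self-contained proof it still leans on the very result it is meant to establish.
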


\section{Distance profiles in graphs of bounded Euler genus}\label{sec:distprofiles}

In this section we prove~\Cref{thm:distprofiles}. Our argument consists of a reduction to the planar case, where we can use the constant bound on the VC-dimension of the set system given by the distance profiles due to Le and Wulff-Nilsen~\cite{LeW24}.
The main idea behind the reduction is to consider certain notions of ``extended'' profiles, where the extension is built along a collections of shortest paths. These shortest paths can be chosen in such a way that by cutting the graph along these paths we obtain a plane graph. Then a bound on the number of the extended profiles in the obtained plane graph translates to a bound on the number of (standard) distance profiles in the original graph.

Preliminary definitions and results needed for defining profiles with respect to shortest paths are given in~\Cref{subsec:milestones}.
These extended profiles are then defined in~\Cref{subsec:mdprofiles}. There, we also prove that a fundamental lemma that equality of extended profiles entails equality of (standard) distance profiles.
The main reduction providing the proof of~\Cref{thm:distprofiles} is given at the end of this section.

\subsection{Milestones}
\label{subsec:milestones}

Let $G$ be a graph, $R$ be a subset of $V(G)$,
$v_0$ be a vertex in $V(G)$, and $P$ be a shortest path from $v_0$ to $R$. Let $x$ be the unique vertex in $V(P)\cap R$. Further, let $\le_P$ be the linear ordering of the vertices traversed by $P$: for two vertices $v,u\in V(P)$, we have $v\le_P u$ if $u$ belongs to $P[v,x]$.
We say that a vertex $v\in V(P)$ is a \emph{milestone of $P$}
if either $v=x$ or we have $\distprofile{R}{x}{v}\neq\distprofile{R}{x}{u}$, where $u$ is the successor of $v$ in $\le_P$.
We denote by $M_{R}(P)$ the set of all milestones of $P$.
Given a milestone $v\in M_{R}(P)$,
the \emph{neutral prefix of $v$ in $P$} is defined as the vertex set of the maximal subpath $Q$ of $P[v_0,v]$ satisfying the following: $v$ is the only milestone of $P$ that belongs to $Q$.

The next lemma shows that minimum-length paths towards $R$ that contain a vertex in the neutral prefix of a milestone can be assumed to pass through that milestone vertex.

\begin{lemma}
  \label{lem:dispref}
  Let $G$ be a graph, $R$ be a subset of $V(G)$, $v_0$ be a vertex in $V(G)$ and $P$ be a shortest path from $v_0$ to $R$.
  Then for every $v\in M_R(P)$, every $u$ in the neutral prefix of $v$, and every $y\in R$, it holds that
  $\dist(u,y)=|P[u,v]|+\dist(v,y)$. 
\end{lemma}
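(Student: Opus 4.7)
The plan is to reduce the claim to two elementary observations: (i) the distance profile $\distprofile{R}{x}{\cdot}$ is constant on the neutral prefix of $v$, and (ii) every subpath of $P$ ending at $x$ is itself a shortest path. Together these yield the claimed identity immediately.

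For (i), I would recall that a non-terminal vertex $w\in V(P)$ fails to be a milestone precisely when $\distprofile{R}{x}{w}=\distprofile{R}{x}{w^+}$, where $w^+$ is the $\le_P$-successor of $w$. Let $v'$ be the last milestone of $P$ strictly $\le_P$-before $v$, if any exists, and let $u'$ be its $\le_P$-successor; otherwise set $u':=v_0$. Unpacking the maximality condition, the neutral prefix of $v$ equals exactly $V(P[u',v])$, since this is the longest subpath of $P[v_0,v]$ ending at $v$ that contains no milestone other than $v$ itself. A straightforward induction from $v$ backwards along $P[u',v]$, applying the non-milestone condition at each step, then yields $\distprofile{R}{x}{u}=\distprofile{R}{x}{v}$ for every $u$ in the neutral prefix. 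Unfolding the definition of profiles, this means that for every $y\in R$,
\[\dist(u,y)-\dist(u,x)=\dist(v,y)-\dist(v,x).\]

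For (ii), since $P$ is a shortest $v_0$-to-$R$ path and $x$ is the unique vertex of $V(P)\cap R$, we have $|P|=\dist(v_0,R)\le \dist(v_0,x)\le |P|$, so $P$ is also a shortest $v_0$-to-$x$ path. Hence every subpath $P[w,x]$ with $w\in V(P)$ realizes $\dist(w,x)$. Since $u$ lies on $P[v_0,v]$, the subpath $P[u,x]$ decomposes as $P[u,v]$ followed by $P[v,x]$, and therefore $\dist(u,x)-\dist(v,x)=|P[u,v]|$.

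Substituting into the displayed equality gives $\dist(u,y)=\dist(v,y)+|P[u,v]|$, as required. The argument is essentially bookkeeping around the definitions, so there is no real obstacle; the only delicate point is the identification of the neutral prefix with $V(P[u',v])$, which requires a careful unpacking of the maximality condition in the definition.
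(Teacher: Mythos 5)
Your proof is correct and follows essentially the same route as the paper's: establish that the profile $\distprofile{R}{x}{\cdot}$ is constant on the neutral prefix (the paper states this as immediate from the definition; your backwards induction is the careful version of that), then use that subpaths of the shortest $v_0$-to-$R$ path ending at $x$ are themselves shortest to convert the profile equality into the claimed distance identity. The paper's version is simply terser; there is no substantive difference in strategy.
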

\begin{proof}
  Let $x$ be the unique vertex of $V(P)\cap R$. Note that, by definition, $\distprofile{R}{x}{v}=\distprofile{R}{x}{u}$.
  Also, $\dist(u,x)=\dist(u,v)+\dist(v,x)$ and $\dist(u,v)=|P[u,v]|$. Therefore, $\dist(u,y)=|P[u,v]|+\dist(v,y)$ for every $y\in R$.
\end{proof}

We also give an upper bound on the number of milestones.

\begin{lemma}
  \label{lem:mile}
  Let $G$ be a graph, $R$ be a connected subset of $V(G)$, $v_0$ be a vertex of $G$, and $P$ be a shortest path from $v_0$ to $R$. Then the number of milestones of $P$ is at most $|R|^2+1$.
\end{lemma}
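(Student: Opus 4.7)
The plan is to exploit a pointwise monotonicity property of the profile along $P$: as one walks from $v_0$ towards $x$, the function $\distprofile{R}{x}{\cdot}$ can only increase coordinate-wise. Combined with a bound on the total possible pointwise increase, this will yield the count of milestones.

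Let $n=|R|$ and enumerate the vertices of $P$ in order as $v_0=w_0,w_1,\ldots,w_\ell=x$. The first step is to establish the monotonicity claim: for every $i\in\{1,\ldots,\ell\}$ and every $s\in R$,
\[ \distprofile{R}{x}{w_{i-1}}(s) \le \distprofile{R}{x}{w_i}(s). \]
This follows by rewriting
\[ \distprofile{R}{x}{w_i}(s) - \distprofile{R}{x}{w_{i-1}}(s) = \bigl(\dist(w_i,s)-\dist(w_{i-1},s)\bigr) + \bigl(\dist(w_{i-1},x)-\dist(w_i,x)\bigr); \]
the second bracket equals $+1$ since $P$ is a shortest path from $v_0$ to $R$ with endpoint $x\in R$, while the first bracket is at least $-1$ because $w_{i-1}w_i\in E(G)$. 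By the definition of milestone, a vertex $w_i$ with $i<\ell$ is a milestone exactly when $\distprofile{R}{x}{w_i}\ne\distprofile{R}{x}{w_{i+1}}$; together with the monotonicity claim, such a strict inequality at some coordinate $s$ forces $\sum_{s\in R}\distprofile{R}{x}{w_{i+1}}(s)\ge \sum_{s\in R}\distprofile{R}{x}{w_i}(s)+1$. Telescoping along $P$ and isolating the contribution of $x$ itself (which is a milestone by fiat) yields
\[ |M_R(P)|-1 \;\le\; \sum_{s\in R}\distprofile{R}{x}{x}(s) \;-\; \sum_{s\in R}\distprofile{R}{x}{v_0}(s). \]

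To finish, the plan is to upper bound the right-hand side. Since $\distprofile{R}{x}{x}(s)=\dist(x,s)$ and the triangle inequality gives $\distprofile{R}{x}{v_0}(s)\ge -\dist(x,s)$, the right-hand side is at most $2\sum_{s\in R}\dist(x,s)$. Here the connectedness of $R$ enters in a mild but essential way: fixing any spanning tree $T$ of $G[R]$, one has $\dist_G(x,s)\le \dist_T(x,s)$, and the sum of distances from a fixed vertex in a tree on $n$ vertices is at most $\binom{n}{2}$ (maximized by a path, by an edge-counting argument). Hence $\sum_{s\in R}\dist(x,s)\le \binom{n}{2}$, giving $|M_R(P)|\le n(n-1)+1\le n^2+1$ as claimed. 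The argument is essentially counting-by-monotone-change; the only step requiring mild care is the $\binom{n}{2}$ bound on $\sum_{s\in R}\dist(x,s)$, since the cruder pointwise bound $\dist(x,s)\le n-1$ would only yield the weaker estimate $2n^2-2n+1$.
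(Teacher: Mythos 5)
Your proof is correct and follows essentially the same approach as the paper: the key observation in both is the coordinate-wise monotonicity of $\distprofile{R}{x}{\cdot}$ along $P$, which lets one count milestones by the range of $\sum_{s\in R}\distprofile{R}{x}{\cdot}(s)$. The only difference is in the final bookkeeping: the paper observes that $\distprofile{R}{x}{v}(y)\geq 0$ for all $v\in V(P)$ (since $P$ is a shortest path to $R$, so $\dist(v,x)=\dist(v,R)\leq\dist(v,y)$) and that each coordinate is at most $|R|$, so the sum lies in $\{0,\ldots,|R|^2\}$; you instead use the triangle inequality $\distprofile{R}{x}{v_0}(s)\geq-\dist(x,s)$ and the sharper tree bound $\sum_{s\in R}\dist(x,s)\leq\binom{|R|}{2}$, which ends up giving the same (indeed marginally tighter) bound.
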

\begin{proof}
  Let $x$ be the unique vertex of $V(P)\cap R$.
  First observe that since $P$ is a shortest path from $v_0$ to~$R$, we have $\dist(v,y)\ge \dist(v,x)$ for every $v\in V(P)$ and every $y\in R$; hence $\distprofile{R}{x}{v}(y)\ge 0$.
  Also, since $R$ is connected, for every $y\in R$ we have $\distprofile{R}{x}{x}(y)\le |R|$.
  To conclude the proof, it suffices to prove that for all $v_1,v_2\in V(P)$ with $v_1\leq_P v_2$, we have \begin{equation}\label{eq:wydra}\distprofile{R}{x}{v_1}(y)\le \distprofile{R}{x}{v_2}(y)\qquad\textrm{for all }y\in R.\end{equation}
  Indeed, \eqref{eq:wydra} together with the previous observations shows that all the distinct distance profiles of the form $\distprofile{R}{x}{v}$ for $v\in V(P)$ can be treated as vectors of length $|R|$ with entries in $\{0,\ldots,|R|\}$, and they all have distinct sums $\sum_{y\in R} \distprofile{R}{x}{v}(y)$. Since these sums range between $0$ and $|R|^2$, the total number of distinct profiles is at most $|R|^2+1$, implying the same bound on the number of milestones.

  To see why \eqref{eq:wydra} holds, note that $\dist(v_1,y)\le \dist(v_1,v_2)+\dist(v_2,y)$ implies that $$\dist(v_1,y)\le \dist(v_1,v_2)+\distprofile{R}{x}{v_2}(y)+\dist(v_2,x)=\distprofile{R}{x}{v_2}(y) + \dist(v_1,x);$$
  the last equality follows from $P$ being a shortest path containing $v_1,v_2,$ and $x$ (in this order). This in turn implies that
  $\distprofile{R}{x}{v_1}(y)=\dist(v_1,y)-\dist(v_1,x)\le \distprofile{R}{x}{v_2}(y)$, as claimed.
\end{proof}

\subsection{Anchor-distance profiles}
\label{subsec:mdprofiles}

\paragraph{Shortest path collections.}
Let $G$ be a graph and $R$ be a subset of vertices of $G$.
We say that a collection $\mathcal{P}$ of paths in $G$
is an \emph{$R$-shortest path collection} if
\begin{itemize}[nosep]
  \item every $P \in \mathcal{P}$ is a shortest path from some $v^P \in V(G)$ to $R$, i.e., $|P|=\dist(v^P,R)$; and
  \item $R \subseteq \bigcup_{P \in \mathcal{P}} V(P)$.
\end{itemize}

For each $P \in \mathcal{P}$, we denote by $x^P$ the endpoint of $P$ in $R$.
Note that the collection $\mathcal{P}$ obtained by taking, for every $y\in R$, the zero-length path from $y$ to $y$, is an $R$-shortest path collection. 

We say that an $R$-shortest path collection is \emph{consistent} if, for every $P_1,P_2 \in \mathcal{P}$
and $v \in V(P_1) \cap V(P_2)$ the paths $P_1[v,x^{P_1}]$ and $P_2[v,x^{P_2}]$ are equal. That is, 
once two paths intersect, they continue together towards $R$. 

The following statement is a direct consequence of the definition of an $R$-shortest path collection.
\begin{observation}\label{obs:short}
  Let $G$ be a graph, $R$ be a subset of vertices of $G$, and $\mathcal{P}$ be an $R$-shortest path collection. Then for every two paths $P_1,P_2\in \mathcal{P}$ and every $v\in V(P_1)\cap V(P_2)$, we have $|P_1[v,x^{P_1}]|=|P_2[v,x^{P_2}]|$.
\end{observation}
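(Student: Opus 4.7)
The plan is to reduce the observation to the standard fact that every subpath of a shortest path is itself a shortest path of its kind. Concretely, the key claim to establish first is the following: for every $P \in \mathcal{P}$ and every $v \in V(P)$, the subpath $P[v, x^P]$ is a shortest path from $v$ to $R$, i.e.\ $|P[v,x^P]| = \dist(v, R)$. Once this is proven, the observation follows in one line, by applying the claim to both $P_1$ and $P_2$ at the common vertex $v$: each of $|P_1[v,x^{P_1}]|$ and $|P_2[v,x^{P_2}]|$ equals $\dist(v,R)$, and hence they are equal.

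To prove the key claim, I would argue by contradiction. Suppose for some $v \in V(P)$ there exists a path $Q$ from $v$ to some $y \in R$ with $|Q| < |P[v, x^P]|$. Concatenating the prefix $P[v^P, v]$ with $Q$ yields a walk from $v^P$ to $R$ (ending at $y$) of length $|P[v^P,v]| + |Q| < |P[v^P,v]| + |P[v,x^P]| = |P|$. From this walk one can extract a path from $v^P$ to $R$ of length strictly less than $|P|$, contradicting the defining property of $P$ that $|P| = \dist(v^P, R)$. Therefore no such $Q$ exists and $|P[v,x^P]| = \dist(v,R)$, as desired.

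There is no real obstacle here: the entire argument rests on the elementary observation that a shortest path from a vertex to a set remains shortest when restricted to any of its suffixes. I would present it as a short one-paragraph proof, with the two paths $P_1, P_2$ being handled symmetrically in a single final sentence that equates both suffix lengths to the common quantity $\dist(v, R)$.
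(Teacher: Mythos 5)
Your proof is correct and is exactly the argument the paper implicitly relies on (the paper simply asserts the observation as ``a direct consequence of the definition'' without writing it out). The reduction to the key claim that every suffix $P[v,x^P]$ has length $\dist(v,R)$, proved by the standard concatenation-and-shortcut contradiction, is the natural and intended route.
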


\paragraph{Anchor vertices and their prefixes.}
Let $G$ be a graph, $R$ be a subset of $V(G)$, and $\mathcal{P}$ be an $R$-shortest path collection. We denote by $H_{\mathcal{P}}$ the union of the paths in $\mathcal{P}$, i.e., the graph $(\bigcup_{P\in\mathcal{P}}V(P),\bigcup_{P\in\mathcal{P}}E(P))$.
We say that a vertex is an \emph{anchor vertex} if either it has degree more than two in $H_{\mathcal{P}}$
or it is a milestone of a path $P \in \mathcal{P}$.
We denote by $A_R(P)$ the set of all anchor vertices lying on a path $P \in \mathcal{P}$
and by $A_R(\mathcal{P})$ the set of all anchor vertices for $\mathcal{P}$.
Given a path $P\in\mathcal{P}$ with endpoints $v_0$ and $y\in R$, and an anchor vertex $w\in A_R(P)$,
the \emph{prefix of $w$ in $P$} is the vertex set of the maximal subpath $Q$ of
$P[v_0,v]$ satisfying the following: $v$ is the only anchor vertex of $P$ that belongs to $Q$.
Note that for every anchor $w\in V(P)$ there is a milestone $w'$ of $P$ (possibly $w=w'$)
such that the prefix of $w$ in $P$ is a subset of the neutral prefix of $w'$ in $P$.
Finally, for an anchor vertex $w$, the \emph{tail} of $w$, 
denoted $\mathrm{tail}(w)$, is the subgraph of $G$ consisting
of the union of all prefixes of $w$ in $P$ over all paths $P \in \mathcal{P}$ that contain $w$.

\paragraph{Hat-distances.}
Let $G$ be a graph, $R$ be a subset of vertices of $G$, and $\mathcal{P}$ be an $R$-shortest path collection.
We denote by $$U_{\mathcal{P}}\coloneqq V(G)-\bigcup_{P\in\mathcal{P}}V(P).$$
For every $u\in U_{\mathcal{P}}$,
and every anchor vertex $w\in A_R(\mathcal{P})$,
we set 
\[ \widehat{\dist}(u,w)\coloneqq \min\{|Q_{u,z}|+|P[z,w]|\colon
P \in \mathcal{P} \wedge w \in V(P) \wedge 
z\text{ is in the prefix of $w$ in $P$}\},\] 
where $Q_{u,z}$ is a shortest path from $u$ to $z$ with all its internal vertices in $U_{\mathcal{P}}$.
If such $Q_{u,z}$ does not exist for any $z \in V(\mathrm{tail}(w))$,
we set $\widehat{\dist}(u, w) \coloneqq \infty$. 

The following statement is a direct consequence of the definition of $\widehat{\dist}(\cdot,\cdot)$.

\begin{observation}
  \label{obs:dist}
  Let $G$ be a graph, $R$ be a subset of vertices of $G$, and $\mathcal{P}$ be
  an $R$-shortest path collection.
  Then for every $u\in U_{\mathcal{P}}$, we have that $$\dist(u,R)=\min \left\{\widehat{\dist}(u,w)+ \dist(w,R)
  \colon w\in A_R(\mathcal{P})\right\}.$$
\end{observation}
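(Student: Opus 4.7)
The plan is to prove the two inequalities $\dist(u,R) \leq \min_w [\widehat{\dist}(u,w) + \dist(w,R)]$ and $\dist(u,R) \geq \min_w [\widehat{\dist}(u,w) + \dist(w,R)]$ separately; both follow from essentially elementary reasoning about how a shortest walk from $u$ to $R$ must first leave the set $U_{\mathcal{P}}$, combined with the crucial fact that every path $P \in \mathcal{P}$ is itself a shortest path to $R$.

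For the inequality $\dist(u,R) \leq \widehat{\dist}(u,w) + \dist(w,R)$ for every $w \in A_R(\mathcal{P})$, I would simply unfold the definition of $\widehat{\dist}(u,w)$: if it is finite, pick $P \in \mathcal{P}$ with $w \in V(P)$ and $z$ in the prefix of $w$ in $P$ realizing the minimum, so $\widehat{\dist}(u,w) = |Q_{u,z}| + |P[z,w]|$. Concatenating $Q_{u,z}$, $P[z,w]$, and a shortest path from $w$ to $R$ yields a walk from $u$ to $R$ of length $\widehat{\dist}(u,w) + \dist(w,R)$, hence $\dist(u,R) \leq \widehat{\dist}(u,w) + \dist(w,R)$.

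For the reverse inequality, I would take a shortest path $S$ from $u$ to some $y \in R$, so $|S| = \dist(u,R)$. Since $u \in U_{\mathcal{P}}$ and $y \in R \subseteq V(H_{\mathcal{P}})$, the path $S$ must cross into $V(H_{\mathcal{P}})$; let $z$ be the first vertex of $S$ (reading from $u$) that lies on some $P \in \mathcal{P}$. Then the prefix $S[u,z]$ has all internal vertices in $U_{\mathcal{P}}$, which qualifies it as a candidate for $Q_{u,z}$ in the definition of $\widehat{\dist}$, so $|Q_{u,z}| \leq |S[u,z]|$. Now let $w \in A_R(P)$ be the unique anchor of $P$ whose prefix in $P$ contains $z$ (such $w$ exists because the prefixes of anchors partition $V(P)$). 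Using that $P$ is a shortest path to $R$, every vertex $v \in V(P)$ satisfies $|P[v,x^P]| = \dist(v,R)$, since otherwise one could shorten $P$; applying this at $z$ and at $w$ gives $\dist(z,R) = |P[z,w]| + \dist(w,R)$. Since $z$ lies on the shortest path $S$ from $u$ to $R$, the suffix $S[z,y]$ is a shortest path from $z$ to $R$, so $|S[z,y]| = \dist(z,R)$. Combining,
\[
|S| = |S[u,z]| + |S[z,y]| \geq |Q_{u,z}| + |P[z,w]| + \dist(w,R) \geq \widehat{\dist}(u,w) + \dist(w,R),
\]
which finishes the proof of the reverse inequality.

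The only mildly delicate step is the observation that the prefixes of the anchors of $P$ partition $V(P)$, so that the anchor $w$ containing $z$ is well-defined; this is a direct consequence of the definition of prefix and of the fact that $A_R(P)$ includes at least the endpoint $x^P \in R$. Everything else is a routine use of the shortest-path property of $P$ and of the definition of $\widehat{\dist}$, which is why the authors call the statement a direct consequence of the definition.
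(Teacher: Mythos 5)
Your proof is correct and is essentially the argument the authors have in mind when they label the statement a direct consequence of the definition; the paper simply omits the details. Both directions check out, including the observation that $x^P$ is always a milestone (hence an anchor), so the prefixes of anchors along $P$ partition $V(P)$ and the anchor $w$ whose prefix contains $z$ is well-defined.
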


\paragraph{Anchor-distance profiles.}
Let $G$ be a graph, $R$ be a subset of vertices of $G$,  and $\mathcal{P}$
be an $R$-shortest path collection.
The \emph{anchor-distance profile} of a vertex $u\in U_{\mathcal{P}}$ to $R$ with respect to $\mathcal{P}$
is a function $\diststarprofile{R}{\mathcal{P}}{u}$ mapping each $w \in A_R(\mathcal{P})$ to
\[\diststarprofile{R}{\mathcal{P}}{u}(w) \coloneqq \widehat{\dist}(u,w) + \dist(w,R)
 - \dist(u,R).\]
\Cref{obs:dist} implies that we always have $\diststarprofile{R}{\mathcal{P}}{u}(w) \geq 0$.
We set
\[\hatprofile{R}{\mathcal{P}}{u}(w)\coloneqq\min\{\diststarprofile{R}{\mathcal{P}}{u}(w),|R|+1\}.\]

\begin{lemma}\label{lem:hat-to-normal}
  Let $G$ be a graph, let $R$ be a connected subset of vertices of $G$, and $s_R\in R$. Also, let $\mathcal{P}$ be an $R$-shortest path collection.
  Then for all $u_1,u_2\in U_{\mathcal{P}}$,
  $$\hatprofile{R}{\mathcal{P}}{u_1}=\hatprofile{R}{\mathcal{P}}{u_2}\qquad\textrm{implies}\qquad \distprofile{R}{s_R}{u_1}=\distprofile{R}{s_R}{u_2}.$$
\end{lemma}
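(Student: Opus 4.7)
The plan is to exhibit an explicit formula that recovers the classical distance profile $\distprofile{R}{s_R}{u}$ from the hat-distance profile $\hatprofile{R}{\mathcal{P}}{u}$, so that equality of hat-profiles forces equality of distance profiles. Concretely, I aim to show that for every $u \in U_{\mathcal{P}}$ and $y \in R$,
\[\dist(u,y) - \dist(u,R) \;=\; \min_{w \in A_R(\mathcal{P})} \bigl(\hatprofile{R}{\mathcal{P}}{u}(w) + \dist(w,y) - \dist(w,R)\bigr),\]
where the right-hand side depends on $u$ only through $\hatprofile{R}{\mathcal{P}}{u}$. Since $\distprofile{R}{s_R}{u}(y) = (\dist(u,y)-\dist(u,R)) - (\dist(u,s_R)-\dist(u,R))$, this identity would immediately yield the conclusion.

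The core combinatorial step, which I would prove first, is the identity
\[\dist(u,y) \;=\; \min_{w \in A_R(\mathcal{P})} \bigl(\widehat{\dist}(u,w) + \dist(w,y)\bigr).\]
The $\geq$ direction is immediate: by construction $\widehat{\dist}(u,w)$ is the length of some walk from $u$ to $w$, so $\widehat{\dist}(u,w) \geq \dist(u,w)$, and the triangle inequality finishes the job. For the $\leq$ direction, take a shortest path $\pi$ from $u$ to $y$ and let $z$ be its first vertex inside $H_{\mathcal{P}}$; then $z$ sits on some $P \in \mathcal{P}$ in the prefix of a unique anchor $w$. Letting $w'$ be the milestone of $P$ whose neutral prefix contains that prefix, the ordering $z \leq_P w \leq_P w'$ holds, and applying \Cref{lem:dispref} at both $z$ and $w$ and subtracting yields $\dist(z,y) = |P[z,w]| + \dist(w,y)$. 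Since $\pi[u,z]$ has all internal vertices in $U_{\mathcal{P}}$, it is a candidate walk witnessing $\widehat{\dist}(u,w) \leq |\pi[u,z]| + |P[z,w]|$, and the $\leq$ direction follows.

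Next, I would subtract $\dist(u,R)$ from both sides and use the definition of $\diststarprofile{R}{\mathcal{P}}{u}$ to rewrite $\widehat{\dist}(u,w) = \diststarprofile{R}{\mathcal{P}}{u}(w) + \dist(u,R) - \dist(w,R)$, obtaining
\[\dist(u,y) - \dist(u,R) \;=\; \min_{w \in A_R(\mathcal{P})} \bigl(\diststarprofile{R}{\mathcal{P}}{u}(w) + \dist(w,y) - \dist(w,R)\bigr).\]
This already shows that $\distprofile{R}{s_R}{u}$ is determined by $\diststarprofile{R}{\mathcal{P}}{u}$; it remains to argue that the truncation at $|R|+1$ present in $\hatprofile{R}{\mathcal{P}}{u}$ does not change this minimum. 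Since $R$ is connected with at most $|R|$ vertices, one has $\dist(u,y) - \dist(u,R) \leq |R|-1$, while any anchor $w$ with $\diststarprofile{R}{\mathcal{P}}{u}(w) > |R|$ contributes a term of value at least $|R|+1 > |R|-1$ and therefore cannot realize the minimum. Replacing $\diststarprofile{R}{\mathcal{P}}{u}$ by $\hatprofile{R}{\mathcal{P}}{u}$ thus leaves the minimum unchanged, yielding the desired formula.

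The main obstacle is the proof of the $\leq$ direction in the key identity: the exit vertex $z$ need not itself be an anchor, so one has to route through the anchor $w$ governing its prefix and use \Cref{lem:dispref} at the associated milestone $w'$ to transfer distance computations from $z$ to $w$. The resulting telescoping is clean, but it hinges on the structural observation that the anchor-prefix of $w$ sits inside the neutral prefix of some milestone $w'$ further along $P$.
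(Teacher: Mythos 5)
Your proof is correct. The key identity
\[\dist(u,y) = \min_{w \in A_R(\mathcal{P})}\bigl(\widehat{\dist}(u,w) + \dist(w,y)\bigr)\]
holds: the $\geq$ direction is the triangle inequality, and your $\leq$ direction is sound — $z$ lies in the prefix of anchor $w$, which sits inside the neutral prefix of a milestone $w'$, and applying Lemma~\ref{lem:dispref} at both $z$ and $w$ (both lying in the neutral prefix of $w'$) and subtracting indeed gives $\dist(z,y) = |P[z,w]| + \dist(w,y)$. The truncation argument is also fine: any $w$ with $\diststarprofile{R}{\mathcal{P}}{u}(w) \geq |R|+1$ contributes a term of at least $|R|+1$ (since $\dist(w,y)\geq\dist(w,R)$), while the true value $\dist(u,y)-\dist(u,R)$ is at most $|R|-1$ by connectivity of $R$, so the cap never distorts the minimum in either direction.

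The paper's proof contains the same core combinatorial step — its Claim~3.8 is exactly your $\leq$ direction, strengthened to record that the witnessing anchor also has $\hatprofile{R}{\mathcal{P}}{u}(w)\leq |R|$ — but then packages the remainder differently. The paper fixes two vertices $u_1,u_2$ with equal hat-profiles, pins down a common offset $c = \dist(u_1,R)-\dist(u_2,R)$ by matching the anchors realizing the $\dist(\cdot,R)$ minimum, and establishes $\dist(u_1,y)=\dist(u_2,y)+c$ via a two-sided exchange argument. You instead exhibit an explicit \emph{inversion formula} expressing $\dist(u,y)-\dist(u,R)$ as a function of $\hatprofile{R}{\mathcal{P}}{u}$ alone, from which the conclusion follows by writing $\distprofile{R}{s_R}{u}(y)$ as a difference of two such quantities. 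The two proofs rest on identical insight, but your presentation is arguably cleaner: it isolates a reusable recovery formula and makes the dependence on the hat-profile explicit, rather than interleaving the structural lemma with a comparison of two chosen vertices. One small remark: your argument is also slightly more careful than the paper's, since the paper invokes Lemma~\ref{lem:dispref} on an anchor $w$ that need not be a milestone, silently passing through the enclosing milestone $w'$, whereas you spell out this detour explicitly.
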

\begin{proof}
  Fix $u_1,u_2\in U_{\mathcal{P}}$ with
  $\hatprofile{R}{\mathcal{P}}{u_1}=\hatprofile{R}{\mathcal{P}}{u_2}$.
  We start by proving the following.

  \begin{claim}\label{cl:bnd}
    Let $u\in U_{\mathcal{P}}$ and $y\in R$. There is an anchor $w \in A_R(\mathcal{P})$
    such that
    \begin{itemize}[nosep]
      \item $\widehat{\dist}(u,w)+\dist(w,y) = \dist(u,y)$ and
      \item $\hatprofile{R}{\mathcal{P}}{u}(w)\le |R|$.
    \end{itemize}
  \end{claim}
  \begin{proof}
    Let $Q$ be a shortest path from $u$ to $y$ and let $P\in\mathcal{P}$ be the path which $Q$ first intersects
    (if the first vertex of $Q$ in $\bigcup_{P \in \mathcal{P}} V(P)$ belongs to more than one paths in~$\mathcal{P}$,
    we choose $P$ arbitrarily among these paths). 
    Also, let $u'$ be the first vertex of $Q$ (when ordering from $u$ to $y$) in $V(P)$
    and $w$ be the anchor of $P$ that contains $u'$ in its prefix (in $P$).
    Note that $u' \in V(\mathrm{tail}(w))$.

    We first show that 
    \begin{equation}\label{eq:bnd:1}\widehat{\dist}(u,w)+\dist(w,y)= \dist(u,y).\end{equation}
    By~\Cref{lem:dispref} and the fact that $|Q[u',y]| = \dist(u',y)$, we have
    \begin{equation}\label{eq:I}
      \dist(w,y) = |Q[u',y]|-|P[u',w]|.
    \end{equation}
    Also, by definition, we have
    \begin{equation}\label{eq:II}
      \widehat{\dist}(u,w)\le |Q[u,u']|+|P[u',w]|.
    \end{equation}
    By~\eqref{eq:I} and~\eqref{eq:II}, we get that $\widehat{\dist}(u,w)+\dist(w,y)\le |Q|$.
    Moreover, since $Q$ is a shortest path from $u$ to $y$ and $\widehat{dist}(u,w) \geq \dist(u,w)$,
    we have
    \[ |Q| = \dist(u,y) \leq \dist(u,w) + \dist(w,y) \leq \widehat{dist}(u,w) + dist(w,y).\]
    This proves~\eqref{eq:bnd:1}.
 
    Next, we show that $\hatprofile{R}{\mathcal{P}}{u}(w)\le |R|$. Note that
    \begin{align*}
      \diststarprofile{R}{\mathcal{P}}{u}(w) + \dist(u,R) & = \widehat{\dist}(u,w)+\dist(w,R)\\
      & \le \widehat{\dist}(u,w)+\dist(w,y)= \dist(u,y).
    \end{align*}
    The connectivity of $R$ implies that $\dist(u,y)\le \dist(u,R)+|R|$, which gives
    $\diststarprofile{R}{\mathcal{P}}{u}(w)\le |R|$, and the claim follows.
  \end{proof}

  We next show that
  there is an integer $c$ such that for every $y\in R$, we have
  $$\dist(u_1,y)=\dist(u_2,y)+c.$$
  Note that this will immediately imply that
  $\distprofile{R}{s_R}{u_1}=\distprofile{R}{s_R}{u_2}$.

  By Observation~\ref{obs:dist}, for every $h \in \{1,2\}$, there is an anchor $w_h \in A_R(\mathcal{P})$
  such that $\dist(u_h,R) = \widehat{\dist}(u_h,w_h) + \dist(w_h, R)$, which is equivalent to
  $\diststarprofile{R}{\mathcal{P}}{u_h}(w_h) = 0$. 
  If $w_h$ lies on $P_h \in \mathcal{P}$, then $\dist(u_h,R) = \dist(u_h,x^{P_h})$. 
  Therefore, as $\diststarprofile{R}{\mathcal{P}}{u_1}=\diststarprofile{R}{\mathcal{P}}{u_2}$,
  we can choose $w_1 = w_2$ and $P_1 = P_2$, hence $x^{P_1} = x^{P_2}$. 
  In other words, there exists $x\in R$ such that $\dist(u_1,R) = \dist(u_1,x)$ and $\dist(u_2,R) = \dist(u_2,x)$.
  We set $c\coloneqq \dist(u_1,x)-\dist(u_2,x)=\dist(u_1,R)-\dist(u_2,R)$.

  Now, fix $y\in R$.
  Let $w_1\in A_R(P)$ be the anchor from~\Cref{cl:bnd} (applied for $u_1$ and $y$).
  As $\hatprofile{R}{\mathcal{P}}{u_1} = \hatprofile{R}{\mathcal{P}}{u_2}$ and
  $\diststarprofile{R}{\mathcal{P}}{u_1}(w_1)\le |R|$, we have
   $\diststarprofile{R}{\mathcal{P}}{u_1}(w_1) = \diststarprofile{R}{\mathcal{P}}{u_2}(w_1)$, i.e.,
  \[\widehat{\dist}(u_1,w_1) + \dist(w_1,R) - \dist(u_1,R) = \widehat{\dist}(u_2,w_1) + \dist(w_1,R)- \dist(u_2,R).\]
  Therefore,
  \begin{align*}
    \dist(u_1,y) &= \widehat{\dist}(u_1,w_1) + \dist(w_1,y)\\
      & =\widehat{\dist}(u_2,w_1) + \dist(w_1,y)+c\geq \dist(u_2,y)+c;
  \end{align*}
  the first equality follows from~\Cref{cl:bnd}.
  Thus $\dist(u_2,y)+c\le \dist(u_1,y)$.
  A symmetric reasoning shows that also $\dist(u_1,y)-c\le\dist(u_2,y)$.
  Therefore we get $\dist(u_1,y) = \dist(u_2,y)+c$, as required.
\end{proof}

\subsection{Reduction from bounded genus graphs to planar graphs}

We next recall several definitions related to embeddings of graphs on surfaces.
Our basic terminology follows~\cite{MoharT01grap}.
We say that a graph $H$ embedded in a surface $\Sigma$ is a {\em{simple cut-graph}} of $\Sigma$ if $H$ has a single face that is also homeomorphic to an open disk; equivalently, $H$ has a single facial walk.
Given a surface $\Sigma$ and a simple cut-graph $H$ on $\Sigma$, we denote by $\Sigma\cutgraph H$ the surface obtained by cutting $\Sigma$ along~$H$. Note that, provided $H$ is a simple cut-graph, $\Sigma\cutgraph H$ is always a disk.

Suppose now that a graph $G$ embedded on  $\Sigma$ and $H$ is a subgraph of $G$ that is a simple cut-graph of $H$.
We define $G\cutgraph H$ to be the graph embedded on $\Sigma\cutgraph H$ obtained from $G$ as follows.
First, let $\sigma$ be the (unique) facial walk of $H$ and note that each edge $e$ of $H$ is contained exactly twice in $\sigma$ and each vertex $v$ of $H$ is contained in $\sigma$ as many times as the degree of $v$ in $H$.
To obtain $G\cutgraph H$, we replace $H$ with a simple cycle $C_\sigma$ whose vertex set is the set of copies of vertices of $H$ and its edge set is the set of copies of edges of $H$ in the obvious way. Notice that $\sigma$ also prescribes for every edge $uv$ of $G$ between a vertex $u\in V(G)\setminus V(H)$ and a vertex $v\in V(H)$, to which copy of $v$ in $G\cutgraph H$ the vertex $u$ should be adjacent to (in $G\cutgraph H$).
See~\Cref{fig:cutopen} for an illustration.

\begin{figure}[ht]
  \centering
  \includegraphics[width=0.8\textwidth]{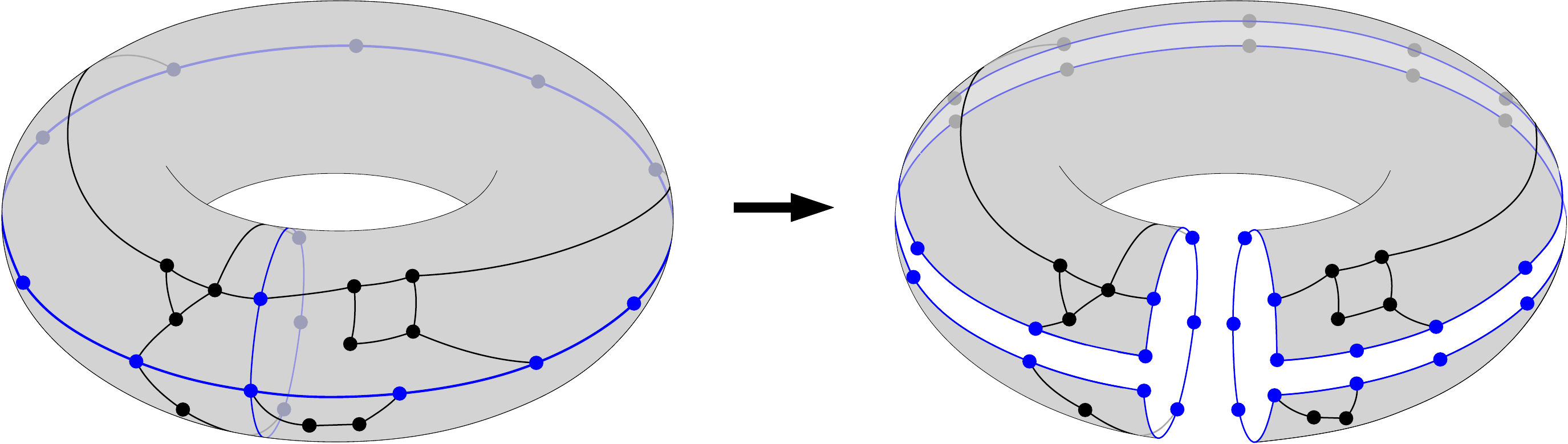}
  \caption{Left: A graph $G$ embedded on a surface $\Sigma$ and a subgraph $H$ of $G$ (in blue) that is a simple cut-graph of $\Sigma$. Right: The graph $G\cutgraph H$ embedded on the surface $\Sigma\cutgraph H$ (which is homeomorphic to a disk); the blue vertices/edges are copies of the vertices/edges of $H$.}
  \label{fig:cutopen}
\end{figure}

We will use the following well-known result which appears in the literature under different formulations; see e.g.~\cite{BorradaileDT14,CabelloCL12algo,EricksonW05}.

\begin{lemma}\label{lem:genuscut}
  For every integer $k\ge 1$ and for every edge-weighted connected graph $G$ embedded on a surface $\Sigma$ of Euler genus at most $k$ and every vertex $u\in V(G)$, there is a subgraph $H$ of $G$ with the following properties:
  \begin{itemize}[nosep]
    \item  $H$ is a simple cut-graph of $\Sigma$, and
    \item  $V(H)$ is the union of the vertex sets of $\mathcal{O}(k)$ shortest paths in $G$ that have $u$ as a common endpoint.
  \end{itemize}
  \end{lemma}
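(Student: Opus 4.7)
The plan is to use the classical tree--cotree decomposition, taking the primal spanning tree to be a shortest-path tree rooted at $u$. Without loss of generality I assume that $G$ is cellularly embedded on $\Sigma$; otherwise I restrict $\Sigma$ to the minimum-genus surface on which $G$ admits a cellular embedding (this only decreases the Euler genus, and a simple cut-graph produced there remains one after reattaching the surplus handles/crosscaps inside the single face).

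First I would compute a shortest-path tree $T$ of $G$ rooted at $u$; each $T[u,v]$ is then a shortest $u$-to-$v$ path in $G$. Next I would consider the subgraph of the dual $G^*$ induced by the duals of the edges in $E(G)\setminus E(T)$; a classical fact is that this is a connected spanning subgraph of $G^*$. Let $T^*$ be a spanning tree of it, and let $X\subseteq E(G)$ be the set of edges that are neither in $E(T)$ nor correspond to an edge of $T^*$. Combining Euler's formula $|V(G)|-|E(G)|+|F(G)|=2-k'$, where $k'\le k$ is the Euler genus of the embedding, with $|E(T)|=|V(G)|-1$ and $|E(T^*)|=|F(G)|-1$, I obtain $|X|=k'\le k$.

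For each edge $e=xy\in X$, I define the fundamental loop $L_e\coloneqq T[u,x]\cup\{e\}\cup T[u,y]$ and set $H\coloneqq\bigcup_{e\in X}L_e$. Since every $L_e$ is the union of two shortest paths from $u$ together with the single edge $e$, the vertex set $V(H)$ is the union of the vertex sets of at most $2|X|\le 2k$ shortest paths in $G$, all sharing $u$ as an endpoint. This immediately settles the second bullet of the lemma.

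The remaining task, and the principal technical step, is to verify that $H$ is a simple cut-graph of $\Sigma$, i.e., that $\Sigma\cutgraph H$ is homeomorphic to an open disk. This is exactly what the tree--cotree decomposition yields: after cutting along $H$, the edges of $T$ that do not lie on $H$ together with the primal counterparts of the edges of $T^*$ form a subgraph of $G\cutgraph H$ whose dual is connected and spanning on the cut surface, and a direct Euler-characteristic count shows that $\Sigma\cutgraph H$ has Euler characteristic $1$, hence is a single open disk. The underlying topological bookkeeping is standard but the most delicate part of the argument; I would appeal to the existing literature such as~\cite{EricksonW05,BorradaileDT14,CabelloCL12algo} for the full verification rather than rederive it from scratch.
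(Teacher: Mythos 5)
Your proposal is the standard tree--cotree/greedy-system-of-loops argument due to Erickson and Whittlesey, which is precisely what the paper has in mind: the paper does not prove this lemma but cites it as folklore from the references \cite{EricksonW05,BorradaileDT14,CabelloCL12algo}, and your reconstruction matches that literature. The Euler-characteristic count showing $|X|=k'$, the observation that the union of fundamental loops $\bigcup_{e\in X} L_e$ is the reduction of the cut-graph $T\cup X$ obtained by iteratively pruning degree-one hairs (which does not change the homeomorphism type of the cut-open surface), and the bound $V(H)\subseteq\bigcup_{e\in X}\bigl(V(T[u,x_e])\cup V(T[u,y_e])\bigr)$ giving at most $2k$ root-to-vertex shortest paths, are all sound.

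One parenthetical remark is off, though. You handle the non-cellular case by re-embedding $G$ cellularly on a lower-genus surface $\Sigma'$ and asserting that the cut-graph $H$ built there ``remains one after reattaching the surplus handles/crosscaps inside the single face.'' That is not correct: gluing extra handles or crosscaps into the unique disk face of $H$ turns that face into a surface of positive genus, so $H$ is no longer a simple cut-graph of the enlarged surface $\Sigma$. In fact, if $G$ is not cellularly embedded on $\Sigma$ then \emph{no} subgraph of $G$ can be a simple cut-graph of $\Sigma$ (any non-disk face of $G$ sits inside some face of $H$ and prevents it from being a disk), so the lemma is really a statement about cellular embeddings. The honest fix is not to patch $\Sigma$ but simply to observe that for the intended application in the proof of Theorem~\ref{thm:distprofiles} one is free to choose any embedding of $G_0$, in particular a cellular one on a minimum-genus surface, and the conclusion that $G\cutgraph H$ is disk-embedded is all that is ever used.
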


We are now ready to proceed to the proof of~\Cref{thm:distprofiles}.
Employing~\Cref{lem:hat-to-normal},
we aim at bouding the VC-dimension of the set system defined by the anchor-distance profiles.
This is can be done by a suitable reduction to the planar setting using~\Cref{lem:genuscut}.

\begin{proof}[Proof of~\Cref{thm:distprofiles}]
  We assume that $G$ is connected -- the distance profiles of all vertices that are not connected to $R$ are equal.
  Let $T_R$ be a spanning tree of $G[R]$ and let $G_0$ be the graph obtained from $G$ after contracting $T_R$ into a single vertex $v_R$. 
  Consider an embedding of $G_0$ on a surface $\Sigma$ of Euler genus at most $k$.
  By~\Cref{lem:genuscut}, there is a subgraph $H_0$ of $G_0$ that is a simple cut-graph of $\Sigma$ and a family $\mathcal{P}_0$ of $\mathcal{O}(k)$ shortest
  paths in $G_0$, each with $v_R$ as an endpoint, such that $V(H_0)=\bigcup_{P\in\mathcal{P}_0}V(P)$.
  Furthermore, as Lemma~\ref{lem:genuscut} handles edge weights, we can slightly perturb
  the weights so that shortest paths in $G_0$ are unique and, in particular, 
  all shortest paths with one endpoint in $v_R$ form a tree.
  Since $H_0$ is a simple cut-graph of $\Sigma$, $G_0\cutgraph H_0$ is disk-embedded.
  Uncontracting $T_R$, we get a subgraph $H$ of $G$ such that $G\cutgraph H$ is disk-embedded.
  Let $\mathcal{P}$ be the family of projections of the paths of $\mathcal{P}_0$
  onto $G$ plus, for every $y \in R$, a zero-length path from $y$ to $y$. 
  Hence, $\mathcal{P}$ is an $R$-shortest paths collection of size $\mathcal{O}(k)$
  with $V(\mathcal{P}) = V(H)$.
  Furthermore, since in $G_0$ the paths of $\mathcal{P}_0$ formed a tree rooted
  at $v_R$, $\mathcal{P}$ is consistent.

  Note that due to~\Cref{lem:mile} we have that $\sum_{P\in\mathcal{P}} |M_R(P)|\le \mathcal{O}_k(|R|^2)$.
  Also, since $\mathcal{P}$ is consistent, 
  if $B$ are the vertices that are not in $R$
  (recall that vertices in $R$ are milestones) and have degree more than two in the graph obtained by the union of the paths in $\mathcal{P}$, then $|B|\leq |\mathcal{P}|-1$.
  Hence,
  \begin{equation}
    \sum_{P\in\mathcal{P}} |A_R(P)|\le \mathcal{O}_k(|R|^2).\label{eq:mile}
  \end{equation}
  We set $\mathcal{T}$ be the set of all vertices of $G\cutgraph H$
  that are copies of the anchor vertices $A_R(\mathcal{P})$.
  Every anchor vertex has $\mathcal{O}_k(1)$ copies in $\mathcal{Q}$
  and therefore, due to~\eqref{eq:mile},
  \begin{equation}
    |\mathcal{T}|=\mathcal{O}_k(|R|^2).\label{eq:term_size}
  \end{equation}
  For $s \in \mathcal{T}$, let $w(s) \in A_R(\mathcal{R})$ be the anchor vertex
  whose copy (in $G \cutgraph H$) is $s$. In the other direction, for $w \in A_R(\mathcal{R})$, let $S(w)$
  be the set of copies of $w$ in $G \cutgraph H$. 
  
  Let $U$ be the set of vertices of $G\cutgraph H$ that are \textsl{not} copies of vertices from $H$ (i.e., $U=V(G)\setminus V(H)$).
  We set $E_{\mathsf{out}}$ be the set of all edges $uv$ of $G\cutgraph H$ where $u\in U$ and $v$ is a copy of a vertex from $H$,
  i.e., $v\in V(G\cutgraph H)\setminus U$.
  We also set $E_{\mathsf{next}}$ be the set of all edges $uv$ of $G\cutgraph H$ where $u$ is a copy
  of an anchor vertex $w\in A_R(P)$ for some $P\in\mathcal{P}$ and $v$ is a copy of the neighbor of $w$
  in $P$ that \textsl{is not} in the prefix of $w$ in $P$.
  
  Let now $\widehat{G}$ be the graph obtained from $G\cutgraph H$ after the following modifications:
  \begin{itemize}[nosep]
    \item we subdivide $|V(G)|$-many times each edge in $E_{\mathsf{out}}\cup E_{\mathsf{next}}$,
    \item we introduce a new vertex $t$ and add, for every $s\in\mathcal{T}$, a path between $t$
    and $s$ of length $$d_{w(s),t}\coloneqq |V(G)| + \dist_G(w(s),R).$$
  \end{itemize}
  See~\Cref{fig:hatG}.
  Observe that since $G\cutgraph H$ is disk-embedded, $\widehat{G}$ is planar,
  because we may embed $t$ together with all the added paths outside of the disk containing $G\cutgraph H$.

  \begin{figure}[ht]
    \centering
    \includegraphics[width=0.3\textwidth]{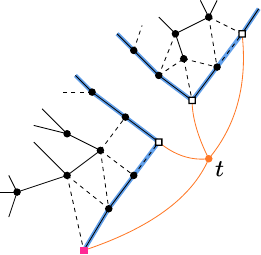}
    \caption{An illustration of (a part of) the construction of the graph $\widehat{G}$. The squared vertices are copies of anchor vertices. The marked squared vertex is also a copy of a vertex in $R$. The highlighted edges are copies of edges of $H$ in $G\cutgraph H$, while the paths obtained by subdividing the edges of $E_{\mathsf{out}}\cup E_{\mathsf{next}}$ are depicted with dashed edges. Edges adjacent to $t$ correspond to paths of appropriate length.}
    \label{fig:hatG}
  \end{figure}

  For every $u\in U$, we define a function $\pi[u]$, mapping every $w\in A_R(\mathcal{P})$ to
  \[\pi[u](w)\coloneqq\min\{\dist_{\widehat{G}}(u,s): s \in S(w)\}+d_{w,t}-\dist_{\widehat{G}}(u,t).\]
  Also, we set $\widehat{\mathcal{X}}\coloneqq\{\widehat{X}_u\mid u\in U\}$,
  where for $u\in U$,
  $$\widehat{X}_u\coloneqq\left\{(w,i) \in A_R(\mathcal{P}) \times \{0,\ldots,|R|+1\}~|~i \leq \pi[u](w)\right\}.$$

  \begin{claim}\label{cl:vcd}
    The set system $\widehat{\mathcal{X}}$ has size $\mathcal{O}_k(|R|^{12})$.
  \end{claim}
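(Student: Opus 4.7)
My plan is to argue that each $\widehat{X}_u$ is uniquely determined by a small restriction of the genuine distance profile of $u$ in $\widehat{G}$ to the set $\mathcal{T}\cup\{t\}$ relative to $t$. Since $\widehat{G}$ is planar (hence $K_5$-minor-free), Theorem~\ref{thm:LW} will bound the VC-dimension of the associated level-set system by $4$, and then the Sauer--Shelah lemma (Lemma~\ref{lem:sauer-shelah}) will close the argument.

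The first step is to unfold the definition of $\pi[u](w)$ to get
\[
\pi[u](w) \;=\; \min_{s \in S(w)}\bigl[\dist_{\widehat{G}}(u,s) - \dist_{\widehat{G}}(u,t)\bigr] + d_{w,t}.
\]
Hence, for each $i \in \{0,\ldots,|R|+1\}$, the pair $(w,i)$ belongs to $\widehat{X}_u$ if and only if $\dist_{\widehat{G}}(u,s) - \dist_{\widehat{G}}(u,t) \ge i - d_{w,t}$ for every $s \in S(w)$. So if I introduce the ``full'' profile level set
\[
X_u \;\coloneqq\; \bigl\{(s,j) \in \mathcal{T}\times\mathbb{Z} : j \le \dist_{\widehat{G}}(u,s) - \dist_{\widehat{G}}(u,t)\bigr\}
\]
together with the finite ``relevance window''
\[
Z \;\coloneqq\; \bigl\{(s,\, i - d_{w(s),t}) : s \in \mathcal{T},\ i \in \{0,\ldots,|R|+1\}\bigr\},
\]
then $\widehat{X}_u$ is fully determined by the trace $X_u \cap Z$. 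By~\eqref{eq:term_size}, $|Z| \le |\mathcal{T}|\cdot(|R|+2) = \mathcal{O}_k(|R|^3)$.

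To finish, I apply Theorem~\ref{thm:LW} to the planar graph $\widehat{G}$ with the set $\mathcal{T}\cup\{t\}$ and reference vertex $t$ to conclude that the set system $\{X_u : u \in V(\widehat{G})\}$ (viewed on the ground set $\mathcal{T}\times\mathbb{Z}$, after dropping the trivial $s=t$ coordinate) has VC-dimension at most $4$. Restriction to the finite subset $Z$ can only decrease the VC-dimension, so Lemma~\ref{lem:sauer-shelah} yields
\[
|\{X_u \cap Z : u \in U\}| \;=\; \mathcal{O}(|Z|^4) \;=\; \mathcal{O}_k(|R|^{12}),
\]
and since $\widehat{X}_u$ is a function of $X_u \cap Z$, the same bound transfers to $|\widehat{\mathcal{X}}|$.

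I do not expect any real obstacle at this step: all the work has already been done in setting up $\widehat{G}$ so as to be planar while the thresholds $d_{w(s),t}$ are constants of the construction independent of $u$. This is precisely what makes the window $Z$ uniform across all $u$ and of size only $\mathcal{O}_k(|R|^3)$, turning the claim into a direct application of the planar VC-dimension bound.
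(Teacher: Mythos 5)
Your proposal is correct and follows essentially the same strategy as the paper: apply Theorem~\ref{thm:LW} to the planar graph $\widehat{G}$ with ground set $\mathcal{T}\cup\{t\}$ and pivot $t$ to get VC-dimension at most $4$, restrict to a finite window of size $\Oh_k(|R|^3)$, and invoke Sauer--Shelah. The only cosmetic difference is that the paper shifts the set system by $d_{w(s),t}$ and then crops to $\mathcal{T}^+\times\{0,\dots,|R|+1\}$, whereas you keep the set system unshifted and instead shift the window $Z$; these are equivalent.
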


  \begin{proof}
  We set $\mathcal{T}^+\coloneqq \mathcal{T}\cup \{t\}$.
  We start with the set system $\mathcal{C}^1\coloneqq\left\{C_u^1~\colon~u \in U\right\}$, where
  $$C_u^1\coloneqq\left\{(s,i) \in \mathcal{T}^+ \times \mathbb{Z}~|~i \leq \dist_{\widehat{G}}(u,s)-\dist_{\widehat{G}}(u,t)\right\}.$$
  As $\widehat{G}$ is planar, by~\Cref{thm:LW} we infer that $\mathcal{C}$ has VC-dimension at most 4.

  We now ``shift columns'' of $\mathcal{C}^1$. That is, define $\mathcal{C}^2 \coloneqq \left\{C_u^2~\colon~u \in U\right\}$, where
  $$C_u^2\coloneqq\left\{(s,i) \in \mathcal{T}^+ \times \mathbb{Z}~|~i \leq \dist_{\widehat{G}}(u,s)+d_{w(s),t}-\dist_{\widehat{G}}(u,t)\right\}.$$
  Clearly, the VC-dimension of $\mathcal{C}^1$ and $\mathcal{C}^2$ are equal: a set $Z \subseteq \mathcal{T}^+ \times \mathbb{Z}$
  shatters $\mathcal{C}^1$ if and only if the set $\{(s,d_{w(s),t}+i)~\colon~(s,i) \in Z\}$ shatters $\mathcal{C}^2$. 

  Now, let $\mathcal{C}^3$ be ``cropped'' $\mathcal{C}^2$:
  $\mathcal{C}^3 \coloneqq \left\{ C_u^3~\colon~u \in U\right\}$, where
  \[ C_u^3 \coloneqq C_u^2 \cap \left(\mathcal{T}^+ \times \{0,\ldots, |R|+1\}\right). \]
  Since restricting to a smaller universe cannot increase VC-dimension, $\mathcal{C}^3$ has VC-dimension at most $4$. 
  Since $|\mathcal{T}^+| = \Oh_k(|R|^2)$, by Sauer-Shelah lemma (Lemma~\ref{lem:sauer-shelah})
  we have $|\mathcal{C}^3| = \Oh_k(|R|^{12})$. 

  Now observe that for every $u_1,u_2 \in U$
  \begin{equation}\label{eq:CtoX}
   C^3_{u_1} = C^3_{u_2} \qquad \mathrm{implies} \qquad \widehat{X}_{u_1} = \widehat{X}_{u_2}. 
  \end{equation}
  Indeed, the assumption $C^3_{u_1} = C^3_{u_2}$ implies that
  for every $w \in A_R(\mathcal{P})$ and $s \in S(W)$ we have
  \begin{align*}
  & \max(0, \min(|R|+1, \dist_{\widehat{G}}(u_1,s)+d_{w,t}-\dist_{\widehat{G}}(u_1,t))) \\ 
  &= \max(0, \min(|R|+1, \dist_{\widehat{G}}(u_2,s)+d_{w,t}-\dist_{\widehat{G}}(u_2,t))).
\end{align*} 
  For fixed $w \in A_R(\mathcal{P})$, we take a minimum of the above expression over all $s \in S(w)$, obtaining:
  \begin{align*}
    & \max(0, \min(|R|+1, \min\{\dist_{\widehat{G}}(u_1,s)~\colon~s \in S(w)\}+d_{w,t}-\dist_{\widehat{G}}(u_1,t))) \\ 
    &= \max(0, \min(|R|+1, \min\{\dist_{\widehat{G}}(u_2,s)~\colon~s \in S(w)\}+d_{w,t}-\dist_{\widehat{G}}(u_2,t))).
  \end{align*} 
  This proves~\eqref{eq:CtoX}.
  From~\eqref{eq:CtoX}, we infer $|\widehat{\mathcal{X}}| \leq |\mathcal{C}^3| = \Oh_k(|R|^{12})$, as desired.
  \end{proof}

  We next relate the distance from a vertex $u\in U$ to $R$ (in $G$) and to $t$ (in $\widehat{G}$).
  \begin{claim}\label{cl:dist}
    For every $u\in U$, $\dist_G(u,R)=\dist_{\widehat{G}}(u,t) - 2|V(G)|$.
  \end{claim}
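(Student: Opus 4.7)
The plan is to prove the claimed equality by establishing the two inequalities separately. Both sides reflect the length of a shortest route from $u$ to $R$, and the additive term $2|V(G)|$ will be absorbed by two sources: the subdivision of the single $E_{\mathsf{out}}$ edge one must cross to leave $U$, and the built-in offset $|V(G)|$ in the length of the pendant path connecting $t$ with $\mathcal{T}$.

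For the upper bound $\dist_{\widehat{G}}(u,t) \le \dist_G(u,R) + 2|V(G)|$, I would use~\Cref{obs:dist} to pick an anchor $w \in A_R(\mathcal{P})$ satisfying $\dist_G(u,R) = \widehat{\dist}(u,w) + \dist_G(w,R)$, and unfold the definition of $\widehat{\dist}(u,w)$ into a shortest $u$-to-$z$ path $Q_{u,z}$ with internal vertices in $U_{\mathcal{P}}$ together with a subpath $P[z,w]$ on some $P \in \mathcal{P}$. Lifting this concatenation into $\widehat{G}$ produces a $u$-to-$t$ walk consisting of three parts: $Q_{u,z}$ (whose unique edge from $U$ to $V(H)$ lies in $E_{\mathsf{out}}$ and therefore contributes $|V(G)|+1$ instead of $1$ after subdivision); the subpath $P[z,w]$ embedded into $C_\sigma$ (whose edges stay strictly within the prefix of $w$, so they lie in neither $E_{\mathsf{out}}$ nor $E_{\mathsf{next}}$ and are not subdivided); and the pendant path of length $d_{w,t} = |V(G)| + \dist_G(w,R)$ from some $s \in S(w)$ to $t$. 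Summing these contributions gives $\widehat{\dist}(u,w) + |V(G)| + \dist_G(w,R) + |V(G)| = \dist_G(u,R) + 2|V(G)|$.

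For the lower bound $\dist_{\widehat{G}}(u,t) \ge \dist_G(u,R) + 2|V(G)|$, I would take a shortest $u$-to-$t$ path $\widehat{Q}$ in $\widehat{G}$ and observe that it must enter $t$ via the pendant path of a single $s \in \mathcal{T}$, contributing exactly $d_{w(s),t}$. Projecting the remaining portion $\widehat{Q}[u,s]$ first to $G \cutgraph H$ by suppressing subdivision vertices and then to $G$ by identifying the copies on $C_\sigma$ yields a walk $W'$ from $u$ to $w(s)$ in $G$ of length at least $\dist_G(u,w(s))$. Because $u$ has no copy on $C_\sigma$ whereas $s$ does, the projected walk in $G \cutgraph H$ must use at least one $E_{\mathsf{out}}$ edge, whose subdivision adds a further $|V(G)|$ to the actual length in $\widehat{G}$. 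Combining this with the length of the pendant and applying the triangle inequality $\dist_G(u,w(s)) + \dist_G(w(s),R) \ge \dist_G(u,R)$ yields the bound.

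The only genuinely fiddly point is in the upper-bound construction: one has to verify that $P[z,w]$ lifts to an actual path inside $C_\sigma$ terminating at some fixed $s \in S(w)$. Consistency of the $R$-shortest path collection $\mathcal{P}$ ensures that its paths behave like the branches of a tree rooted at $R$, so the copies along $C_\sigma$ of the vertices of a single $P \in \mathcal{P}$ appear in a consistent cyclic pattern and a matching copy of the subpath $P[z,w]$ is available for each $s \in S(w)$. This verification is combinatorial bookkeeping rather than a substantive obstacle.
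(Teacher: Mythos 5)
Your proposal is correct and follows essentially the same two-inequality route as the paper: lift a near-optimal $u$-to-$R$ route in $G$ (you do this via unfolding $\widehat{\dist}$ and \Cref{obs:dist}, the paper via a shortest $u$-to-$R$ path and its first hit on $\bigcup_{P\in\mathcal{P}}V(P)$ — the two are interchangeable here), and project a shortest $u$-to-$t$ path in $\widehat{G}$ back to $G$, with the $2|V(G)|$ offset accounted for by exactly one subdivided $E_{\mathsf{out}}$ edge plus the pendant-path offset. Your ``fiddly point'' at the end is real but the justification should appeal to the definition of $C_\sigma$ (each subpath of a single $P\in\mathcal{P}$ appears as a contiguous arc of the facial walk $\sigma$ and hence lifts to a path in $C_\sigma$) rather than to consistency of $\mathcal{P}$, which is not what makes that step work.
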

  \begin{proof}
    Fix $u\in U$.
    We first show that $\dist_G(u,R)\le \dist_{\widehat{G}}(u,t) - 2|V(G)|$.
    For this, consider a shortest path $\widehat{Q}$ in $\widehat{G}$ from $u$ to $t$.
    Observe that there is a vertex $s\in\mathcal{T}$ that is a copy of an anchor vertex $w$,
    such that $\widehat{Q}[s,t]$ is the path from $s$ to $t$ of length $d_{w,t}$
    added in the construction of $\widehat{G}$ from $G\cutgraph H$. Recall that $d_{w,t} =\dist_G(w,R)+|V(G)|$.
    Also, observe that $\widehat{Q}[u,s]$ contains at least one subdivided edge of $E_{\mathsf{out}}$, as it starts
    in $U$ and ends outside $U$, and otherwise corresponds to a walk from $u$ to $w$ in $G$.
    Therefore, we have
    \begin{align*}
      \dist_{\widehat{G}}(u,t)  = |\widehat{Q}| & = |\widehat{Q}[u,s]|+|\widehat{Q}[s,t]|\\ 
        & = |\widehat{Q}[u,s]| + \dist_G(w,R)+|V(G)|\\
        & \ge |V(G)| + \dist_G(u, w) + \dist_G(w,R) + |V(G)|\\
        & \ge \dist_G(u,R) + 2|V(G)|.
    \end{align*}
    
    We next show that $\dist_G(u,R)\ge \dist_{\widehat{G}}(u,t) - 2|V(G)|$.
    For this, consider a shortest path $Q$ in $G$ from $u$ to $R$. Let $y\in R$ be the unique vertex in $R\cap V(Q)$.
    Also, let $z$ be the first vertex of $Q$ (when ordering from $u$ to $y$)
    in $\bigcup_{P\in\mathcal{P}}V(P)$ and let $P\in\mathcal{P}$ be the path that $z$ is contained
    (if $z$ is contained to more than one paths, pick one of them arbitrarily).
    Also, let $w$ be the first vertex of $P[z,x^P]$ (when ordering from $z$ to $x^P$) that is an anchor vertex.
    Observe that $Q[u,z]$ corresponds to a path in $\widehat{G}$ from $u$ to a copy $s'$ of $z$
    that contains exactly one subdivided edge of $E_{\mathsf{out}}$ (and no edge of $E_{\mathsf{next}}$)
    and there is a copy of $P[z,w]$ in $\widehat{G}$ from $s'$ to a copy $s$ of $w$ 
    that contains no edge of $E_{\mathsf{out}} \cup E_{\mathsf{next}}$. 
    Therefore,
    \begin{align*}
      |Q| & = |Q[u,z]|+|Q[z,y]|& \\
      & = |Q[u,z]|+ |P[z,x^P]| & \text{\!\!\!($Q[z,y]$ and $P[z,x^P]$ being shortest paths from $z$ to $R$)}\\
      & = |Q[u,z]|+ |P[z,w]| + |P[w,x^P]|& \\
      & = |Q[u,z]|+ |P[z,w]| + \dist_G(w,R) & \text{($P$ being shortest path from a vertex $v^P$ to $R$)}\\
      & \ge \dist_{\widehat{G}}(u,s) - |V(G)| + d_{w,t} - |V(G)|& \\
      & \ge \dist_{\widehat{G}}(u,t) - 2|V(G)|. &  
    \end{align*} 
    Thus, we have $\dist_G(u,R) = |Q| \ge \dist_{\widehat{G}}(u,t)-2|V(G)|$, as desired.
  \end{proof}

  \begin{claim}\label{cl:dist2}
    For every $u \in U$ and $w \in A_R(\mathcal{P})$, it holds that
    \begin{align*}
    \widehat{\dist}(u,w) < \infty &\quad\mathrm{if\ and\ only\ if} \quad \widehat{\dist}(u,w) = \min\left\{\dist_{\widehat{G}}(u,s)~\colon~s \in S(w)\right\}-|V(G)|,\ \mathrm{and}\\
    \widehat{\dist}(u,w) = \infty &\quad\mathrm{if\ and\ only\ if} \quad \min\left\{\dist_{\widehat{G}}(u,s)~\colon~s \in S(w)\right\} > 2|V(G)|.
    \end{align*}
  \end{claim}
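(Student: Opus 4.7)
The plan is to analyze the structure of short paths in $\widehat{G}$ between $u \in U$ and the set $S(w)$ by exploiting that the $|V(G)|$-fold subdivision makes every edge of $E_{\mathsf{out}} \cup E_{\mathsf{next}}$ more expensive than any walk in $G$ (which has length at most $|V(G)|$). Consequently, any $\widehat{G}$-path from $u$ to $S(w)$ of length at most $2|V(G)|$ traverses exactly one subdivided edge, and since $u \in U$ while any $s \in S(w)$ lies in a copy of $V(H)$, that single edge must belong to $E_{\mathsf{out}}$, with no $E_{\mathsf{next}}$ edge used.

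To prove the first equivalence, I show the equality $\min_{s \in S(w)} \dist_{\widehat{G}}(u,s) = \widehat{\dist}(u,w) + |V(G)|$ when $\widehat{\dist}(u,w) < \infty$, which also implies finiteness of the right-hand side in that case. The ``$\le$'' direction lifts a pair $(Q_{u,z}, P[z,w])$ realizing $\widehat{\dist}(u,w)$ into $\widehat{G}$: the sole subdivided edge traversed is the last edge of $Q_{u,z}$, which lies in $E_{\mathsf{out}}$ and contributes the extra $|V(G)|$; the walk along $P[z,w]$ avoids $E_{\mathsf{next}}$ because by definition the prefix of $w$ in $P$ contains $w$ as its unique anchor. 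The ``$\ge$'' direction takes any shortest $u$-to-$S(w)$ path in $\widehat{G}$ of length at most $2|V(G)|$ and decomposes it as a walk in $U$, then one subdivided $E_{\mathsf{out}}$ edge, then a walk on $V(H)$-copies using only non-$E_{\mathsf{next}}$ length-$1$ edges; projecting back to $G$ produces a valid pair witnessing the definition of $\widehat{\dist}(u,w)$, yielding the matching inequality. The reverse direction of the first iff (equation implies finiteness) is immediate, since the right-hand side of the equation is a priori finite when the displayed equality holds.

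The second equivalence then follows from the first. If $\widehat{\dist}(u,w) < \infty$, the first equivalence combined with the trivial bound $\widehat{\dist}(u,w) \le |V(G)|$ (the length of a simple walk in $G$) yields $\min_{s \in S(w)} \dist_{\widehat{G}}(u,s) \le 2|V(G)|$; contrapositively, $\min_{s \in S(w)} \dist_{\widehat{G}}(u,s) > 2|V(G)|$ forces $\widehat{\dist}(u,w) = \infty$. Conversely, if $\widehat{\dist}(u,w) = \infty$ but $\min_{s \in S(w)} \dist_{\widehat{G}}(u,s) \le 2|V(G)|$ were to hold, the decomposition argument from the ``$\ge$'' direction of the first equivalence would again supply a valid pair witnessing $\widehat{\dist}(u,w) < \infty$, a contradiction.

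The main obstacle I expect is the structural assertion underlying the ``$\ge$'' direction of the first equivalence: that a walk on $V(H)$-copies ending at a copy of $w$ and using no $E_{\mathsf{next}}$ edge corresponds in $G$ to a walk along $P[z, w]$ for some $P \in \mathcal{P}$ with $w \in V(P)$ and $z$ in the prefix of $w$ in $P$. The argument relies on the fact that all degree-$>2$ vertices of $H_{\mathcal{P}}$ are anchors, so every transition between different paths of $\mathcal{P}$ passes through an anchor, and at each such anchor the prohibition on $E_{\mathsf{next}}$ edges restricts the continuation to prefix edges; this forces the walk to remain inside $\mathrm{tail}(w)$ and to reach $w$ from within its prefix in a single path of $\mathcal{P}$, as required by the definition of $\widehat{\dist}$.
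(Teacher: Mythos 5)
Your proposal is correct and takes essentially the same route as the paper: the forward direction lifts a witness pair for $\widehat{\dist}(u,w)$ into $\widehat{G}$ with overhead $|V(G)|$ from the single traversed $E_{\mathsf{out}}$ edge, while the converse projects a short $\widehat{G}$-path back to $G$ after observing that it uses exactly one $E_{\mathsf{out}}$ edge and no $E_{\mathsf{next}}$ or $t$-incident edge, and the second equivalence is derived from the first via the bound $\widehat{\dist}(u,w)\le |V(G)|$. The structural assertion you flag as the main obstacle is also where the paper is terse: it asserts, without further elaboration, that the portion of $\widehat{Q}$ after the $E_{\mathsf{out}}$ crossing proceeds along a copy of $P[z,w]$ for some $P$ and some $z$ in the prefix of $w$ in $P$, which is exactly the $\mathrm{tail}(w)$-confinement argument you sketch.
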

  \begin{proof}
    We first show that if $\widehat{\dist}(u,w) < \infty$, then 
    there exists $s \in S(w)$ with $\dist_{\widehat{G}}(u,s) \leq |V(G)| + \widehat{\dist}(u,w)$. 
    To this end, let $Q$ be a path from $u$ to $w$ in $G$ of length $\widehat{\dist}(u,w)$, as in the definition
    of $\widehat{\dist}(u,w)$. There exists $P \in \mathcal{P}$ with $w \in A_R(P)$ and a vertex $z \in V(P) \cap V(Q)$
    such that $Q$ decomposes into $Q[u,z]$ and $Q[z,w] = P[z,w]$, with all internal vertices of $Q[u,z]$ in $U$. 
    Then, $\widehat{G}$ contains a copy $s'$ of $z$ such that $Q[u,z]$ projects to a path from $u$ to $s'$
    with one subdivided edge of $E_{\mathsf{out}}$ (and no edge of $E_{\mathsf{next}}$) and also a copy of $P[z,w]$ from $s'$
    to a copy $s$ of $w$ with no subdivided edge of $E_{\mathsf{out}} \cup E_{\mathsf{next}}$.
    The concatenation of these two paths witness that $\dist_{\widehat{G}}(u,s) \leq |V(G)| + \widehat{\dist}(u,w)$, as desired.

    To finish the proof of the claim, it suffices to show that if there exists $s \in S(w)$ with
    $\dist_{\widehat{G}}(u,s) \leq 2|V(G)|$, then $\widehat{\dist}(u,w) \leq \dist_{\widehat{G}}(u,s) - |V(G)|$
    (in particular, $\widehat{\dist}(u,w) \neq \infty$).
    To this end, let $\widehat{Q}$ be a path in $\widehat{G}$ from $u$ to $s$ of minimum length. 
    Since $u \in U$ but $s \notin U$, $\widehat{Q}$ necessarily contains at least one subdivided edge of $E_{\mathsf{out}}$. 
    Since $|\widehat{Q}| \leq 2|V(G)|$, $\widehat{Q}$ contains exactly one edge of $E_{\mathsf{out}}$, no edge of 
    $E_{\mathsf{next}}$, and no edge incident with $t$. Consequently, there exists a vertex $s'$ on $\widehat{Q}$
    which is a copy of a vertex $z$ that lies in the prefix of $w$ on some path $P \in \mathcal{P}$ such that 
    $\widehat{Q}$ decomposes as $\widehat{Q}[u,s']$, which has all internal vertices in $U$, and $\widehat{Q}[s',s]$
    going along a copy of $P[z,w]$ to $s \in S(w)$. Hence, $\widehat{Q}$ corresponds to a path $Q$ in $G$
    from $u$ to $w$ that satisfies the requirements of the definition of $\widehat{\dist}(u,w)$
    and witnesses $\widehat{\dist}(u,w) \leq |\widehat{Q}| - |V(G)|$, as desired. 

    This finishes the proof of the claim.
  \end{proof}
  
  Using the two previous claims, we infer that for every $u \in U$ and $w \in A_R(\mathcal{P})$ it holds that
  \begin{equation}\label{eq:prof-to-dist}
    \hatprofile{R}{\mathcal{P}}{u}(w) = \min(|R|+1, \pi[u](w)).
  \end{equation}
  Indeed, 
  \begin{align*}
  \min\left(|R|+1, \pi[u](w)\right) &= \min\left(|R|+1,\min\left\{\dist_{\widehat{G}}(u,s)~\colon~s\in S(w)\right\}+d_{w,t}-\dist_{\widehat{G}}(u,t)\right)\\
  &=\min\big(|R|+1, \min\left\{\dist_{\widehat{G}}(u,s)~\colon~s\in S(w)\right\} - |V(G)| &\\
  &\qquad\qquad\qquad\qquad + \dist_G(w,R)-\dist_G(u,R)\big) &\text{by Claim~\ref{cl:dist}}\\
  &=\min\left(|R|+1, \widehat{\dist}(u,w) + \dist_G(w,R)-\dist_G(u,R)\right)&\text{by Claim~\ref{cl:dist2}}\\
  &=\hatprofile{R}{\mathcal{P}}{u}(w).
  \end{align*}
  Here, in the third step we used the estimate $\dist_G(u,R) - \dist_G(w,R) \leq |U| \leq |V(G)|-|R|$, 
  so if $\min\left\{\dist_{\widehat{G}}(u,s)~\colon~s\in S(w)\right\} > 2|V(G)|$
  (which is equivalent to $\widehat{\dist}(u,w) = \infty$ by Claim~\ref{cl:dist2}),
  then the minimum is attained at value $|R|+1$.

  For every $u\in U$, we set
    $$B_u\coloneqq\left\{(w,i) \in A_R(\mathcal{P}) \times \mathbb{Z}~|~i \leq \hatprofile{G}{R}{u}(w)\right\}.$$
  Claim~\ref{cl:vcd} and~\eqref{eq:prof-to-dist} imply that the set system $\{B_u~\colon~u\in U\}$
    has size $\mathcal{O}_k(|R|^{12})$.

  Now, for every $v\in V(G)$, we set $$S_v\coloneqq\left\{(s,i) \in R \times \{-|R|,\ldots,|R|\}~|~i \leq \distprofile{R}{s_R}{v}(s)\right\}.$$
  The bound on the size of the set system $\{B_u~\colon~u \in U\}$ and~\Cref{lem:hat-to-normal}
  imply that the size of $\left\{S_u~\colon~u \in U \right\}$ is bounded by $\Oh_k(|R|^{12})$.
  We conclude the proof of the lemma by bounding the size of $\left\{S_u~\colon~u \in V(G)\setminus U \right\}$. For this, note that every vertex $v\in V(G)\setminus U$ is either a milestone for some path $P\in\mathcal{P}$ or a vertex in the neutral prefix of a milestone.
  In the latter case, there is a path $P\in\mathcal{P}$ and a milestone $w\in M_R(P)$ such that $S_v=S_w$. Therefore, we have
  $$|\left\{S_u~\colon~u \in V(G)\setminus U \right\}|\le \sum_{P\in\mathcal{P}} |M_R(P)|\le \mathcal{O}_k(|R|^2),$$
  where the second inequality follows from~\eqref{eq:mile}.
  Hence, the size of $\left\{S_v~\colon~v \in V(G)\right\}$ is at most $$|\left\{S_u~\colon~u \in U\right\}|+|\left\{S_u~\colon~u \in V(G)\setminus U\right\}|=\Oh_k(|R|^{12}).$$
  This finishes the proof of Theorem~\ref{thm:distprofiles}.
\end{proof}

\section{Bounded Euler genus graphs with apices: proof of Theorem~\ref{thm:main-apices}}\label{sec:algo-genus}

In this section we prove Theorem~\ref{thm:main-apices}.
(Note that Theorem~\ref{thm:main-genus} is a special case of Theorem~\ref{thm:main-apices}
 for $k=1$.)
We start by deriving the following corollary from \Cref{t:orth_query}.

\begin{corollary}\label{l:max_min_query}
Let $V$ be a set of $n$ points in $\mathbb{R}^d$.
There is a data structure that uses $\Oh \left( d n \log^{d - 2} n \right)$ preprocessing time, $\Oh \left( d n \log^{d - 2} n \right)$ memory and answers the following queries in time $\Oh \left( d \log^{d - 2} n \right)$: given $r_1, \dots, r_d \in \mathbb{R}$, find $\max_{v \in V} \min_{i \in [d]} (v_i + r_i)$, where $v_i$ denotes the $i$th coordinate of $v$.
\end{corollary}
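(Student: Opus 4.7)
The plan is to reduce a single maximin query to $d$ suffix-range queries, each in dimension $d-1$, one per candidate coordinate that realizes the inner minimum. The key observation is that if $v^\star \in V$ attains $\max_{v \in V} \min_{i \in [d]} (v_i + r_i)$ and $j^\star \in [d]$ is an index at which the minimum $\min_i (v^\star_i + r_i)$ is achieved, then $v^\star_i + r_i \ge v^\star_{j^\star} + r_{j^\star}$ for every $i \in [d]$, which rewrites as $v^\star_i - v^\star_{j^\star} \ge r_{j^\star} - r_i$ for every $i \in [d] \setminus \{j^\star\}$. These are exactly the $d-1$ inequalities defining a suffix range in the transformed coordinates $(v_i - v_j)_{i \neq j}$, and whenever $v$ satisfies them for some $j$, the inner minimum of $(v_i+r_i)_i$ is attained at coordinate $j$ and equals $v_j + r_j$.

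For each $j \in [d]$, I would preprocess the weighted point set
\[
V^{(j)} \coloneqq \bigl\{\bigl((v_i - v_j)_{i \in [d] \setminus \{j\}},\ w(v) = v_j\bigr) : v \in V\bigr\} \subseteq \mathbb{R}^{d-1}
\]
using Willard's structure from Theorem~\ref{t:orth_query} instantiated in dimension $d-1$, at a cost of $\Oh(n \log^{d-2} n)$ per $j$ in both preprocessing time and memory. Summing over the $d$ choices of $j$ matches the claimed $\Oh(d n \log^{d-2} n)$ totals.

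To answer a query $(r_1,\ldots,r_d)$, I would, for every $j \in [d]$, query the $j$th structure on the suffix range $\mathsf{Range}\bigl((r_j - r_i)_{i \in [d] \setminus \{j\}}\bigr)$, obtaining the largest weight $w_j = v_j$ over all $v \in V$ with $v_i - v_j \ge r_j - r_i$ for all $i \neq j$ (or $-\infty$ if no such $v$ exists), and finally return $\max_{j \in [d]} (w_j + r_j)$. Each query takes $\Oh(\log^{d-2} n)$ time, so the total query time is $\Oh(d \log^{d-2} n)$. Correctness follows in two directions: on one hand, each returned $w_j + r_j$ equals $\min_i (v_i + r_i)$ for some $v \in V$ that is feasible for the $j$th query (by the observation above), hence is a lower bound on $\max_v \min_i (v_i + r_i)$; on the other hand, given an optimal $v^\star$ and an index $j^\star$ witnessing its inner minimum, the point $v^\star$ is feasible for the $j^\star$th query, so $w_{j^\star} + r_{j^\star} \ge v^\star_{j^\star} + r_{j^\star}$ matches the optimum.

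I do not anticipate a serious obstacle here: the argument is a standard dimension-shaving trick that converts a maximin of affine functions over a parametric vector into a family of $d$ parametric suffix-range maximum-weight queries, after which the corollary follows by a direct application of Theorem~\ref{t:orth_query}. The only mildly delicate point is to verify that the constraints of the $j$th query precisely characterize those points $v$ whose inner minimum is realized at coordinate $j$, so that the reported weight $w_j$ plus $r_j$ cleanly coincides with $\min_i(v_i+r_i)$ for the relevant $v$.
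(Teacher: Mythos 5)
Your proposal is correct and follows essentially the same route as the paper: the paper also reduces the maximin query to $d$ suffix-range queries in dimension $d-1$ by fixing the coordinate $j$ that realizes the inner minimum (the paper calls a pair $(v,j)$ satisfying your inequalities a ``good'' pair), stores in the $j$th structure the points $(v_i - v_j)_{i \neq j}$ with weight $v_j$, and queries the range $(r_j - r_i)_{i\neq j}$. Your two-sided correctness argument (every returned value is a feasible lower bound, and the optimum is feasible for its witnessing index) mirrors the paper's $\lambda = \lambda'$ argument.
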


\begin{proof}
    Fix query parameters $r_1, \dots, r_d \in \mathbb{R}$.
    Let $\lambda \coloneqq \max_{v \in V} \min_{i \in [d]} (v_i + r_i)$ denote the answer we want to find.
    
    We say that a pair $(v,i) \in V \times [d]$ is \emph{good} if
    for every $j \in [d]$, it holds that $v_j - v_i \geq r_i -r_j$.
    Let
    $$
    		\lambda' = \max \left\{ v_i + r_i \colon i\in [d], v\in V,\textrm{ and }(v,i)\textrm{ is good} \right\}.
    $$
    We claim that 
    \begin{equation}
    \label{eq:query-lambda}
    \lambda = \lambda'.
   \end{equation}
   Let $v' = \argmax_{v \in V} (\min_{i \in [d]} v_i + r_i)$ and let $i' = \argmin_{i \in [d]} v'_i + r_i$. By the choice of $i'$, for each $j$ we have $v'_j+r_j\geq v'_{i'}+r_{i'}$, implying $v'_j - v'_{i'} \geq r_{i'} - r_j$. Hence $(v',i')$ is good, so $\lambda' \geq v'_{i'} + r_{i'} = \lambda$.
    
    On the other hand, consider a good pair $(v', i')$ maximizing $v'_{i'} + r_{i'}$.
    The goodness of $(v',i')$ implies that $i' = \argmin_{i \in [d]} v'_i + r_i$, hence $\lambda \geq \min_{i \in [d]} v'_i + r_i = v'_{i'} + r_{i'} = \lambda'$. This proves~\eqref{eq:query-lambda}.
    
    For every $i \in [d]$, we set $V_i$ to be the set
    $$ 
    		\{ (v_1 - v_i, v_2 - v_i, \dots, v_{i - 1} - v_i, v_{i + 1} - v_i, \dots, v_d - v_i) : v \in V \}\subseteq \mathbb{R}^{d-1},
    	$$
    	and set $w_i(v) \coloneqq v_i$. Let $\mathbb{D}_i$ be the data structure obtained by applying \Cref{t:orth_query} to $V_i$ and $w_i$. Consider the suffix range
    $$
    		R_i\coloneqq \mathsf{Range}(r_i - r_1, r_i-r_2, \dots, r_i - r_{i - 1}, r_i - r_{i + 1}, \dots,  r_i - r_d)\subseteq \mathbb{R}^{d-1}.
    $$
    Now, by \eqref{eq:query-lambda} we have that
    $$
    		\lambda = \max \left\{ r_i + \max \{ w_i(v) : v \in V_i \cap R_i \}\colon i\in [d] \right\}.
    $$
    This value can be computed by asking $d$ queries to the data structures $\mathbb{D}_i$, for $i\in [d]$. This gives us a data structure satisfying the conditions given in the lemma statement.
\end{proof}

The main work in the proof of Theorem~\ref{thm:main-apices} will be done in the following lemma,
which provides a fast computation of eccentricities once a suitable division is provided on input.
We adopt the notation for divisions introduced in the statement of \Cref{t:r_division}.

\begin{lemma}\label{l:main_ecc}
Fix constants $0 < \alpha, \gamma, \rho < 1$ and $k \in \mathbb{N}$. Assume we are given a connected graph $G$ on $n$ vertices with $O(n)$ edges with positive integer weights, a subset of vertices $X$, a subset of apices $A \subseteq V(G)$ of size at most $k$, and a family $\mathcal{R}$ with $V(G) \setminus A = \bigcup \mathcal{R} $ such that the following conditions are satisfied:
\begin{itemize}[nosep]
	\item $\sum_{R \in \mathcal{R}} |\partial R| \leq \Oh(n^\gamma)$;
	\item for every $R \in \mathcal{R}$, $|R| \leq \Oh(n^\rho)$ and $G[R]$ is connected and contains $O(|R|)$ edges; and
	\item for every $R \in \mathcal{R}$, the number of distance profiles in $G-A$ on $\partial R$ is of $\Oh(n^\alpha)$.
\end{itemize}
Then, we can compute $X$-eccentricity of every vertex of $G$ in time $\Oh(n^{\gamma + 2\rho} \log n + (n^{1 + \gamma} + n^{1 + \alpha}) \log^{k - 1} n)$.
\end{lemma}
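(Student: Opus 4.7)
My plan is a three-stage algorithm: a preprocessing phase that runs Dijkstra from apices, boundary vertices, and within-region sources; a per-region phase that converts the boundary distances into per-profile range-query structures; and a per-vertex query phase that reads off each $X$-eccentricity. In the preprocessing phase I run Dijkstra from every apex $a \in A$ in $G$ (cost $\Oh_k(n \log n)$), from every $b \in \bigcup_R \partial R$ in $G-A$ (cost $\Oh(n^{1+\gamma}\log n)$), and from every vertex inside each $G[R]$ (cost $\Oh(n^{1+\rho}\log n)$). Using the resulting arrays I populate, for every region $R$ and every $(v,x) \in R \times R$, the exact distance
\[
  \dist_{G-A}(v,x) = \min\!\bigl(\dist_{G[R]}(v,x),\ \min_{b \in \partial R}(\dist_{G[R]}(v,b) + \dist_{G-A}(b,x))\bigr),
\]
which is correct because any shortest $v$--$x$ path in $G-A$ either stays within $R$ or first leaves $R$ through a boundary vertex; the cost is $\Oh(|R|^2 |\partial R|)$ per region and sums to $\Oh(n^{\gamma+2\rho})$, absorbing the first summand of the claimed bound.

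For each region $R$, I fix a boundary vertex $s_R \in \partial R$ and read off the distance profile $P_R(v) = (\dist_{G-A}(v,b) - \dist_{G-A}(v,s_R))_{b \in \partial R}$ for every $v \notin A \cup R$ at total cost $\Oh(n^{1+\gamma})$; the lemma's hypothesis bounds the number of realized profiles per $R$ by $\Oh(n^\alpha)$. For each realized pair $(R,P)$ I compute the scalars $\beta_0(x,P) \coloneqq \min_{b \in \partial R}(P(b) + \dist_{G-A}(b,x))$ for every $x \in X \cap R$ and feed the $(k+1)$-dimensional point set $\{(\beta_0(x,P),\dist_G(a_1,x),\ldots,\dist_G(a_k,x)) : x \in X \cap R\}$ into \Cref{l:max_min_query} with $d = k+1$, at preprocessing cost $\Oh_k(|R|\log^{k-1} n)$ per pair and $\Oh_k(n^{1+\alpha}\log^{k-1} n)$ in total. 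In the query phase, each $\ecc_X(v)$ is computed as the maximum of (a) $\max_{a \in X \cap A}\dist_G(v,a)$; (b) for every region $R$ with $v \notin R$, the answer returned by the $(R,P_R(v))$ data structure on the vector $(\dist_{G-A}(v,s_R),\dist_G(v,a_1),\ldots,\dist_G(v,a_k))$, which equals $\max_{x \in X \cap R}\dist_G(v,x)$ thanks to the decomposition $\dist_G(v,x) = \min(\dist_{G-A}(v,s_R) + \beta_0(x,P_R(v)),\ \min_{a \in A}(\dist(v,a)+\dist(a,x)))$ valid for $v \notin R$; (c) for the region $R_v$ containing $v$ (when $v \notin A$), a direct scan over $x \in X \cap R_v$ using the in-region distances computed earlier; (d) for $v \in A$, the tabulated distances from preprocessing. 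Setting aside boundary-free components as independent subproblems, $|\mathcal{R}| \leq \sum_R |\partial R| = \Oh(n^\gamma)$, so (b) makes $\Oh(n^{1+\gamma})$ queries at $\Oh_k(\log^{k-1}n)$ apiece, fitting into the second summand of the bound.

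The main conceptual hurdle is the dimension collapse underlying the per-profile setup: naively, $\max_{x \in X \cap R}\dist_G(v,x)$ is a $\max_x \min_{y \in \partial R \cup A}$ of dimension $|\partial R|+k$, which is prohibitive for \Cref{l:max_min_query} whose cost grows exponentially with the dimension. The profile identity $\dist_{G-A}(v,b) = \dist_{G-A}(v,s_R) + P_R(v)(b)$ lets the entire boundary contribution be consolidated into the single $(R,P)$-dependent coordinate $\beta_0(x,P)$, restoring the effective query dimension to $k+1$. The remaining delicate bookkeeping --- arranging the assembly of the $\beta_0$-tables so that it reuses the in-region distances from the first stage, treating boundary-free components separately, and merging apex-only shortest paths into each query vector --- must be carried out so that all totals match the claimed $\Oh(n^{\gamma+2\rho}\log n + (n^{1+\gamma} + n^{1+\alpha})\log^{k-1}n)$ budget; this bookkeeping is the principal technical work.
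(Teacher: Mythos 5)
Your high-level machinery --- Dijkstra from apices and from boundary vertices, in-region distance tables, and grouping by distance profiles so that a single query to the structure of \Cref{l:max_min_query} handles many targets at once --- is exactly the paper's. But you transpose a crucial detail, and this transposition creates a genuine budget gap. The paper builds, for each region $R$ and each profile $p$, a structure $\mathbb{D}_{R,p}$ over the \emph{far} vertices $V_p=\{v\in X\setminus (A\cup R):p_R[v]=p\}$; for fixed $R$ these sets partition $X\setminus(A\cup R)$, so the total size of all structures is $\Oh(|\mathcal{R}|\cdot n)=\Oh(n^{1+\gamma})$. You instead build, for each $(R,P)$, a structure over the \emph{near} vertices $X\cap R$, thereby replicating $X\cap R$ across all $\Oh(n^\alpha)$ profiles of $R$. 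The total size becomes $\Oh\bigl(n^\alpha\sum_R|R|\bigr)$, and the lemma's hypotheses give only $|\mathcal{R}|=\Oh(n^\gamma)$ and $|R|=\Oh(n^\rho)$, so $\sum_R|R|$ may be as large as $\Oh(n^{\gamma+\rho})$. Your preprocessing is then $\Oh(n^{\alpha+\gamma+\rho}\log^{k-1}n)$, which exceeds the claimed bound whenever $\gamma+\rho>1$; this is exactly the regime of \Cref{t:main_bdgenus_apices}, where $\gamma+\rho=26/25$. The figure $\Oh_k(n^{1+\alpha}\log^{k-1}n)$ you quote silently assumes $\sum_R|R|=\Oh(n)$, which holds for $r$-divisions but is not among the lemma's hypotheses. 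The paper's choice to partition the far side rather than replicate the near side is precisely what keeps the structure size in check.

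A second, smaller coverage gap: in item (c) you scan only a single region $R_v$ containing $v$. A boundary vertex $v$ can lie in several regions $R_1,\dots,R_m$; if $R_v=R_1$ and $x$ is interior to $R_2$, then $x$ is covered neither by (c), which looks only at $R_1$, nor by (b), which iterates only over $R$ with $v\notin R$. The paper sidesteps this by fixing one home region $R_u$ and then ranging over \emph{all} of $V(G)\setminus(A\cup R_u)$ by profile, so nothing is dropped. You can repair your version by running the direct scan over every region containing $v$, at extra cost $\Oh\bigl(\sum_R|R|\,|\partial R|\bigr)=\Oh(n^{\gamma+2\rho})$, which does fit the budget.
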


\begin{proof}
    Let $G' \coloneqq G - A$ and $X' \coloneqq X \cap V(G')$.    Denote $A \coloneqq \{a_1,a_2,\ldots,a_k\}$. We first describe the procedure, and then discuss its time complexity.

    For~every $a \in A$ and $u \in V(G)$, we compute distance between $a$ and $u$ denoted $d_A(a, u)$.
    
    \medskip
    \emph{Step 1.} \ 
    We start by precomputing the following information for every region $R\in \mathcal{R}$.
    For all $u, v \in R$, we compute the distance between $u$ and $v$ in $G'[R]$, denoted $d_R(u, v)$.
    For all $s \in \partial R, u \in V(G')$, we compute the distance between $s$ and $u$ in $G'$, denoted $d_{\partial R}(u,s)$.
    We arbitrarily pick a pivot vertex $s_R \in \partial R$, and for brevity denote $p_R[u] \coloneqq \distprofile{\partial R}{s_R}{u}$, where the profile is considered in $G'$. That is, $p_R[u]$ is the $(\partial R)$-profile of $u$ with respect to $s_R$:
    $$p_R[u](s) = d_{\partial R}(u, s) - d_{\partial R}(u, s_R),\quad \textrm{for all }u \in V(G')\textrm{ and } s \in \partial R.$$
    We define $P_R \coloneqq \{ p_R[u] \colon u \in V(G')\}$. By our assumption, we have $|P_R| \leq \Oh(n^\alpha)$. Finally, for every profile $p \in P_R$, we list all vertices $v \in X' \setminus R$ such that $p_R[v] = p$ and set up the data structure of \Cref{l:max_min_query} for the points $(d_A(a_1, v), \dots, d_A(a_k, v), d_{\partial R_u}(s_R, v))$; denote it by $\mathbb{D}_{R, p}$.
    
    \medskip
    \emph{Step 2.} \ 
    For every $u \in V(G)$, we compute $\ecc_X(u)$ as follows. If $u \in A$, the answer is $\max_{v \in X} d_A(u, v)$. Hence, we may assume $u \not\in A$.
    Let $R_u$ denote any region of $\mathcal{R}$ containing $u$. For every $v \in R_u$, the shortest path from $u$ to $v$ in $G$ either:
    \begin{itemize}[nosep]
        \item goes through an apex, in which case its length is $\min_{a \in A} d_A(a, u) + d_A(a, v)$; or
        \item is disjoint from $A$ and intersects $\partial R_u$, in which case its length is $\min_{s \in \partial R_u} d_{\partial R_u}(s, u) + d_{\partial R_u}(s, v)$;~or
        \item is contained entirely in $R_u$, in which case its length is $d_{R_u}(u, v)$.
    \end{itemize}
    The length of this path is therefore the minimum among the three quantities.
    Using the above observation, we compute $\dist_G(u, v)$ explicitly for each $v \in R_u$, and define $\Delta^{R_u}_u \coloneqq \max_{v \in R_u \cap X} \dist_G(u, v)$.
    
    For every $v \in V(G) \setminus (A \cup R_u)$, the shortest path between $u$ and $v$ either crosses $A$ or $\partial R_u$. The length of such path avoiding $A$ is
    $$ \min_{s \in \partial R_u} d_{\partial R_u}(s, u) + d_{\partial R_u}(s, v) = 
       d_{\partial R_u}(s_R, v) + \min_{s \in \partial R_u} \left( d_{\partial R_u}(s, u) + p_{R_u}[v](s) \right). $$

We partition the vertices $v$ by their profile $p_{R_u}[v]$ and for every $p \in P_{R_u}$, we compute the maximum distance to a vertex with profile $p$ separately. Let $V_p = \{ v \in X' \setminus R_u \mid p_{R_u}[v] = p \}$. For every $v \in V_p$, we have
    $$ \dist_G(u, v) = \min \left( \min_{a \in A} d_A(a, u) + d_A(a, v), d_{\partial R_u}(s_R, v) + \min_{s \in \partial R_u} \left( d_{\partial R_u}(s, u) + p(s) \right) \right). $$
    We set $r_i \coloneqq d_A(a, u)$ for $i \in [k]$, and $r_{k + 1} \coloneqq \min_{s \in \partial R_u}  \left( d_{\partial R_u}(s, u) + p(s) \right)$. Now,
    $$ \max_{v \in V_p} \dist_G(u, v) = \max_{v \in V_p} \min(r_1 + d_A(a_1, v), \dots, r_k + d_A(a_k, v), r_{k + 1} + d_{\partial R_u}(s_R, v)).$$
    This value can be computed by querying $r_1, \dots, r_{k + 1}$ to the data structure $\mathbb{D}_{R_u, p}$. We define $\Delta^{V(G) \setminus (A \cup R_u)}_u$ as the maximum of such values over all $p \in P_{R_u}$.
    
    Finally, we set $\Delta^{A}_u \coloneqq \max_{a \in A \cap X} d_A(a, u)$, and report $\ecc_X(u) = \max \left(\Delta^{A}_u, \Delta^{R_u}_u, \Delta^{V(G) \setminus (A \cup R_u)}_u \right)$.
    
    It remains to argue that this algorithm can be implemented in the desired running time. For any source $u \in V(G)$, distance from $u$ to all vertices of $G$ can be calculated in time $\Oh((|V(G)| + |E(G)|) \log |V(G)|)$ using Dijkstra's algorithm. Therefore:
    \begin{itemize}[nosep]
        \item computing $d_A(a,\cdot)$ for all $a$ can be done in time $\Oh (n \log n)$,
        \item computing $d_{\partial R}(\cdot,\cdot)$ for all $R$ can be done in time $\Oh (n\log n \cdot \sum_{R \in \mathcal{R}} |\partial R| ) \leq \Oh (n^{1 + \gamma} \log n)$,
        \item computing $d_R(\cdot,\cdot)$ for all $R \in \mathcal{R}$ can be done in time $\Oh (|\mathcal{R}| n^{2\rho} \log n) \leq \Oh (n^{\gamma + 2\rho} \log n)$; constructing $G[R]$ takes $\Oh(|R|^2 \log n) = \Oh(n^{2\rho} \log n)$ time and calculating all pairs shortest paths can be done in time $\Oh(|R||E(G[R])| \log n) = \Oh(n^{2\rho} \log n)$.
    \end{itemize}
    Finally, the total size of the data structures $\mathbb{D}_{R, p}$ over all $R, p$ is $\Oh(|\mathcal{R}| n) = \Oh (n^{1 + \gamma})$, hence we can construct them in time $\Oh (n^{1 + \gamma} \log^{k - 1} n)$.
    
    Consider $u \in V(G) \setminus A$ fixed in step 2. Computing $\Delta^{R_u}_u$ takes $\Oh (|R| \cdot |\partial R_u|)$ time. Computing $\Delta^{A}_u$ can be done in constant time. Computing $\Delta^{V(G) \setminus (A \cup R_u)}_u$ requires asking $|P_{R_u}|$ queries to some $\mathbb{D}_{R, p}$, which takes $\Oh (n^{\alpha} \log^{k - 1} n)$ time in total.
    In total, step 2 for all vertices $u$ can be done in time $\Oh (n^{1 + \alpha} \log^{k - 1} n + n^\rho \cdot \sum_{u \in V(G) \setminus A} |\partial R_u|) = \Oh (n^{1 + \alpha} \log^{k - 1} n + n^{\gamma + 2\rho})$.
    
    We conclude that the total running time is $\Oh(n^{\gamma + 2\rho} \log n + (n^{1 + \gamma} + n^{1 + \alpha}) \log^{k - 1} n)$.
\end{proof}

The next statement is a reformulation of Theorem~\ref{thm:main-apices}. 

\begin{theorem}\label{t:main_bdgenus_apices}
Fix constants $k, g \in \mathbb{N}$. Let $\mathcal{C}$ denote the class of all graphs that can be obtained by taking a graph $G$ of Euler genus bounded by $g$, and adding $k$ apices adjacent arbitrarily to the rest of $G$ and to each other. Then there is an algorithm that given an unweighted graph $G$ belonging to $\mathcal{C}$, together with its set of apices $A$, computes the eccentricity of every vertex in time $\Oh_{k,g} \left( n^{1 + \frac{24}{25}} \log^{k - 1} n \right)$.
\end{theorem}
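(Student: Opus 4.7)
The strategy is to combine the $r$-division of Theorem~\ref{t:r_division}, the profile bound of Theorem~\ref{thm:distprofiles}, and the algorithmic black box of Lemma~\ref{l:main_ecc}; taking $X = V(G)$ in the latter yields the eccentricity of every vertex. After a standard reduction to connected instances (process each connected component of $G$ separately), I would proceed as follows.

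First, I would apply Theorem~\ref{t:r_division} to $G' \coloneqq G - A$, which has Euler genus at most $g$, is therefore $K_h$-minor-free for some $h = \Oh_g(1)$, and has $\Oh_g(n)$ edges. Choosing the division parameter $r = n^{\rho}$ with $\rho = 2/25$, I obtain, in time $\Oh_g(n^{1+\varepsilon+\rho/2})$, a family $\mathcal{R}$ of connected subsets of $V(G')$ with $|R| \leq n^{\rho}$, each $G'[R]$ containing $\Oh(|R|)$ edges, and $\sum_{R \in \mathcal{R}} |\partial R| \leq \Oh_g(n^{1-\rho/2})$. This fixes the parameter $\gamma = 1-\rho/2$ in the setting of Lemma~\ref{l:main_ecc}.

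Next, I would bound the number of distance profiles in $G'$ to $\partial R$, for each region $R$. Since $R$ is connected, Theorem~\ref{thm:distprofiles} applied to $G'$ and $R$ bounds the number of profiles to $R$ relative to any fixed pivot by $\Oh_g(|R|^{12}) = \Oh_g(n^{12\rho})$. Choosing the pivot $s_R$ in $\partial R \subseteq R$ and observing that any distance profile to $\partial R$ relative to $s_R$ is simply the coordinatewise restriction of the corresponding profile to $R$, I conclude that the number of profiles to $\partial R$ is also bounded by $\Oh_g(n^{12\rho})$. This sets $\alpha = 12\rho$ in Lemma~\ref{l:main_ecc}.

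Finally, I would invoke Lemma~\ref{l:main_ecc} with parameters $\alpha = 12\rho$, $\gamma = 1-\rho/2$, $\rho = 2/25$, and target set $X = V(G)$. The total running time becomes
\[
\Oh\!\left(n^{1 + 3\rho/2}\log n + \bigl(n^{2-\rho/2} + n^{1+12\rho}\bigr)\log^{k-1} n\right).
\]
With $\rho = 2/25$ the two dominant exponents coincide: $2 - \rho/2 = 1 + 12\rho = 1 + 24/25$, while $1 + 3\rho/2 = 1 + 3/25$ is smaller, and the $r$-division cost $n^{1+\varepsilon+1/25}$ is also absorbed for sufficiently small $\varepsilon$. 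The overall bound is therefore $\Oh_{g,k}(n^{1+24/25}\log^{k-1} n)$, as claimed. The main conceptual subtlety is bridging Theorem~\ref{thm:distprofiles}, whose hypothesis requires a \emph{connected} target set, with Lemma~\ref{l:main_ecc}, which concerns profiles to the generally disconnected boundary $\partial R$; the restriction argument from $R$ to $\partial R$ resolves this cleanly once the pivot is chosen in $\partial R$, after which only elementary exponent balancing remains and it pins $\rho$ down to $2/25$.
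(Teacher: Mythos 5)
Your proposal matches the paper's proof almost verbatim: the same choice of $\rho = 2/25$, the same use of Theorem~\ref{t:r_division} on $G' = G - A$ with $\varepsilon = \rho/2$, the same application of Theorem~\ref{thm:distprofiles}, and the same invocation of Lemma~\ref{l:main_ecc} with $X = V(G)$, $\gamma = 1 - \rho/2 = 24/25$, $\alpha = 12\rho = 24/25$. The only genuine difference is that you explicitly spell out the ``restriction'' argument that bridges the connectivity hypothesis of Theorem~\ref{thm:distprofiles} (which bounds profiles to the connected set $R$) with the hypothesis of Lemma~\ref{l:main_ecc} (which wants profiles to the possibly disconnected $\partial R$); the paper dispatches this with a bare ``In particular'', whereas you observe that with pivot $s_R \in \partial R$ every $\partial R$-profile is the coordinatewise restriction of the corresponding $R$-profile, so the count can only go down. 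You also note the reduction to connected instances, which the paper leaves implicit despite Lemma~\ref{l:main_ecc} assuming connectivity. Both additions are correct and make the argument a bit tighter than the paper's terse writeup.
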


\begin{proof}
Let $A = \{a_1, \dots, a_k\}$ denote the set of apices and let $G' = G - A$. Fix $\rho\coloneqq \frac{2}{25}$.
Since graphs of bounded genus exclude some fixed clique as a minor,
by \Cref{t:r_division} (with $\varepsilon = \rho/2$)
we can find an $\Oh(n^\rho)$-division $\mathcal{R}$ of $G'$
satisfying $\sum_{R \in \mathcal{R}} |\partial R| = \Oh (n^{1 - \frac{\rho}{2}})$
in time $\Oh(n^{1 + \rho})$ .
By Theorem~\ref{thm:distprofiles}, the graph $G'$ has a degree $12$ polynomial bound on the number of distance profiles. In particular, the number of profiles on every $\partial R$ is of $\Oh(|R|^{12}) = \Oh(n^{12\rho})$. Let $X \coloneqq V(G)$, $\gamma \coloneqq 1 - \frac{\rho}{2} = \frac{24}{25}$ and $\alpha \coloneqq 12\rho = \frac{24}{25}$. Now,  applying \Cref{l:main_ecc} gives us an algorithm computing all eccentricities in time $\Oh(n^{1 + \frac{24}{25}} \log^{k - 1} n)$.
\end{proof}

\section{The general case: Proof of \texorpdfstring{\Cref{thm:main-decomp}}{Theorem 1.6}}\label{sec:algo}

First, we show that data structure of \Cref{l:max_min_query} can be used to compute distances witnessed by shortest paths that pass through a constant-size separator.

\begin{lemma}\label{l:single_adhesion}
Fix a constant $k \in \mathbb{N}$. There exists an algorithm which as the input receives an edge-weighted graph $G$ on $n$ vertices and $m$ edges together with a partition of its vertices into three sets $A, B, C$ such that $|B| \leq k$ and there are no edges between $A$ and $C$, and as the output computes $\max_{c \in C} \dist(a, c)$ for every $a \in A$. The running time is $\Oh(m \log n + n \log^{k - 1} n)$.
\end{lemma}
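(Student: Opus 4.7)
The plan is to reduce the task to a batch of queries that can be answered using \Cref{l:max_min_query}. The starting observation is that since there is no edge between $A$ and $C$, every path from a vertex $a \in A$ to a vertex $c \in C$ must cross the separator $B$. Consequently, for every $a \in A$ and $c \in C$,
\[ \dist_G(a,c) \;=\; \min_{b \in B} \bigl( \dist_G(a,b) + \dist_G(b,c) \bigr). \]
Write $B = \{b_1,\ldots,b_{k'}\}$ with $k' \leq k$. The first step is to compute the distances $\dist_G(b_i,\cdot)$ for every $i \in [k']$ by running Dijkstra's algorithm from each $b_i$ in $G$. Since $k$ is a fixed constant, this takes $\Oh(m \log n)$ time in total and yields, in particular, the values $\dist_G(a,b_i)$ for all $a \in A$ and the values $\dist_G(b_i,c)$ for all $c \in C$.

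Next, the goal is to compute, for every $a \in A$,
\[ \max_{c \in C} \dist_G(a,c) \;=\; \max_{c \in C} \min_{i \in [k']} \bigl( r_i^{(a)} + v_i^{(c)} \bigr), \]
where $r_i^{(a)} \coloneqq \dist_G(a,b_i)$ and $v_i^{(c)} \coloneqq \dist_G(b_i,c)$. This is exactly the shape of query supported by \Cref{l:max_min_query}: viewing each $c \in C$ as the point $(v_1^{(c)},\ldots,v_{k'}^{(c)}) \in \mathbb{R}^{k'}$, the data structure of that corollary, built over the point set $\{(v_1^{(c)},\ldots,v_{k'}^{(c)}) : c \in C\}$, answers each query $(r_1^{(a)},\ldots,r_{k'}^{(a)})$ with the desired value. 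Thus the plan is to construct that data structure once (preprocessing takes $\Oh(k n \log^{k-2} n)$) and then issue one query per $a \in A$, each costing $\Oh(k \log^{k-2} n)$; since $k$ is a constant this amounts to $\Oh(n \log^{k-1} n)$ total time for preprocessing and all queries.

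Summing the contributions, the algorithm runs in $\Oh(m \log n + n \log^{k-1} n)$ time, matching the bound claimed. There is essentially no technical obstacle here: the only care needed is to handle the degenerate cases $k' \in \{0,1\}$ separately. When $k' = 0$ (so $C = \emptyset$ or $A$ is disconnected from $C$) the answer is $-\infty$ or otherwise trivially determined; when $k' = 1$, the formula collapses to $\dist_G(a,c) = \dist_G(a,b_1) + \dist_G(b_1,c)$, so $\max_{c \in C} \dist_G(a,c) = \dist_G(a,b_1) + \max_{c \in C} \dist_G(b_1,c)$ can be computed directly after the Dijkstra runs without invoking \Cref{l:max_min_query}.
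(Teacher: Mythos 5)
Your proposal is correct and matches the paper's proof essentially verbatim: both observe that every $A$--$C$ path crosses $B$, run Dijkstra from each vertex of $B$, and then invoke \Cref{l:max_min_query} over the point set $\{(\dist(c,b_1),\ldots,\dist(c,b_k)) : c \in C\}$ with one query per $a \in A$. Your additional remarks about the degenerate cases $k' \in \{0,1\}$ are a harmless extra precaution.
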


\begin{proof}
Let $B = \{b_1, \ldots, b_k\}$. For any $a \in A, c \in C$, we have $\dist(a, c) = \min_{i \in [k]} \dist(a, b_i) + \dist(c, b_i)$. First, we run Dijkstra's algorithm from every vertex in $B$ to find $\dist(v, b_i)$ for every $v \in V(G)$ and $i \in [k]$. Next, we use \Cref{l:max_min_query} to construct a data structure $\mathbb{D}$ for the point set $\{(\dist(c, b_1), \dots, \dist(c, b_k))\colon c\in C\}\subseteq \mathbb{R}^k$. Now, the value $\max_{c \in C} \dist(a, c)$ for any given $a$ is equal to the answer of $\mathbb{D}$ to the query with argument $(\dist(a, b_1), \dots, \dist(a, b_k))$.
\end{proof}

After computing the distances over a constant-size separator, we will use the following observation to simplify one of the sides of the separation.

\begin{lemma}\label{l:inserting_paths}
Let $G$ be a edge-weighted connected graph and let $A, B, C$ be a partition of its vertices such that there are no edges between $A$ and $C$. For every pair of vertices $u, v \in B$, let $P_{u, v}$ be any shortest path from $u$ to $v$ with all internal vertices in $C$ (assuming such a path exists).

Let $G'$ denote a graph obtained from $G[A \cup B]$ by adding an edge from $u$ to $v$ of weight equal to the length of $P_{u, v}$, for all $u, v \in B$ for which $P_{u, v}$ exists. Then,  $$\dist_G(s, t) = \dist_{G'}(s, t)\qquad\textrm{for all }s,t\in A\cup B.$$
\end{lemma}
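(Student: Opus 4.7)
The plan is to prove both inequalities $\dist_G(s,t) \le \dist_{G'}(s,t)$ and $\dist_G(s,t) \ge \dist_{G'}(s,t)$ separately. Both directions rely on translating walks between the two graphs.

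For the inequality $\dist_G(s,t) \le \dist_{G'}(s,t)$, I would take an arbitrary walk $W'$ in $G'$ from $s$ to $t$ and construct a walk in $G$ of the same total length. Each edge of $W'$ is either an edge of $G[A \cup B]$ (hence already present in $G$) or a newly added edge $uv$ with $u,v \in B$, which by construction has weight equal to $|P_{u,v}|$ for some $u$-$v$ path $P_{u,v}$ in $G$. Substituting each newly added edge by the corresponding path $P_{u,v}$ turns $W'$ into a walk in $G$ from $s$ to $t$ of the same total length, which is therefore at least $\dist_G(s,t)$.

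For the converse inequality $\dist_G(s,t) \ge \dist_{G'}(s,t)$, I would start with a shortest path $P$ from $s$ to $t$ in $G$ and decompose it at the vertices of $A \cup B$. This yields a sequence of maximal subpaths $P = Q_1 Q_2 \cdots Q_\ell$ where each $Q_i$ has its endpoints in $A \cup B$ and either (i) consists of a single edge, or (ii) has all internal vertices in $C$. The key observation, which I would state explicitly, is that in case (ii) both endpoints of $Q_i$ must lie in $B$: if an endpoint were in $A$, then that endpoint would be adjacent to a vertex of $C$, contradicting the assumption that there are no edges between $A$ and $C$. Thus in case (ii) we have $u, v \in B$ and $Q_i$ witnesses that a $u$-$v$ path with internal vertices in $C$ exists; hence $P_{u,v}$ exists and $|Q_i| \ge |P_{u,v}|$ by the shortest-path choice of $P_{u,v}$. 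In case (i), $Q_i$ is a single edge of $G[A \cup B]$, which is also an edge of $G'$.

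Replacing each subpath $Q_i$ falling into case (ii) by the corresponding new edge of $G'$, and keeping the edges of case (i) unchanged, produces a walk $W'$ in $G'$ from $s$ to $t$ whose total weight is at most $|P| = \dist_G(s,t)$. Therefore $\dist_{G'}(s,t) \le \dist_G(s,t)$, completing the proof. There is no substantial obstacle here; the only point requiring care is the argument that endpoints of subpaths of type (ii) must lie in $B$, which is forced precisely by the nonadjacency of $A$ and $C$.
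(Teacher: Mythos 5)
Your proof is correct and rests on the same key observation as the paper's: a maximal subpath of a shortest $s$-$t$ path with internal vertices in $C$ must have both endpoints in $B$ (since $A$ and $C$ are non-adjacent), and can therefore be replaced by the corresponding shortcut edge of $G'$. The paper packages this slightly differently, working in the auxiliary graph $G''=G\cup G'$ and picking a shortest path minimizing the number of $C$-vertices visited, but this is the same argument in a more compact (and arguably less explicit) form; your two-inequality presentation is just as valid.
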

\begin{proof}
Let $G''$ be the graph obtained by adding new edges of $G'$ to $G$.
Fix any $s, t \in A \cup B$ and let $P$ denote the shortest path from $s$ to $t$ in $G''$ which minimizes the number of vertices from $C$ visited. Naturally, the weight of $P$ is equal $\dist_G(s, t)$. Assume that such path visits at least one vertex of $C$. Then, the path $P$ is of the form $s \xrightarrow{P_1} x \xrightarrow{P_2} y \xrightarrow{P_3} t$, where $x, y \in B$ and all the internal vertices of $P_2$ are in $C$. By the construction of $G'$, $P_2$ can be replaced with a direct edge from $x$ to $y$ of the same weight. We obtain a same weight path with a smaller number of vertices of $C$ visited, which is a contradiction. Therefore, $P$ is entirely contained in $A \cup B$, hence it exists in $G'$. This shows that $\dist_G(s, t) = \dist_{G'}(s, t)$.
\end{proof}

The next lemma encapsulates the main algorithmic content of the proof of \Cref{thm:main-decomp}. The algorithm will split the tree decomposition provided on input into smaller parts for which the eccentricities are easier to calculate. We use the following lemma to handle a single such part.
\begin{lemma}\label{l:star}
Fix constants $k, g \in \mathbb{N}, 0 < \delta < \frac{1}{54}$. Assume we are given $n \in \mathbb{N}$, an edge-weighted graph $G$ on at most $n$ vertices with a weight function $w \colon E(G) \to \mathbb{N}$, a vertex subset $A$ and a collection of non-empty vertex subsets $V_0, V_1, \dots, V_\ell$ satisfying the following conditions:
\begin{itemize}[nosep]
	\item The sum of weights of all the edges in $G$ is bounded by $\Oh(n)$.
	\item $V(G) \setminus A = V_0 \cup V_1 \cup \dots \cup V_\ell$.
	\item $|A| \leq k$.
	\item For every $i \in [\ell]$, $G[V_i \setminus V_0]$ is connected, $N_G(V_i \setminus V_0) = V_i \cap V_0$, $|V_i| = \Oh(n^\delta)$, and $|V_0 \cap V_i| \leq 4$.
	\item For all $i, j \in [\ell], i \neq j$, $V_i \setminus V_0$ and $V_j \setminus V_0$ are disjoint and non-adjacent in $G$.
	\item Every edge $uv \in E(G)$ with $u, v \not\in A$ is contained in $G[V_i]$ for some $i\in \{0,1,\ldots,\ell\}$.
	\item The graph obtained by taking $G[V_0]$ and adding a clique on $V_0 \cap V_i$ for every $i \in [\ell]$ has Euler genus bounded by $g$.
\end{itemize}
Then, we can compute the eccentricity of every vertex of $G$ in time $\Oh \left( n^{1 + \frac{150 + 54 \delta}{151}} \log^k n \right)$.
\end{lemma}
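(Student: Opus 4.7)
The plan is to reduce the eccentricity problem on $G$ to one on a smaller graph $G_0^{\mathrm{red}}$ of bounded Euler genus with $k$ apices, and then augment the result with pendant contributions computed via Corollary~\ref{l:max_min_query} in a batched manner. First I would preprocess: for each pendant $V_i$ compute all-pairs shortest paths in $G[V_i]$ (costing $\Oh(|V_i|^2 \log n)$ per pendant and $\Oh(n^{1+\delta} \log n)$ in total), and for each apex $a \in A$ run Dijkstra in $G$ to record $\dist_G(a, v)$ for all $v \in V(G)$. Then form the reduced graph $G_0^{\mathrm{red}}$ on vertex set $V_0 \cup A$ by starting with $G[V_0 \cup A]$ and, for each pendant $V_i$, adding an edge of weight $\dist_{G[V_i]}(u, v)$ between every pair $u, v \in V_0 \cap V_i$. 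Iterating Lemma~\ref{l:inserting_paths} confirms that distances in $G_0^{\mathrm{red}}$ between vertices of $V_0 \cup A$ agree with their distances in $G$; by hypothesis $G_0^{\mathrm{red}} - A$ has Euler genus at most~$g$. Applying Lemma~\ref{l:main_ecc} to $G_0^{\mathrm{red}}$ with an $r$-division for $r = n^{\rho}$ (a parameter tuned later) and the $\Oh(|\partial R|^{12})$ profile bound from Theorem~\ref{thm:distprofiles} yields $\max_{u \in V_0 \cup A} \dist_G(v, u)$ for every $v \in V_0 \cup A$.

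To handle the remaining pendant contributions, I decompose the shortest $v$-to-$u$ path for $u \in V_j \setminus V_0$ by its last entry point into $V_j$, which must be either an apex in $A$ or an adhesion vertex in $V_0 \cap V_j$, yielding
\[ \dist_G(v, u) = \min\!\Bigl(\min_{a \in A} \bigl(\dist_G(v, a) + \dist_G(a, u)\bigr),\ \min_{y \in V_0 \cap V_j} \bigl(\dist_G(v, y) + \dist_{G[V_j]}(y, u)\bigr) \Bigr). \]
For each pendant $V_j$ with adhesion $\{y_1^j, \dots, y_{d_j}^j\}$ (so $d_j \leq 4$), I would build a $(d_j + k)$-dimensional data structure via Corollary~\ref{l:max_min_query} on the point set $\bigl\{\bigl(\dist_{G[V_j]}(y^j_1, u), \dots, \dist_{G[V_j]}(y^j_{d_j}, u), \dist_G(a_1, u), \dots, \dist_G(a_k, u)\bigr) : u \in V_j \setminus V_0\bigr\}$; a query with the matching distances from $v$ returns the pendant contribution $\max_{u \in V_j \setminus V_0} \dist_G(v, u)$. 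Since naively issuing one query per vertex-pendant pair is too slow, I batch queries using the $r$-division: for each region $R$ and each pendant whose adhesion lies in $R$, group query vertices by their distance profile to $\partial R$, which determines their distances to $V_0 \cap V_j$ up to a common additive shift, and process each profile class as one batch, exploiting the $\Oh(|\partial R|^{12})$ profile bound. A symmetric argument computes eccentricities for vertices $v$ inside pendant interiors, combining local $G[V_j]$ distances with the same query framework.

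The main obstacle is to make the batching truly efficient while tuning $\rho$ so that the various cost components --- the in-region and boundary distance computations from Lemma~\ref{l:main_ecc} scaling as $\Oh(n^{\gamma + 2\rho} + n^{1+\gamma})$, the pendant preprocessing $\Oh(n^{1+\delta})$, and the batched pendant-query cost (which depends on the total profile count and the number of pendants) --- jointly fit within $\Oh(n^{1 + (150+54\delta)/151} \log^k n)$. The delicate point is that while the adhesion coordinates of vertices in a profile class shift uniformly, the apex coordinates may differ, so additional care (e.g., a secondary data structure over apex distances, or separating the apex and adhesion contributions in the max-min and recombining them) is needed to exploit profile batching effectively. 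The specific exponent $(150 + 54\delta)/151$ and the constraint $\delta < 1/54$ arise from the resulting parameter balance, which in particular keeps the overall running time strictly subquadratic.
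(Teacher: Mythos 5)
Your plan takes a genuinely different route from the paper's. The paper does \emph{not} reduce the problem to a small bounded-genus-plus-apices graph on $V_0 \cup A$; it classifies pendants into \emph{heavy} ones (total $E_i$-weight $> n^{\delta'}$, few of these) and \emph{light} ones, handles heavy pendants by brute-force BFS, replaces only them by virtual edges, subdivides weighted edges to obtain a unit-weight graph $G'$, and then constructs the $r$-division of $G'-A$ by contracting each light-pendant interior to a single vertex, computing a division there, and lifting it back so that every light pendant lies wholly inside one region. Lemma~\ref{l:main_ecc} is then applied once to $G'$, with all light pendants present; the engine is a careful estimate $\Oh(n^{48\rho'+54\delta'})$ on the number of profiles of the lifted regions, obtained by bounding profiles of adhesion vertices in a bounded-genus surrogate and then accounting for the $\Oh(n^{6\delta'})$ possible additive offsets inside a pendant.

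The concrete gap in your plan is the batching of pendant contributions. For query vertices $v$ in a fixed profile class $p$ on $\partial R_v$, the adhesion coordinates $\dist_G(v,y)$, $y\in V_0\cap V_j$, do shift by a common additive term, so in the apex-free case a single query to $\mathbb{D}_j$ per triple $(R,p,V_j)$ suffices, corrected by the per-$v$ shift. But the apex coordinates $\dist_G(v,a_1),\dots,\dist_G(v,a_k)$ are \emph{not} determined by $p$; they vary freely across the class, and since they sit inside the inner $\min$ of a $\max$-$\min$ they cannot be separated and ``recombined'' afterwards, nor folded into the profile without inflating the profile count by $n^{\Theta(k)}$. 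This is exactly why Lemma~\ref{l:main_ecc} issues one fresh $(k{+}1)$-dimensional query per pair (query vertex, \emph{destination} profile class), rather than batching over query vertices. Making your scheme work would force you to group the destinations $u$ lying in pendant interiors by their profile on $\partial R_v$ and bound how many such profiles exist --- which is precisely the case analysis the paper carries out and the source of its $\Oh(n^{48\rho'+54\delta'})$ bound; at that point you have converged on the paper's approach of keeping the pendants in the graph. A secondary issue: Theorem~\ref{thm:distprofiles} applies to unweighted graphs, so the virtual edges of $G_0^{\mathrm{red}}$ (of weight up to the total weight of $E_i$, which for heavy pendants is $\omega(n^{\delta'})$) must be subdivided before profiles can be counted, and without the heavy/light distinction this subdivision is not length-controlled.
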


\begin{proof}
Fix $\delta' = \frac{1 + 97 \delta}{151}$; we have $\delta' - \delta = \frac{1 - 54\delta}{151} > 0$.
Let $E_i$ denote the set of edges with one endpoint in $V_i$ and the other endpoint in $V_i \setminus V_0$. For $i \in [\ell]$, we shall say that $V_i$ is {\em{heavy}} if the sum of weights of $E_i$ is larger than $n^{\delta'}$. Since the sets $E_i$ are pairwise disjoint and the total sum of weights of all the edges is bounded by $\Oh(n)$, the number of heavy subsets is bounded by $\Oh(n^{1 - \delta'})$. Without loss of generality, we may assume that $V_{\ell' + 1}, \dots, V_\ell$ are heavy and $V_1, \dots, V_{\ell'}$ are not, for some $\ell'\in \{0,\ldots,\ell\}$.

For any source vertex $s$, we can calculate distances from $s$ to every vertex of $G$  using breadth first search in time $\Oh(\sum_{e \in E(G)} w(e)) = \Oh(n)$.
In particular, for every $\ell' < i \leq \ell$, we can compute the distances from every vertex of $V_i$ to every vertex of $G$ in total time $\Oh(n^{2 - \delta' + \delta})$, because $$|V_{\ell'+1}\cup \ldots\cup V_{\ell}|\leq n^{1-\delta'}\cdot \Oh(n^\delta)=\Oh(n^{1-\delta'+
\delta}).$$
Additionally, we calculate distances $\dist_G(a, v)$ for every $a \in A, v \in V(G)$ in time $O(n)$.

For every $i \in [\ell]$ and $u,v \in V_0 \cap V_i$, there exists a shortest path $P_{i,u,v}$ from $u$ to $v$ with all internal vertices belonging to $V_i - V_0$ due to the assumption that $G[V_i - V_0]$ is connected and $N_G(V_i - V_0) = V_i \cap V_0$. Therefore, the distance from $u$ to $v$ is bounded by the sum of weights of edges in $E_i$. In particular, for $i \in [\ell']$, $\dist_G(u, v) \leq n^{\delta'}$.

We define $\widetilde{G}$ to be the graph obtained by taking $G[A \cup V_0 \cup \dots \cup V_{\ell'}]$ and applying the following operation for every $i \in \{\ell' + 1, \dots, \ell\}$:
for each pair of vertices $u, v \in A \cup (V_0 \cap V_i)$, add an edge in $\widetilde{G}$ between $u$ and $v$ with weight equal to the total weight of $P_{i,u,v}$. For a fixed $i, u$, we can find $P_{i, u, v}$ for all $v$ using breadth first search in time $\Oh(n)$. Taking a sum over all $i, u$, we get that $\tilde{G}$ can be computed in total time $\Oh(n^{2 - \delta'})$.

\begin{claim}\label{cl:wG}
The sum of the edge weights in $\widetilde{G}$ is $\Oh(n)$. Moreover, for all $u, v \in V(\widetilde{G})$, we have $\dist_{\widetilde{G}}(u, v) = \dist_{G}(u, v)$.
\end{claim}

\begin{proof}
Consider $i \in \{\ell' + 1, \dots, \ell\}$ and any $u, v \in A \cup (V_0 \cap V_i)$ for which we added an edge. Its weight is bounded by the sum of weights of edges in $E_i$. Therefore, the total weight of all edges added is at most
$$
\sum_{i \in \{\ell' + 1, \dots, \ell\}} \left( |A \cup (V_0 \cap V_i)|^2 \sum_{e \in E_i} w(e) \right) \leq (4 + k)^2 \sum_{e \in E(G)} w(e) = \Oh(n).
$$
This proves the first part of the claim.

For the second part of the claim, consider any $i \in \{\ell' + 1, \dots, \ell \}$ and observe that by our assumptions, $A \cup (V_0 \cap V_i)$ separates $(V_0 \cup \dots \cup V_{\ell'} \cup V_{i + 1} \cup \dots \cup V_\ell) \setminus V_i$ from $V_i \setminus V_0$. Hence it suffices to repeatedly apply \Cref{l:inserting_paths}.
\end{proof}

For every $u \in V(\widetilde{G})$, we have $\ecc_G(u) = \max(\ecc_{\widetilde{G}}(v), \max_{v \in V(G) \setminus V(\widetilde{G})} \dist_G(u, v))$. Note, that we already know all the distances $\dist_G(u, v)$ for $v \in V(G) \setminus V(\widetilde{G})$. Similarly, we can already compute $\ecc_G(u)$ for every $u \in V(G) \setminus V(\widetilde{G})$. Therefore, it remains to compute $\ecc_{\widetilde{G}}(v)$ for each $v \in V(\widetilde{G})$. Our goal is to show that this can be done efficiently using \Cref{l:main_ecc}.

Now, let $G'$ be the graph obtained from $\tilde{G}$ by replacing every edge $e$ non-indicent to $A$ with $w(e)\geq 2$ with a path of length $w(e)$ consisting of unit-weight edges. This operation again preserves the distances. Since the sum of edge weights in $\tilde{G}$ is of $\Oh(n)$, the total number of vertices in $G'$ is of $\Oh(n)$. For $0 \leq i \leq \ell'$, we write $V'_i$ to denote the set $V_i$ together with all the vertices added as a part of a path between two endpoints in $V_i$.
As $V_i$ is not heavy for each $i\in [\ell']$, we have
$$
|V'_i \setminus V'_0| \leq |V_i| + \sum_{e \in E_i} w(e) = \Oh(n^{\delta'})\qquad \textrm{for all }i\in [\ell'].
$$

Let $G_0$ denote the graph $G'[V'_0]$ and let $G_0^*$ denote the graph $G'- A$ with $V'_i - V'_0$ contracted to a single vertex $v_i^*$, for each $i \in [\ell']$; note that, all edges of $G_0$ and $G_0^*$ have unit weight.

\begin{claim}
	The graph $G_0^*$ is does not contain $K_{t}$ as a minor, where $t = \Oh(\sqrt{g})$.
\end{claim}

\begin{proof}
Let $\bar{G}_0$ denote the graph obtained by taking $G_0$ and adding a clique on $V_0 \cap V_i$ for every $i \in [\ell']$.
By lemma assumptions and the fact that subdividing edges does not increase the Euler genus, $\bar{G}_0$ has Euler genus at most $g$. In particular, $\bar{G}_0$ is $K_{t'}$-minor-free for some $t' = \Oh(\sqrt{g})$, because the Euler genus of $K_{t'}$ is $\Omega({t'}^2)$.

Similarly, let $\bar{G}_0^*$ be the graph obtained by taking $G_0^*$ and adding a clique on each $V_0 \cap V_i$.
Note, that $\bar{G}_0^* - \{v_1^*, \dots, v_{\ell'}^*\}$ is precisely $\bar{G}_0$. Let $t = \max(t', 6)$.
Recall that a minor model of a clique $K_t$ consists of $t$ pairwise vertex-disjoint connected subgraphs, called
branch sets, such that there is at least one edge between each pair of the branch sets.
Consider a minor model $\varphi$ of $K_{t}$ in $\bar{G}^*_0$.
Note that $\varphi$ cannot contain any singleton branch set of the form $\{v^*_i\}$, for the degree of $v^*_i$ in $\bar{G}^*_0$ is at most $4 < t - 1$. Furthermore, since $N_{\bar{G}^*_0}(v^*_i) = V_0 \cap V_i$, any branch set containing $v^*_i$ and at least one other vertex contains some $u \in V_0 \cap V_i$, and $N_{\bar{G}^*_0}(v^*_i)\subseteq N_{\bar{G}^*_0}(u)$, hence removing $v^*_i$ from this branch set preserves the model. Therefore, we can assume without loss of generality that all branch sets of $\varphi$ are disjoint from $\{v^*_1, \dots, v^*_{\ell'}\}$, hence $\varphi$ is a minor model of $K_{t}$ in $\bar{G}_0$. This is a contradiction, as $t \geq t'$ and $\bar{G}_0$ is $K_{t'}$-minor-free. Therefore, $\bar{G}_0^*$ is $K_t$-minor-free, hence $G_0^*$ also.
\end{proof}

Let $\rho' = \frac{2 - 108 \delta}{151} > 0$. The graph $G^*_0$ is a unit-weight graph and is $K_{t}$-minor-free.
Hence, by applying \Cref{t:r_division} to $G^*_0$ (with $\varepsilon = \rho'/2$)
we obtain an $n^{\rho'}$-division $\mathcal{R}_0$ in time $\Oh(n^{1 + \rho'})$.
We extend it to $G' - A$ by mapping every contracted vertex $v^*_i$ to $N_{G' - A}[V'_i - V'_0] = (V'_i - V'_0) \cup (V_0 \cap V_i)$. Formally, we put $V''_i \coloneqq N_{G' - A}[V'_i - V'_0]$ and 
$$
\mathcal{R} \coloneqq \left\{ (R_0 \cap V'_0) \cup \bigcup_{i \colon v^*_i \in R_0} V''_i \colon R_0 \in \mathcal{R}_0 \right\}.
$$

Now, we argue that $\mathcal{R}$ is a reasonable division of $G' - A$. Clearly, all sets $R \in \mathcal{R}$ are connected in $G' - A$. Pick any $R \in \mathcal{R}$ and let $R_0$ be its corresponding set in $\mathcal{R}_0$.
Every vertex $v^*_i$ is mapped to a set of size $\Oh(n^{\delta'})$, therefore
$$|R| \leq |R_0| \cdot \Oh(n^{\delta'}) = \Oh(n^{\rho' + \delta'}).$$

By our construction, for every $i \in [\ell']$, $R$ is either disjoint from $V'_i - V'_0$ or contains whole $N_{G' - A}[V'_i - V'_0]$. This means that no vertex belonging to any $V'_i - V'_0$ can be in $\partial R$, hence $\partial R \subseteq V'_0$.

Pick any $u \in \partial R \cap R_0$. Assume that $u \not\in \partial R_0$. Then every vertex of $N_{G_0^*}(u)$ must be in $R_0$, hence $N_{G - A'}(u) \subseteq R$, which is a contradiction. This means that $\partial R \cap R_0 \subseteq \partial R_0$.

Pick any $u \in \partial R - R_0$. Then, $u \in V_0 \cap V_i$ for some $i \in [\ell']$ such that $v_i^* \in R_0$. Moreover, $v_i^* \in \partial R_0$ and is adjacent to $u$ in $G_0^*$. The number of such $u$ is bounded by $4 |\partial R_0 \cap \{ v_1^*, \dots, v_{\ell'}^* \}|$.

Putting two cases together, we obtain:
$$
\sum_{R \in \mathcal{R}} |\partial R| = \sum_{R \in \mathcal{R}} \left(|\partial R \cap R_0| + |\partial R - R_0|\right) \leq \sum_{R_0 \in \mathcal{R}_0} \left(|\partial R_0| + 4 |\partial R_0 \cap \{ v_1^*, \dots, v_{\ell'}^* \}|\right) = \Oh(n^{1 - \frac{1}{2}\rho'}).
$$

It remains to show the following claim.

\begin{claim}
Pick any $R \in \mathcal{R}, s_R \in R$. The number of different distance profiles on $R$ relative to $s_R$ in $G' - A$ is of $\Oh(n^{48\rho' + 54\delta'})$.
\end{claim}
\begin{proof}
We look at every vertex $v \in V(G') \setminus A$ and consider three cases: $v \in R$, $v \in V'_0$, and $v \in V'_i \setminus (V'_0 \cup R)$ for some $i \in [\ell']$. By our construction, $R \cap V'_0$ is non-empty, hence w.l.o.g. we can assume that $s_R \in V'_0$ as whether two vertices have the same profile on $R$ is independent of the choice of the pivot vertex.

In the first case, there are at most $|R| = \Oh(n^{\rho' + \delta'})$ such vertices, hence they realise at most that many profiles.

In the second case, we want to observe that profile of any vertex $v \in V'_0$ on $R$ depends only on its profile on $R \cap V'_0$ (relative to $s_R$). Pick any $t \in R - V'_0$. Then $t \in V'_i - V'_0$ for some $i \in [\ell']$, $V_i \cap V_0 \subseteq R \cap V'_0$, and every path from $v$ to $t$ intersects $V_i \cap V_0$. In particular, distances from $v$ to vertices of $V_i \cap V_0$ determine its distance to $t$, which proves the observation.

Let $\tilde{G}_0$ denote the graph obtained by taking $G'[V'_0]$ and for every $i \in [\ell'], u, v \in V_0 \cap V_i$ adding a disjoint path from $u$ to $v$ of length $\dist(u, v)$. Let $P_i$ denote the vertex set of paths added between $V_0 \cap V_i$. For every $t \in V'_0$ we have $\dist_{G' - A}(v, t) = \dist_{\tilde{G}_0}(v, t)$, so it suffices to bound the number of profiles on $R \cap V'_0$ in $\tilde{G}_0$. By our assumptions, $\tilde{G}_0$ has Euler genus bounded by $g$ and all $P_i$ are of size $\Oh(n^{\delta'})$.

Let $R_0$ be the set of $\mathcal{R}_0$ corresponding to $R$. Let $\tilde{R}_0$ denote the set $(R \cap V'_0) \cup \bigcup_{i : v^*_i \in R_0} P_i$. Such set is connected in $\tilde{G}_0$. Moreover, similarly to $R$, its size is $\Oh(n^{\rho' + \delta'})$. Applying \Cref{thm:distprofiles}, we get that the number of distance profiles on $\tilde{R}_0$ in $\tilde{G}_0$ is $\Oh(n^{12(\rho' + \delta')})$, which also bounds the number of profiles on $R$ in $G' - A$ realised by $V'_0$.

For the third case, assume $v \in V'_i \setminus (V'_0 \cup R)$ for some $i\in [\ell']$. Every path from $v$ to any vertex of $R$ in $G' - A$ intersects $V_i \cap V_0$. Let $v_1, \dots v_p$ be the vertices of $V_i \cap V_0$, where $p \leq 4$. The profile of $v$ on $R$ is then determined by the following:
\begin{itemize}[nosep]
 \item[(a)] the profile of each $v_j$ on $R$,
 \item[(b)] $\dist_{G' - A}(v, v_j) - \dist_{G' - A}(v, v_1)$ for each $2 \leq j \leq p$, and
 \item[(c)] $\dist_{G' - A}(s_R, v_j) - \dist_{G' - A}(s_R, v_1)$ for each $2 \leq j \leq p$ where $s_R$ is some pivot vertex of $R$.
\end{itemize}
By the previous case, the number of distance profiles of each $v_j$ is $\Oh(n^{12(\rho' + \delta')})$. The distances between $v$ and $v_j$ are bounded by $|V'_i|$, hence each quantity described in (b) can take $\Oh(n^{\delta'})$ different possible values. Similarly, since $v_1$ and $v_j$ are connected via $V'_i$, $|\dist_{G' - A}(s_R, v_j) - \dist_{G' - A}(s_R, v_1)| \leq \Oh(n^{\delta'})$. The number of different possible profiles of such $v$ is therefore bounded by $\Oh(n^{48(\rho' + \delta') + 6\delta'}) = \Oh(n^{48\rho' + 54\delta'})$. This finishes the proof of the claim.
\end{proof}

Now we can apply \Cref{l:main_ecc} to graph $G'$ with apex set $A$, $X = V(\widetilde{G})$, and the following constants: $$\rho = \rho' + \delta',\qquad \gamma = 1 - \frac{1}{2}\rho',\quad \textrm{and}\quad \alpha = 48\rho' + 54 \delta'.$$ This allows us to calculate all $V(\widetilde{G})$-eccentricities in $G'$ in time
$$
\Oh \left( \left(
	n^{ 2 - \frac{1}{2} \rho' } +
	n^{ 1 + 48\rho' + 54 \delta' }
\right) \log^k n \right) =
\Oh \left( n^{1 + \frac{150 + 54 \delta}{151}} \log^k n \right).
$$
Since for each $v\in V(\widetilde{G})$ we have $\ecc_{\widetilde{G}}(v) = \max_{u \in V(\widetilde{G})} \dist_{\widetilde{G}}(v, u) = \max_{u \in V(\widetilde{G})} \dist_{G'}(v, u)$, this means that we have successfully computed all the eccentricities in $\widetilde{G}$; and as we argued, this is enough to compute all the eccentricities in $G$ as well.

Finally, the total running time of the algorithm is
$$
\Oh \left( n^{1 + \frac{150 + 54 \delta}{151}} \log^k n + n^{2 - \delta' + \delta} \right) =
\Oh \left( n^{1 + \frac{150 + 54 \delta}{151}} \log^k n \right).
$$\qedhere
\end{proof}

\begin{lemma}\label{l:star2}
Fix constants $k, g \in \mathbb{N}, 0 < \delta < \frac{1}{54}$. Assume we are given $n \in \mathbb{N}$, an edge-weighted graph $G$ on at most $n$ vertices with a weight function $w \colon E(G) \to \mathbb{N}$, a vertex subset $A$ and a collection of non-empty vertex subsets $V_0, V_1, \dots, V_\ell$ satisfying the same conditions as in \Cref{l:star} with the following differences:
\begin{itemize}
	\item we don't require $G[V_i - V_0]$ to be connected and $V_i - V_0$ to be adjacent to whole $V_i \cap V_0$;
	\item instead of $|V_0 \cap V_i| \leq 4$, we require $|V_0 \cap V_i| \leq k$.
\end{itemize}
Then, we can compute the eccentricity of every vertex of $G$ in time $\Oh \left( n^{1 + \frac{150 + 54 \delta}{151}} \log^{k + 5g} n \right)$.
\end{lemma}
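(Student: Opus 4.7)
The plan is to reduce Lemma~\ref{l:star2} to Lemma~\ref{l:star} via a preprocessing step that transforms the given family $V_0, V_1, \ldots, V_\ell$ into one satisfying the stronger hypotheses of Lemma~\ref{l:star}. Two discrepancies must be addressed: (a) possible disconnectedness of $G[V_i - V_0]$ together with $N_G(V_i - V_0)$ being a proper subset of $V_i \cap V_0$; and (b) the relaxed boundary bound $|V_0 \cap V_i| \leq k$ in place of $\leq 4$.

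Handling~(a) is straightforward. For each $V_i$, I partition $V_i - V_0$ into connected components $C_1, \ldots, C_p$ of $G[V_i - V_0]$, and replace $V_i$ by sub-bags $V_i^j \coloneqq C_j \cup (N_G(C_j) \cap V_0)$. Then $G[V_i^j - V_0] = G[C_j]$ is connected and $V_i^j \cap V_0 = N_G(C_j)$, as required by Lemma~\ref{l:star}. Since each new boundary clique is a subgraph of the clique originally added for $V_i$, the graph obtained from $G[V_0]$ by adding cliques on the new boundaries remains embedded on a surface of Euler genus at most $g$.

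Handling~(b) is the core of the argument. I would fix an embedding of the auxiliary graph $H$---obtained from $G[V_0]$ by adding a clique on each $V_0 \cap V_i$---on a surface $\Sigma$ of Euler genus $g$. For each $V_i$ with $k_i \coloneqq |V_0 \cap V_i| > 4$, I would triangulate the embedding of the clique on $V_0 \cap V_i$, adding $\Oh(k_i + g)$ new edges inside $V_0$; this is possible because a simple graph on $k_i$ vertices embedded on $\Sigma$ has at most $\Oh(k_i + g)$ faces by Euler's formula. Each triangular face yields a new sub-bag of $V_i$ whose intersection with $V_0$ has size~$3$. The weights of the newly added edges are set to shortest-path distances realised through $V_i - V_0$ (computed by BFS inside $V_i$), so that all pairwise distances in $G$ are preserved by an argument analogous to~\Cref{l:inserting_paths} and the construction of $\widetilde{G}$ in the proof of Lemma~\ref{l:star}. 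Crucially, all new edges fit inside already-embedded faces of $\Sigma$, so the Euler genus bound $g$ of the modified $V_0$-graph is maintained.

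The main obstacle is ensuring that this preprocessing simultaneously (i) keeps the $\Oh(n)$ edge-weight budget, (ii) preserves the Euler genus bound $g$, and (iii) generates a bag family compatible with Lemma~\ref{l:star}. The extra $\log^{5g}$ factor in the running time arises from the $\Oh(g)$ additional boundary slots per large bag introduced by the triangulation: these enter as additional dimensions in the orthogonal range data structures of Lemma~\ref{l:main_ecc} invoked inside Lemma~\ref{l:star}, turning $\log^k$ into $\log^{k+5g}$. Once the preprocessing is finished, a direct invocation of Lemma~\ref{l:star} on the modified instance yields the claimed bound $\Oh(n^{1 + (150 + 54\delta)/151} \log^{k+5g} n)$ on computing every eccentricity.
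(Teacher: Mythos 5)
Your handling of item~(a) — splitting each $V_i$ by the connected components of $G[V_i \setminus V_0]$ — agrees with the paper. But your triangulation idea for item~(b) has a genuine gap. The hypothesis of \Cref{l:star} requires \emph{each} bag $V_i$ to meet $V_0$ in at most $4$ vertices, because \Cref{l:main_ecc} (invoked inside \Cref{l:star}) needs, for every region~$R$ and every vertex $u \in R$, that the shortest paths from $u$ to $V(G) \setminus R$ be captured through a small boundary. If you triangulate the clique on $V_0 \cap V_i$ and declare each triangular face to be a sub-bag with size-$3$ boundary, there is no legitimate place to put the interior vertices of $V_i \setminus V_0$: a single vertex $v \in V_i \setminus V_0$ may be connected, inside $V_i \setminus V_0$, to all $k$ vertices of $V_0 \cap V_i$, and its shortest paths to the rest of $G$ can exit through any of them. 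Attaching $v$ to a size-$3$ sub-bag would discard those routes. The shortcut edges preserve distances \emph{between} $V_0$-vertices, but not eccentricities \emph{of} the interior vertices of $V_i$, which the lemma must also compute. Moreover, your explanation of the $\log^{5g}$ factor is off: the dimension of the orthogonal range structure of \Cref{l:max_min_query} is governed by the apex set $A$, not by the boundaries $V_0 \cap V_i$, so triangulating boundaries does not add dimensions.

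The paper's proof uses a different idea for~(b): as long as some $|V_0 \cap V_i| > 4$, delete an arbitrary $5$-element subset of $V_0 \cap V_i$ and move those vertices into the apex set. Each deleted quintuple spans a $K_5$ in the auxiliary graph $G_0$ (since $V_0 \cap V_i$ is a clique in $G_0$), and $G_0$ has Euler genus $\leq g$, hence cannot contain $g+1$ vertex-disjoint $K_5$'s (Euler genus is additive over $2$-connected blocks and $K_5$ is nonplanar). So at most $5g$ vertices are ever moved, the new apex set $A'$ has size $\leq k + 5g$, and a single application of \Cref{l:star} with this enlarged apex set yields the stated $\log^{k+5g}$ factor directly. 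You should adopt this route or produce a substantially more careful version of the triangulation idea that correctly places the interior of $V_i$ and justifies the exponent of the $\log$.
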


\begin{proof}
We will reduce our input to one which will satisfy the conditions of \Cref{l:star}. We start by addressing the adhesions $V_0 \cap V_i$ containing too many vertices.

Let $G_0$ denote the graph $G[V_0]$ with cliques placed at $V_0 \cap V_i$ for every $i \in [\ell]$.
For every $i \in [\ell]$ we repeat the following procedure: while $|V_0 \cap V_i| > 4$,
remove arbitrary $5$ vertices from $V_0 \cap V_i$. Since $|V_0 \cap V_i| \leq k$ for each $i\in [\ell]$,
this procedure can be implemented in total time $\Oh(n)$. As a result, at the end we have $|V_0 \cap V_i| \leq 4$ for all $i \in [\ell]$. Let $M$ be the set of all the removed vertices. By our assumptions, $G_0$ has Euler genus bounded by $g$, hence it cannot contain $g + 1$ pairwise disjoint copies of $K_5$
(as the Euler genus of a graph is the sum of the Euler genera of its 2-connected components~\cite{StahlB77} and $K_5$ is not planar). Each removed quintiple of vertices induces a $K_5$ in $G_0$, hence we have $|M| \leq 5g$. We set $A' = A \cup M$ and may thus assume that $V_i$ is disjoint from $A'$ for all $0 \leq i \leq \ell$.

Now, fix $i \in [\ell]$. Let $C^i_1, \dots, C^i_{r_i}$ denote the connected components of $V_i - V_0$ in $G - A'$. We define $W^i_j := N_{G - A'}[C^i_j]$ for every $j \in [r_i]$. Clearly, all $W^i_j$ induce a connected subgraph of $G$ and satisfy $N_{G - A'}(W^i_j - V_0) = W^i_j \cap V_0$. We put $V'_0 := V_0$ and enumerate
$$
\{V'_1, V'_2, \dots V'_{\ell'}\} := \{ W^i_j \colon i \in [\ell], j \in [r_i] \}.
$$
It is easy to verify that the sets $A'$ and $V'_0, V'_1, \dots, V'_{\ell'}$ satisfy the conditions of \Cref{l:star}. We apply said lemma to calculate the eccentricity of every vertex of $G$ in the desired time.
\end{proof}

The next statement is a reformulation of \Cref{thm:main-decomp}.

\begin{theorem}
Fix constants $k, g \in \mathbb{N}$. Assume we are given a graph $G$ on $n$ vertices together with its tree decomposition $(T, \beta)$ and a set of private apices $A_t \subseteq \beta(t)$ for each node $t\in V(T)$ such that the following conditions hold:
\begin{itemize}[nosep]
 \item For every node $t \in V(T)$, we have $|A_t| \leq k$.
 \item For every edge $st \in E(T)$,  we have $|\beta(v) \cap \beta(u)|\leq k$.
 \item For every node $t \in V(T)$, graph obtained by taking $G[\beta(t)] - A_t$ and turning  $(\beta(t) \cap \beta(s))\setminus A_t$ into a clique for every edge $st \in E(T)$ has Euler genus bounded by $g$.
\end{itemize}
Then, we can compute the eccentricity of every vertex of $G$ in time $\Oh \left( n^{1 + \frac{355}{356}} \log^{k + 5g} n \right)$.
\end{theorem}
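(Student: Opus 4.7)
The plan is to apply Lemma~\ref{l:star2} at the nodes of a centroid-like decomposition of $T$, with the lemma's internal parameter set to $\delta = 1/356$. A direct calculation shows $(150+54\delta)/151 \leq 355/356$, so Lemma~\ref{l:star2}'s exponent matches the theorem's bound; the $\log^{k+5g}n$ factor will also come straight from that lemma, once we ensure the apex budget stays at $\leq k$ throughout.

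After binarizing $(T,\beta)$ so that every node of $T$ has at most two children (preserving bag and adhesion sizes up to constants), we perform divide-and-conquer on $T$. At each step, we find a centroid bag $\beta(t^*)$ such that every subtree of $T-t^*$ contains at most half of the remaining vertices. We intend to apply Lemma~\ref{l:star2} at $t^*$ with $V_0 = \beta(t^*) - A_{t^*}$, apex set $A = A_{t^*}$, and branches $V_i$ corresponding to the subtrees hanging off $t^*$; the genus and apex hypotheses hold by assumption. To meet the size requirement $|V_i|\leq \Oh(n^\delta)$, we first recurse on each subtree $T_i$ exceeding that size, obtaining the eccentricity of every vertex of $T_i$ restricted to $T_i$. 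A single Dijkstra run from each vertex of the adhesion $\sigma_i = \beta(t^*) \cap \beta(s_i)$ (of size $\leq k$) precomputes distances from $\sigma_i$ to all vertices of $T_i$; then Lemma~\ref{l:inserting_paths} replaces $T_i$ with a gadget of weighted edges inside $\sigma_i$, which preserves the Euler genus of $V_0$ because $\sigma_i$ is already a clique in the genus certificate of $\beta(t^*)$. After this compression, Lemma~\ref{l:star2} is invoked to obtain the eccentricities of vertices in $\beta(t^*)$ and in the small uncompressed branches, as well as all distances from $\beta(t^*)$ to vertices of $G$.

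For a vertex $v$ in a compressed subtree $T_i$, its eccentricity in $G$ is the maximum of its recursively-computed within-$T_i$ eccentricity and the maximum distance to a vertex outside $T_i$. The latter decomposes as $\max_{u \notin T_i} \min_{s \in \sigma_i} \dist(v,s) + \dist(s,u)$, which is evaluable in $\Oh(\log^{k-1} n)$ per $v$ via Corollary~\ref{l:max_min_query} using the already-precomputed distances. Writing $R(n)$ for the total running time on a subproblem of size $n$, the recurrence $R(n) \leq \sum_i R(n_i) + \Oh(n^{1+355/356} \log^{k+5g} n)$ with $n_i \leq n/2$ solves to $R(n) = \Oh(n^{1+355/356} \log^{k+5g} n)$ as a geometric series, since the exponent $1+355/356 > 1$ makes the level-by-level contributions geometrically decreasing.

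\textbf{Main obstacle.} The subtle points are preserving the genus and apex bounds under recursion and compression, and coordinating the precomputed distances between parent and child recursive calls. The former is automatic given the tree-decomposition assumptions (adhesions are already cliques in the genus certificate of each bag, and $|A_t|\leq k$ bounds the apex budget at every level); the latter is a routine bookkeeping exercise, propagating for each subproblem only the $\Oh(k)$-sized interface with its parent.
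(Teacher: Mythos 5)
Your approach departs from the paper's in a significant way: you use a recursive centroid decomposition of $T$, whereas the paper performs a single bottom-up pass over $T$ that partitions it into subtrees $T'_1,\dots,T'_m$ (via a heavy/light edge marking with threshold $n^\delta$), and then handles each subtree independently. This difference is not cosmetic — it is exactly where your argument breaks down.

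\textbf{First gap: the within-$T_i$ eccentricity is not what you compute.} You assert that for $v$ in a subtree $T_i$, $\ecc_G(v)$ is the maximum of ``its recursively-computed within-$T_i$ eccentricity'' and $\max_{u\notin T_i}\dist_G(v,u)$. But the recursive call operates on $G[\beta(T_i)]$, and $\dist_{G[\beta(T_i)]}(v,u)\geq\dist_G(v,u)$ with strict inequality whenever the true shortest $v$--$u$ path leaves $\beta(T_i)$ through two distinct vertices of the adhesion $\sigma_i$; in fact $G[\beta(T_i)]$ can be disconnected even if $G$ is connected. So the quantity computed recursively is an overestimate of $\max_{u\in \beta(T_i)}\dist_G(v,u)$, and your formula overcounts. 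To make the recursive call compute the correct quantity you would have to first augment $G[\beta(T_i)]$ with gadget edges on $\sigma_i$ encoding shortest paths through $G-\beta(T_i)$ (the mirror of \Cref{l:inserting_paths} relative to what you describe — you only mention compressing $T_i$ for the top-level call, not compressing the outside for the recursive call).

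\textbf{Second gap: the recurrence does not hold once the fix is applied.} Those outward-facing gadget edges have weight up to $\Theta(n)$ in the global $n$ (they encode distances through the rest of $G$, and you cannot cap them because they may be needed for connectivity of the recursive instance). But \Cref{l:star,l:star2} are parameterized by a quantity $n$ that must dominate \emph{both} the vertex count and the total edge weight, so each recursive call costs $\Oh\bigl(n^{1+(150+54\delta)/151}\log^{k+5g}n\bigr)$ in the \emph{global} $n$, regardless of how small $T_i$ is. Your recurrence $R(n')\leq \sum_i R(n_i)+\Oh(n'^{1+355/356}\log^{k+5g}n')$ therefore does not reflect the actual cost, and the ``geometric series'' argument fails. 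If you instead count calls, your recursion descends to subtrees of weight $\leq n^\delta=n^{1/356}$, producing $\widetilde\Omega(n^{1-\delta})=\widetilde\Omega(n^{355/356})$ invocations of \Cref{l:star2}, each at cost $\approx n^{1+354/356}$, which is superquadratic. The paper sidesteps this entirely: it inserts all the outward gadgets in one flat pass (so their total weight is $\Oh(n)$ by disjointness of the cut-off subtrees), and then introduces a \emph{second}, much larger threshold $\Delta=355/356$ distinct from $\delta=1/356$. Subtrees of weight below $n^\Delta$ are solved naively, which by $\sum_i|\beta(T'_i)|=\Oh(n)$ costs $\Oh(n^{1+\Delta})$ in total; only the remaining $\Oh(n^{1-\Delta})=\Oh(n^{1/356})$ large subtrees trigger a call to \Cref{l:star2} with parameter $n$, giving $\Oh(n^{1-\Delta}\cdot n^{1+354/356}\log^{k+5g}n)=\Oh(n^{1+355/356}\log^{k+5g}n)$. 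Your proposal conflates $\delta$ and $\Delta$ and does not contain this two-threshold idea, which is the crux of the paper's running-time analysis.
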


\begin{proof}
We may assume that $|V(T)|\leq n$, for every tree decomposition with no two bags comparable by inclusion has this property; and adjacent comparable bags can be merged by contracting the edge between them.

For a node $t\in V(T)$, by the {\em{weight}} of $t$ we mean the size of the corresponding bag, that is, $|\beta(t)|$. For any subset of nodes $S \subseteq V(T)$, we define $\beta(S) \coloneqq \bigcup_{t \in S} \beta(t)$ By the {\em{weight}} of $S$, we mean the total weight of the elements of $S$, that is, $\sum_{t\in S} |\beta(t)|$. 

\begin{claim}\label{cl:weight-T}
The weight of $V(T)$ is of $\Oh(n)$.
\end{claim}

\begin{proof}
The sets $\beta'(t) := \beta(t) - \bigcup_{s \in N_T(t)} \beta(s)$ are pairwise disjoint. We have
$$
\sum_{t \in V(T)} |\beta(t)| =
\sum_{t \in V(T)} |\beta'(t)| + 2 \cdot \sum_{st \in E(T)} |\beta(s) \cap \beta(t)| \leq
|V(T)| + 2k|E(T)| = \Oh(n).
$$
\end{proof}

Since every bag induces a graph of bounded Euler genus, the number of edges contained in a bag is linear in its size. In particular, this implies that the total number of edges of $G$ is also bounded by $\Oh(n)$.

We set $$\delta \coloneqq \frac{1}{356}\qquad\textrm{and}\qquad \Delta \coloneqq \frac{355}{356}.$$ Root the tree $T$ in an arbitrarily chosen node; this naturally imposes an ancestor-descendant relation in $T$ (for convenience, every node is considered its own ancestor and descendant).

We start by partitioning $T$ into connected subtrees using the following procedure.
We proceed bottom-up over $T$, processing nodes in any order so that a node is processed after all its strict descendants have been processed. Along the way, we mark some nodes and split the edges of $T$ into heavy and light. Let $t \in V(T)$ be the currently processed non-root node of $T$ and let $e \in E(T)$ be the edge connecting $t$ with its parent. If the total weight of all the unmarked nodes that are descendants of $t$ is at least $n^\delta$ (recall that this includes $t$ itself as well), then we declare $e$ heavy and mark all the descendants of $t$ that were unmarked so far. Otherwise, the edge $e$ is declared light and the procedure proceeds to further nodes of $T$.

Observe that
removing all heavy edges splits $T$ into connected subtrees, say $T'_1, \cdots T'_m$. All of the subtrees, except for possibly the subtree containing the root node, are of weight at least $n^\delta$. In particular, the number of subtrees $m$, and therefore the number of heavy edges, is  bounded by $\Oh(n^{1 - \delta})$. Moreover, in every subtree $T'_i$, removing the node closest to the root splits $T'_i$ into smaller components, each of weight less than $n^\delta$.

Fix a heavy edge $e$ and let $T^e_1$ and $T^e_2$ be the two subtrees into which $T$ splits after removing~$e$. Let $X^e_i = \beta(T^e_i)$ for $i \in \{1, 2\}$. Put $A_e = X^e_1 \setminus X^e_2$, $C_e = X^e_2 \setminus X^e_1$, and $B_e = X^e_1 \cap X^e_2$. By the properties of tree decompositions, such choice of $A_e, B_e, C_e$ satisfies the conditions of \Cref{l:single_adhesion}, hence in time $\Oh(n \log^{k - 1} n)$ we can compute $\max_{v \in X^e_2} \dist_G(u,v)$ for every $u \in X^e_1$, and $\max_{u \in X^e_1} \dist_G(u,v)$ for every $v \in X^e_2$. Computing this for every heavy edge $e$ takes total time $\Oh(n^{2 - \delta} \log^{k - 1} n)$.

Fix any subtree $T'=T'_j$. Let $e_1 = t^{e_1}_1t^{e_1}_2, e_2 = t^{e_2}_1 t^{e_2}_2, \dots, e_\ell = t^{e_\ell}_1 t^{e_\ell}_2$ denote the heavy edges incident to $T'$, where $t^{e_i}_1 \in V(T')$ and $V(T') \subseteq V(T_1^{e_i})$ for every $i \in [\ell]$.
For a vertex $v \in \beta(T')$, let
$$d_0(v) = \max_{u \in \beta(T')} \dist_G(v, u)\qquad\textrm{and}\qquad d_i(v) = \max_{u \in X_2^{e_i}}\dist_G(v,u),\quad\textrm{for } i \in [\ell].$$ We have $\ecc(v) = \max \{ d_i(v)\colon i\in \{0,1,\ldots,\ell\}\}$.The values of $d_i(v)$ are already calculated for all $i\in [\ell]$, hence it remains to compute $d_0(v)$.

For every $i \in [\ell]$ and every pair of vertices $u, v \in \beta(t^{e_i}_1) \cap \beta(t^{e_i}_2)$ we find a shortest path between $u$ and $v$ with all internal vertices inside $X^{e_i}_2$ (or determine that it doesn't exist). For a fixed $u, v$ this can be done in time $\Oh(n)$. Since in total we perform this step at most $2k^2$ times per heavy edge, it takes $\Oh(n^{2 - \delta})$ time in total. Let $P_{i, u, v}$ denote such path, assuming it exists.

Let $G'$ denote the graph obtained from $G[\beta(T')]$ by taking every $i, u, v$ for which $P_{i, u, v}$ exists and adding an edge between $u$ and $v$ of weight equal to the total weight of $P_{i, u, v}$.
The weight of every edge inserted in $\beta(t^{e_i}_1) \cap \beta(t^{e_i}_2)$ is bounded by $|X^{e_i}_2|+1$. The total weight of all edges inserted is therefore at most
$$
\sum_{i \in [\ell]} |\beta(t^{e_i}_1) \cap \beta(t^{e_i}_2)|^2 \cdot (|X^{e_i}_2|+1) \leq
k^2 \sum_{i \in [\ell]} (|X^{e_i}_2|+1) = \Oh(n),
$$
where the last equality follows from the fact that all the trees $T^{e_i}_2$ are pairwise disjoint.
By \Cref{l:inserting_paths}, we have $\dist_{G'}(u, v) = \dist_G(u, v)$ for each $u, v \in \beta(T')$. Hence, computing $d_0(v)$ for every $v \in \beta(T')$ is equivalent to computing the eccentricity of every vertex in $G'$.

If the size of $\beta(T')$ is smaller than $n^\Delta$, we compute the eccentricities naively in time $\Oh(|\beta(T')|^2)$, 
noting that $G'$ has $\Oh(|\beta(T')|)$ edges (thanks to Claim~\ref{cl:weight-T} and bounded genus assumption 
of the last bullet of the theorem statement). Otherwise, we argue that we can use the algorithm in \Cref{l:star} as follows.

Let $t$ be the node of $T'$ closest to the root. Let $s_1, \dots, s_p$ be the children of $t$ in $T$ and let $T''_i$ denote the connected component of $T' - \{t\}$ containing $s_i$. Set $V_0 = \beta(t)$ and $V_i = \beta(T''_i)$ for $i \in [p]$.

It is now easy to verify that $G'$ and sets $A, \{V_i\colon 0\leq i\leq p\}$ selected this way satisfy the assumptions of \Cref{l:star2}. This allows us to use it to compute the eccentricities in $G'$ in time
$$
\Oh \left( n^{1 + \frac{150 + 54\delta}{151}} \log^{k + 5g} n \right) =
\Oh \left( n^{1 + \frac{354}{356}} \log^{k + 5g} n \right).
$$
As we argued, from these eccentricities, we may easily compute all the eccentricities in $G$.

Now, let us analyse the total running time of the whole algorithm. We invoke \Cref{l:star} $\Oh(n^{1 - \Delta})$ times, since we apply it only to subtrees $T'_i$ of size at least $n^\Delta$. The total running time of those applications is hence
$$
\Oh \left( n^{2 + \frac{354}{356} - \Delta} \log^{k + 5g} n \right) =
\Oh \left( n^{1 + \frac{355}{356}} \log^{k + 5g} n \right).
$$
We compute the eccentricities naively for subtrees smaller than $n^\Delta$, hence the total running time of this computation is
$$
\sum_{i \in [m] \colon |\beta(T'_i)| \leq n^\Delta} |\beta(T'_i)|^2 \leq
n^\Delta \cdot \sum_{i \in m} |\beta(T'_i)| = \Oh(n^{1 + \Delta})=\Oh\left(n^{1+\frac{355}{356}}\right).
$$
The rest of computation can be done in $\Oh(n^{2 - \delta} \log^k n)$. Therefore, the whole algorithm runs in time $\Oh \left( n^{1 + \frac{355}{356}} \log^{k + 5g} n \right)$.
\end{proof}

\section*{Acknowledgements}
Marcin thanks Jacob Holm, Eva Rotenberg, and Erik Jan van Leeuwen
for many discussions on subquadratic algorithms for diameter while his stay on sabbatical
in Copenhagen.

\bibliographystyle{plain}

\end{document}